\documentclass[11pt]{article}
\textwidth 6.6in
 \oddsidemargin -0.in
 \topmargin -0.5in
\textheight 9in \makeatletter

\usepackage{latexsym,graphics,graphicx,amssymb,amsmath, amstext, amsfonts, amsthm, pifont, color,  subfig,times}
\usepackage{hyperref}
\usepackage{boxedminipage, wrapfig, picins}

\newtheorem{defn}{Definition}

\newtheorem{prop}{Proposition}

\newtheorem{note}{Remark}

\newcommand{\bc}{\begin{center}}
\newcommand{\ec}{\end{center}}
\newcommand{\bfl}{\begin{flushleft}}
\newcommand{\efl}{\end{flushleft}}

\newcommand{\beqa}{\begin{eqnarray}}
\newcommand{\eeqa}{\end{eqnarray}}

\newcommand{\beqan}{\begin{eqnarray*}}
\newcommand{\eeqan}{\end{eqnarray*}}
\newcommand{\beq}{\begin{equation}}
\newcommand{\eeq}{\end{equation}}
\newcommand{\beit}{\begin{itemize}}
\newcommand{\eeit}{\end{itemize}}

\newcommand{\mc}{\mathcal}
\newcommand{\mb}{\mathbb}

\newcommand{\bit}{\begin{itemize}}
\newcommand{\eit}{\end{itemize}}
\newcommand{\ben}{\begin{enumerate}}
\newcommand{\een}{\end{enumerate}}

\newcommand{\bdefn}{\begin{defn}}
\newcommand{\edefn}{\end{defn}}
\newcommand{\bnote}{\begin{note}}
\newcommand{\enote}{\end{note}}

\newcommand{\blem}{\begin{lemma}}
\newcommand{\elem}{\end{lemma}}
\newcommand{\bthm}{\begin{theorem}}
\newcommand{\ethm}{\end{theorem}}
\newcommand{\bpf}{\begin{proof}}
\newcommand{\epf}{\end{proof}}
\newcommand{\bcor}{\begin{corollary}}
\newcommand{\ecor}{\end{corollary}}
\newcommand{\bprop}{\begin{proposition}}
\newcommand{\eprop}{\end{proposition}}


\newtheorem{proposition}{Proposition}
\newtheorem{lemma}{Lemma}
\newtheorem{theorem}{Theorem}

\newtheorem{remark}{Remark}
\newtheorem{corollary}{Corollary}
\newcommand{\dvin}{d_v^{\text{in}}}

\newcommand{\Fvin}{\mc{F}_v^{-}}

\newcommand{\In}{\delta^{-}}
\newcommand{\Out}{\delta^{+}}
\newcommand{\mcP}{\mathcal{P}}

\newcommand{\st}{(s_i,t_i)}

\newcommand{\bigO}{\mathcal{O}}

\newcommand{\cP}{\mathcal{P}}
\newcommand{\cT}{\mathcal{T}}
\newcommand{\bx}{{\bf x}}
\newcommand{\bd}{{\bf d}}

\newcommand{\trho}{\tilde{\rho}}
\newcommand{\hrho}{\hat{\rho}}
\newcommand{\dist}{\text{dist}}
\newcommand{\avgd}{\text{avgd}}
\newcommand{\vol}{\text{{\rm vol}}}
\newcommand{\sep}{\text{sep}}

\newcommand{\etal}{et al.\ }

\makeatletter

\title{Multicommodity Flows and Cuts in Polymatroidal
  Networks\footnote{An extended abstract will appear in
{\em Proc.\ of the Innovations in Theoretical Computer Science Conference ((ITCS)}, January 2012.}}

\author{Chandra Chekuri\thanks{Supported in part by NSF grant
CCF-1016684.}
\and
Sreeram Kannan\thanks{Supported in part by NSF grants CCF 1017430 and
CNS 0721652.}
\and
Adnan Raja\thanks{Supported in part by NSF grants CCF 1017430 and
CNS 0721652.}
\and
Pramod Viswanath\thanks{Supported in part by NSF grants CCF 1017430 and
CNS 0721652.}
}

\date{University of Illinois, Urbana-Champaign, IL  61801 \\
{\tt \{chekuri,kannan1,araja2,pramodv\}@illinois.edu} \\ ~\\ \today }
\begin{document}
\maketitle

\thispagestyle{empty}
\begin{abstract}
  We consider multicommodity flow and cut problems in {\em
    polymatroidal} networks where there are submodular capacity
  constraints on the edges incident to a node. Polymatroidal networks
  were introduced by Lawler and Martel \cite{LawlerMartel} and Hassin
  \cite{Hassin} in the single-commodity setting and are closely
  related to the submodular flow model of Edmonds and Giles
  \cite{EdmondsGiles}; the well-known maxflow-mincut theorem
  holds in this more general setting. Polymatroidal networks for
  the multicommodity case have not, as far as the authors are aware,
  been previously explored. Our work is primarily motivated by
  applications to information flow in wireless networks. We also
  consider the notion of undirected polymatroidal networks and observe
  that they provide a natural way to generalize flows and cuts
  in edge and node capacitated undirected networks.

  We establish poly-logarithmic flow-cut gap results in several
  scenarios that have been previously considered in the standard
  network flow models where capacities are on the edges or nodes
  \cite{LeightonRao,LLR,GVY,KleinPRT97,FeigeHL06}. Our results from a
  preliminary version have already found applications in wireless
  network information flow \cite{KannanRV11,KannanV11} and we
  anticipate more in the future.  On the technical side our key tools
  are the formulation and analysis of the dual of the flow relaxations
  via continuous extensions of submodular functions, in particular the
  Lov\'asz extension. For directed graphs we rely on a simple yet
  useful reduction from polymatroidal networks to standard networks.
  For undirected graphs we rely on the interplay between the Lov\'asz
  extension of a submodular function and line embeddings with low
  average distortion introduced by Matousek and Rabinovich \cite{MatousekR01};
  this connection is inspired by, and generalizes, the work of Feige,
  Hajiaghayi and Lee \cite{FeigeHL06} on node-capacitated
  multicommodity flows and cuts.  The applicability of embeddings to
  flow-cut gaps in polymatroidal networks is of independent
  mathematical interest.
\end{abstract}

\newpage
\setcounter{page}{1}
\section{Introduction}\label{sec:intro}
Consider a communication network represented by a directed graph $G =
(V,E)$. In the so-called edge-capacitated scenario, each edge $e$ has
an associated capacity $c(e)$ that limits the information flowing on
it. We consider a more general network model called the {\em
  polymatroidal network} introduced by Lawler and Martel
\cite{LawlerMartel} and independently by Hassin \cite{Hassin}.  This
model is closely related to the submodular flow model introduced by
Edmonds and Giles \cite{EdmondsGiles}. Both models capture as special
cases, single-commodity $s$-$t$ flows in edge-capacitated directed
networks, and polymatroid intersection, hence their
importance. Moreover the models are known to be equivalent (see
Chapter 60 in \cite{Schrijver}, in particular Section 60.3b). The
polymatroidal network flow model is more directly and intuitively
related to standard network flows and one can easily generalize it
to the multicommodity setting which is the focus in this paper.

The polymatroidal network flow model differs from the standard network
flow model in the following way. Consider a node $v$ in a directed
graph $G$ and let $\delta^-_G(v)$ be the set of edges in to $v$ and
$\delta^+_G(v)$ be the set of edges out of $v$.  In the standard model
each edge $(u,v)$ has a non-negative capacity $c(u,v)$ that is
independent of other edges.  In the polymatroidal network for each
node $v$ there are two associated submodular functions (in fact
polymatroids\footnote{A set function $f:2^N \rightarrow \mathbb{R}$
  over a finite ground set $N$ is submodular iff $f(A) + f(B) \ge f(A
  \cap B) + f(A \cup B)$ for all $A, B \subseteq N$; equivalently
  $f(A\cup \{i\}) -f(A) \ge f(B \cup \{i\}) - f(B)$ for all $A \subset
  B$ and $i \not \in A$. It is monotone if $f(A) \le f(B)$ for all $A
  \subset B$.  In this paper a polymatroid refers to a non-negative monotone
  submodular function with $f(\emptyset) = 0$.})  $\rho^-_v$ and
$\rho^+_v$ which impose joint capacity constraints on the edges in
$\delta^{-}_G(v)$ and $\delta^+_G(v)$ respectively. That is, for any
set of edges $S \subseteq \delta^-_G(v)$, the total capacity available
on the edges in $S$ is constrained to be at most $\rho^-_v(S)$,
similarly for $\delta^+_G(v)$. Note that an edge $(u,v)$ is influenced
by $\rho^+_u$ and $\rho^-_v$.  Lawler and Martel considered the
problem of finding a maximum $s$-$t$ flow in this model. The results
in \cite{LawlerMartel,Hassin} show that various important properties
that hold for $s$-$t$ flows in standard networks generalize to
polymatroid networks; these include the classical maxflow-mincut
theorem of Ford and Fulkerson (and Menger) and the existence of an
integer valued maximum flow when capacities are integral.

The original motivation for the Lawler-Martel model came from an
application to a scheduling problem \cite{Martel82}. More
recently, there have been several applications of polymatroid
network flows, (and submodular flows) and their generalizations such as
linking systems \cite{Schrijver-Thesis}, to information flow in
wireless networks
\cite{ADT,AmaFrag,YazdiSavari,GoemansIZ,RajaV11,KannanRV11}.
A node in a wireless network communicates with several nodes
over a broadcast medium and hence the channels interfere with each
other; this imposes joint capacity constraints on the channels.
Several interference scenarios of interest can be modeled by
submodular functions. Most of the work on this topic so far has
focused on the case of a single source.  In this paper we consider
{\em multicommodity} flows and cuts in polymatroidal networks where
several source-sink pairs $(s_1,t_1), (s_2,t_2),\ldots, (s_k,t_k)$
share the capacity of the network.  In the communications literature
this is referred to as the multiple unicast setting.  Our primary
motivation is applications to (wireless) network information flow; see
companion papers \cite{KannanRV11,KannanV11} that build on results of
this paper. Another motivation is to understand the extent to which
techniques and results that were developed for multicommodity
flows and cuts in standard networks generalize to polymatroidal
networks. We note that polymatroidal networks allow for a common
treatment of edge and node capacities; an advantage is that one can
define cuts with respect to edge removals while the cost is based on
nodes. As far as we are aware, multicommodity flows and cuts in
polymatroidal networks have not been studied previously.

\medskip
\noindent
{\bf Flow-cut gaps in polymatroidal networks:}
The main focus of this paper is understanding multicommodity flow-cut
gaps in polymatroidal networks. In communication networks cuts can be
used to information theoretically upper bound achievable rates while
flows allow one to develop lower bounds on achievable rates by
combining a variety of routing and coding schemes. Flow-cut gaps are
of therefore of much interest. Unlike the case of single-commodity
flows where maximum flow is equal to minimum cut, it is well-known
that even in standard edge-capacitated networks no tight min-max
result holds when the number of source-sink pairs is three or more
(two or more in case of directed graphs). See \cite{Schrijver} for
some special cases where min-max results do hold. Flow-cut gap
results have been extensively studied in theoretical computer science
starting with the seminal work of Leighton and Rao
\cite{LeightonRao}. The initial motivation was approximation
algorithms for cut and separator problems that are NP-Hard. There has
been much subsequent work with a tight bound of $O(\log k)$
established for flow-cut gaps in undirected graphs in a variety of
settings \cite{GVY,LLR,AumannR98,FeigeHL06}.  It has also been shown
that strong lower bounds exist for flow-cut gaps in directed graphs;
for instance the gap is $O(\min\{k, n^\delta\})$ between the maximum
concurrent flow and the sparsest cut \cite{SaksSZ04,ChuzhoyK07} where
$\delta$ is a fixed constant.  However poly-logarithmic upper bounds
on the gaps are known for the case of symmetric demands in directed
graphs âÃ\cite{KleinPRT97,ENRS}.  Motivated by the above
positive and negative results we focus on those cases where
poly-logarithmic flow-cut gaps have been established.  We show that
several of these gap results extend to polymatroid networks. Our
results and techniques lead to new approximation algorithms for cut
problems in polymatroidal networks which could have
future applications. However, in this paper we restrict our attention to
quantifying flow-cut gaps.

\medskip
\noindent {\bf Bidirected and undirected polymatroidal networks:} As
we mentioned already, strong lower bounds exist on flow-cut gaps for
directed networks. Positive results in the form of poly-logarithmic
upper bounds on flow-cut gaps for standard networks hold when the
demands are symmetric or when the supply graph is undirected. A
natural model for wireless networks is the {\em bidirected}
polymatroidal network. For two nodes $u$ and $v$ it is a reasonable
approximation to assume that the channel from $u$ to $v$ is similar to
that from $v$ to $u$; hence one can assume that the underlying graph
$G$ is bidirected in that if the edge $(u,v)$ is present then so is
$(v,u)$. Moreover, we assume that for any node $v$ and $S \subseteq
\delta^-(v)$, $\rho^{-}_v(S) = \rho^+_v(S')$ where $S'\subseteq
\delta^+(v)$ is the set of edges that correspond to the reverse of the
edges in $S$.  Within a factor of $2$ bidirected polymatroidal
networks can be approximated by {\em undirected} polymatroidal
networks: we have an undirected graph $G$ and for each node $v$ a
single polymatroid $\rho_v$ that constrains the capacity of the edges
$\delta_G(v)$, the set of edges incident to $v$. The main advantage of
undirected polymatroid networks is that we can use existing tools and
ideas from metric embeddings to understand flow-cut gap
results. Undirected polymatroidal networks have not been considered
previously. We observe that they allow a
natural way to capture both edge and node-capacitated flows in
undirected graphs. To capture node-capacitated flows we set $\rho_v(S)
= 2c(v)$ for all $\emptyset \neq S \subseteq \delta(v)$ where $c(v)$
is the capacity of $v$ \footnote{The factor of $2$ is needed since a
  flow path $p$ through an internal node $v$ uses two edges. On the other
  hand it is not needed for the sources and sinks. This
  technical issue is a minor inconvenience with undirected
  polymatroidal networks; we note that this also arises in treating
  node-capacitated multicommodity flows \cite{FeigeHL06}.}. We mention
an advantage of using polymatroidal networks even when considering the
special case of node-capacitated flows and cuts: one can define cuts
with respect to edges even though the cost is on the nodes. This is in
fact quite natural and simplifies certain aspects of the algorithms in
\cite{FeigeHL06}.

\subsection{Overview of  results and technical ideas}
We do a systematic study of flow-cut gaps in multicommodity
polymatroidal networks, both directed and undirected. Let $G=(V,E)$ be
a polymatroidal network on $n$ nodes with $k$ source-sink pairs
$(s_1,t_1), \ldots,(s_k,t_k)$.  We consider two flow
problems and their corresponding cut problems: (i) maximum
throughput flow and multicut (ii) maximum concurrent flow and sparsest cut.
Our high-level results are summarized below.
\begin{itemize}
\item For directed networks we show a reduction based on the dual that
  establishes a correspondence between flow-cut gaps in polymatroidal
  networks and the standard edge-capacitated networks. This allows us to
  obtain poly-logarithmic upper bounds for flow-cut gaps in directed
  polymatroidal networks with {\em symmetric} demands via results in
  \cite{KleinPRT97,ENRS} for both throughput flow and concurrent flow.
  In particular we obtain an $O(\min\{\log^3 k, \log^2 n \log \log n\})$
  gap between the maximum concurrent flow and sparsest cut. The reduction
  is applicable only to directed graphs.
\item We show that line embeddings with low average distortion
  \cite{MatousekR01,Rabinovich03} lead to upper bounds on flow-cut gaps in
  polymatroidal networks --- this connection is inspired by the work in
  \cite{FeigeHL06} for node-capacitated flows. For undirected
  polymatroidal networks this leads to an optimal $O(\log k)$ gap between
  maximum concurrent flow and sparsest cut. We also obtain an optimal $O(\log
  k)$ gap between throughput flow and multicut. These imply
  corresponding results for bidirected networks,  which have
  already found applications \cite{KannanRV11,KannanV11}. As in \cite{FeigeHL06}
  the embedding connection can be exploited to obtain improved
  approximation algorithms for certain separator problems by
  exploiting graph structure \cite{KPR} or by using stronger
  relaxations via semi-definite programming (and associated embedding
  theorems) \cite{ARV,AgarwalAC07}; we defer these improvements to a
  later version.
\end{itemize}

Most of the literature on multicommodity flow-cut gaps is based on
analyzing the dual of the linear program for the flow which can be
viewed as a fractional relaxation for the corresponding cut problem.
The gap is established by showing the existence of an integral cut
within some factor of the relaxation. For standard edge and
node-capacitated network flows the dual linear program has length
variables on the edges which induce distances on the nodes. The
situation is more involved in polymatroidal networks, in particular,
the definition of the cost of a cut is some what complex and is
discussed in more detail in Section~\ref{sec:cuts}. Our starting point
is the use of the Lov\'asz extension of a submodular function
\cite{Lovasz83} to cleanly rewrite the dual of the flow linear
programs.  This simplifies the constraint structure of the dual at the
expense of making the objective a convex function. However, we are
able to exploit properties of the Lov\'asz extension in several ways
to obtain our results.  Our techniques give two new dual-based proofs
of the maxflow-mincut theorem for single commodity polymatroid
networks that was first established by Lawler and Martel
algorithmically \cite{LawlerMartel} via an augmenting path based
approach. We believe that the applicability of embedding based methods
for polymatroidal networks is of independent mathematical interest.

For the most part we ignore algorithmic issues in this paper although
all the flow-cut gap results lead to efficient algorithms for finding
approximate cuts.

\section{Multicommodity Flows and Cuts in Polymatroidal Networks}
\label{sec:flow-cut-defns}
We let $G=(V,E)$ represent a graph whether directed or undirected. We
use $(u,v)$ for an ordered pair of nodes and $uv$ to denote an
unordered pair. In a directed graph $G$, for a given node $v$,
$\delta^-_G(v)$ and $\delta^+_G(v)$ denote the set of incoming and
outgoing edges at $v$. In undirected graphs we use $\delta_G(v)$ to
denote the set of edges incident to $v$. We omit the subscript $G$ if
it is clear from the context. In addition to the graph the input
consists of a set of $k$ source-sink pairs
$(s_1,t_1),\ldots,(s_k,t_k)$ that wish to communicate independently
and share the network capacity.

In a directed polymatroidal network each node $v \in V$ has two
associated polymatroids $\rho^-_v$ and $\rho^+_v$ with ground sets as
$\delta^-(v)$ and $\delta^+(v)$ respectively. These functions
constrain the joint capacity on the edges incident to $v$ as
follows. If $S \subseteq \delta^-(v)$ then $\rho^-_v(S)$ upper bounds
the total capacity of the edges in $S$ and similarly if $S \subseteq
\delta^+(v)$ then $\rho^+_v(S)$ upper bounds the total capacity of the
edges in $S$. We assume that the functions $\rho^-_v(\cdot),\rho^+_v(\cdot) \quad v \in V$ are provided via value oracles.
In undirected polymatroidal graphs we have a single function
$\rho_v(\cdot)$ at a node $v$ that constrains the capacity of
the edges incident to $v$.
Continuous extensions of submodular functions, namely the Lov\'asz
extension \cite{Lovasz83} and the convex closure, are important
technical tools in interpreting and analyzing the duals of the linear
programs for multicommodity flow in the polymatroid setting. We
discuss these in Section~\ref{sec:cuts}.
We first discuss the two flow problems of interest, namely maximum
throughput flow and the maximum concurrent flow.

\subsection{Flows \label{sec:flows}}
A multicommodity flow for a given collection of $k$ source-sink pairs
$(s_1,t_1),\ldots,(s_k,t_k)$ consists of $k$ separate single-commodity
flows, one for each pair $(s_i,t_i)$. The flow for the $i$'th commodity
can either be viewed as an edge-based flow $f_i:E \rightarrow \mathbb{R}_+$
or as a path-based flow $f_i: \cP_i \rightarrow \mathbb{R}_+$ where
$\cP_i$ is the set of all simple paths between $s_i$ and $t_i$ in $G$.
We prefer the path-based flow since it is more convenient for treating
directed and undirected graphs in a unified fashion, and also to
write the linear programs for flows and cuts in a more intuitive fashion.
However, it is easier to argue polynomial-time solvability of
the linear programs via edge-based flows.
Given path-based flows $f_i$, $i=1,\ldots,k$ for the $k$ source-sink pairs,
the total flow on an edge $e$ is defined as
$f(e) = \sum_{i=1}^k \sum_{p \in \cP_i: p \ni e} f_i(p)$.
The total flow for commodity $i$ is
$R_i  =  \sum_{p \in \cP_i} f_i(p)$
where $R_i$ is interpreted as the rate of commodity flow $i$.
In {\em directed} polymatroidal networks, the flow is constrained to satisfy
the following capacity constraints.
$$\sum_{e \in S} f({e}) \leq \rho_v^-(S) \quad  \forall v \, \forall S \subseteq \delta^-(v)
\quad \mbox{~and~} \quad \sum_{e \in S} f({e})  \leq  \rho_v^+(S) \quad \forall v \, \forall S \subseteq
\delta^+(v) $$

The constraints in {\em undirected} polymatroidal networks are:
$\sum_{e \in S} f({e})  \leq  \rho_v(S)  \quad \forall v \, \forall S \subseteq
\delta(v)$.

\medskip
A rate tuple $(R_1,...,R_k)$ is said to be {\em achievable} if
commodities $1,\ldots,k$ can be sent at rates $R_1,\ldots,R_k$
simultaneously between the corresponding source-sink pairs.  For a
given polymatroidal network and source-sink pairs the set of
achievable rate tuples is easily seen from the above constraints to be
a polyhedral set. We let $P(G,\cT)$ denote this rate region where $G$
is the network and $\cT$ is the set of given source-sink pairs.  In
the {\em maximum throughput multicommodity flow} problem the goal is
to maximize $\sum_{i=1}^k R_i$ over $P(G,\cT)$.  In the {\em maximum
  concurrent multicommodity flow} problem each source-sink pair has an
associated demand $D_i$ and the goal is to maximize $\lambda$ such
that the rate tuple $(\lambda D_1,...,\lambda D_k)$ is achievable,
that is the tuple belongs to $P(G,\cT)$. It is easy to see that both
these problems can be cast as linear programming problems. The
path-formulation results in an exponential (in $n$ the number of nodes
of $G$) number of variables and we also have an exponential number of
constraints due to the polymatroid constraints at each node. However,
one can use an edge-based formulation and solve the linear programs in
polynomial time via the ellipsoid method and polynomial-time
algorithms for submodular function minimization.

\medskip
\noindent
{\bf Network with symmetric demands:}
In directed polymatroidal networks we are primarily interested in {\em
  symmetric demands}: node $s_i$ intends to communicate with $t_i$ and
node $t_i$ intends to communicate with $s_i$ at the same {\em rate}.
Conceptually one can reduce this to the general setting by having two
commodities $(s_i,t_i)$ and $(t_i,s_i)$ for a pair $s_it_i$ and
adding a constraint that ensures their rates are equal. To be
technically consistent with previous work we do the following. We will
assume that we are given $k$ unordered source-sink pairs
$s_1t_1,\ldots,s_kt_k$. Now consider the $2k$ ordered pairs
$(s_1,t_1),\ldots,(s_k,t_k), (t_1,s_1),\ldots,(t_k,s_k)$.  We are
interested in achievable rate tuples of the form
$(R_1,\ldots,R_k,R'_1,\ldots,R'_k)$ where $R'_i = R_i$.  In the
maximum throughput setting we maximize $\sum_{i=1}^k (R_i+R'_i)$. Note
that even though the rates for $(s_i,t_i)$ and $(t_i,s_i)$ are the
same, the flow paths along which they route can be different.
In the maximum concurrent flow setting both $(s_i,t_i)$ and
$(t_i,s_i)$ have a common demand $D_i$ and we find the maximum
$\lambda$ such that rate tuple $(\lambda D_1,...,\lambda D_k,\lambda
D_1,...,\lambda D_k)$ is achievable for the pairs
$(s_1,t_1),\ldots,(s_k,t_k), (t_1,s_1),\ldots,(t_k,s_k)$.

\subsection{Cuts \label{sec:cuts}}
The multicommodity flow problems have natural dual cut problems
associated with them. Given a graph $G=(V,E)$ and a set of edges $F
\subseteq E$ we say that the ordered node pair $(s,t)$ is separated by
$F$ if there is no path from $s$ to $t$ in the graph $G[E\setminus
F]$. In directed graphs $F$ may separate $(s,t)$ but not $(t,s)$. In
undirected graphs we say that $F$ separates the unordered node pair
$st$ if $s$ and $t$ are in different connected components of
$G[E\setminus F]$. For certain problems, especially in the information
theoretic setting, it is of interest to consider restricted cuts
induced by vertex bi-partitions, that is cuts of the form $F =
\delta^+_G(S)$ ($F = \delta_G(S)$ in the undirected setting) for some
$S \subseteq V$. In this paper we mainly consider edge-cuts;
Section~\ref{sec:bipartition} discusses how vertex bi-partition cuts
can be obtained from edge-cuts for the sparsest cut problem in
undirected polymatroidal networks.

In the standard network model the cost of a cut
defined by a set of edges $F$ is simply $\sum_{e \in F} c(e)$ where
$c(e)$ is the cost of $e$ (capacity in the primal flow network) . In
polymatroid networks the cost of $F$ is defined in a more involved
fashion. Each edge $(u,v)$ in $F$ is assigned to either $u$ or $v$; we
say that an assignment of edges to nodes $g: F \rightarrow V$ is {\em
  valid} if it satisfies this restriction.  A valid assignment
partitions $F$ into sets $\{ g^{-1}(v) \mid v \in V\}$ where
$g^{-1}(v)$ (the preimage of $v$) is the set of edges in $F$ assigned
to $v$ by $g$.  For a given valid assignment $g$ of $F$ the cost of
the cut $\nu_g(F)$ is defined as
$$\nu_g(F) := \sum_v \left(\rho_v^-(\delta^-(v) \cap g^{-1}(v)) + \rho^+_v(\delta^+(v) \cap g^{-1}(v))\right).$$
In undirected graphs the cost for a given assignment is $\sum_v \rho_v(g^{-1}(v))$.

Given a set of edges $F$ we define its cost to be the minimum over
all possible valid assignments of $F$ to nodes, the expression for the
cost as above. We give a formal
definition below.

\begin{defn}
{\em Cost of edge cut:} Given a directed polymatroid network $G=(V,E)$
and a set of edges $F \subseteq E$, its cost denoted by $\nu(F)$ is
\begin{equation}
  \label{eq:dir-cost}
  \min_{g: F \rightarrow V, \text{~$g$ valid}} \sum_v \left(\rho_v^-(\delta^-(v) \cap g^{-1}(v)) + \rho^+_v(\delta^+(v) \cap g^{-1}(v))\right).
\end{equation}
In an undirected polymatroid network $\nu(F)$ is
\begin{equation}
  \label{eq:undir-cost}
  \min_{g: F \rightarrow V, \text{~$g$ valid}} \sum_v \rho_v(g^{-1}(v)).
\end{equation}
\end{defn}

\begin{lemma}
\label{lem:cut-subadditive}
 The cut cost function
is sub-additive,
 that is, $\nu(F \cup F') \le \nu(F) + \nu(F')$ for all $F,F' \subseteq E$.
\end{lemma}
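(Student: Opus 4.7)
The plan is to exhibit a valid assignment $h: F \cup F' \to V$ for the union whose cost is at most $\nu(F) + \nu(F')$, obtained by essentially taking the union of optimal assignments for $F$ and $F'$. This reduces the problem to a simple subadditivity property of polymatroids.

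First, I would fix valid assignments $g: F \to V$ and $g': F' \to V$ that achieve $\nu(F)$ and $\nu(F')$ respectively. Then define $h: F \cup F' \to V$ by $h(e) = g(e)$ if $e \in F$, and $h(e) = g'(e)$ otherwise (edges in $F \cap F'$ can be assigned either way; $g$ is chosen for concreteness). Since $g$ and $g'$ each send every edge to one of its endpoints, so does $h$, hence $h$ is valid. The key structural observation is that for every node $v$,
\[
h^{-1}(v) \;\subseteq\; g^{-1}(v) \cup g'^{-1}(v),
\]
and therefore $h^{-1}(v) \cap \delta^-(v) \subseteq (g^{-1}(v) \cap \delta^-(v)) \cup (g'^{-1}(v) \cap \delta^-(v))$, and analogously for $\delta^+(v)$.

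Next I would invoke two elementary properties of any polymatroid $\rho$: monotonicity, and subadditivity $\rho(A \cup B) \le \rho(A) + \rho(B)$ (the latter follows from $\rho(A\cup B) + \rho(A\cap B) \le \rho(A)+\rho(B)$ together with $\rho(A\cap B)\ge 0$, both of which hold by the paper's definition of a polymatroid). Applying these to $\rho_v^-$ and $\rho_v^+$,
\[
\rho_v^{-}(h^{-1}(v)\cap \delta^-(v)) \le \rho_v^{-}(g^{-1}(v)\cap \delta^-(v)) + \rho_v^{-}(g'^{-1}(v)\cap \delta^-(v)),
\]
and the analogous inequality for $\rho_v^+$. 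Summing over $v\in V$ gives $\nu_h(F\cup F') \le \nu_g(F) + \nu_{g'}(F') = \nu(F) + \nu(F')$. Since $\nu(F\cup F') \le \nu_h(F\cup F')$ by definition, subadditivity follows. The undirected case is identical, using the single polymatroid $\rho_v$ instead of the pair $\rho_v^{\pm}$.

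There isn't really a hard step here — the only thing to be careful about is that the natural ``union'' assignment $h$ might send an edge to $v$ even though neither $g$ nor $g'$ did (this cannot happen as defined, since $h(e)\in\{g(e), g'(e)\}$), and that the inequality being invoked on each polymatroid is truly subadditivity rather than submodularity. Once the containment $h^{-1}(v)\cap\delta^\pm(v)\subseteq (g^{-1}(v)\cap\delta^\pm(v))\cup(g'^{-1}(v)\cap\delta^\pm(v))$ is in hand, monotonicity plus subadditivity of polymatroids finishes the argument in one line per node.
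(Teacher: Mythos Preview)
Your proof is correct. The paper actually states this lemma without proof, so there is nothing to compare against; the argument you give---combine optimal assignments for $F$ and $F'$ into a valid assignment for $F\cup F'$, then use monotonicity and subadditivity of each polymatroid $\rho_v^{\pm}$---is the natural one and goes through without issue.
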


Although not obvious, $\nu$ can be evaluated in polynomial time via an
algorithm to compute an $s$-$t$ maximum flow problem in a polymatroid
network. We do not, however, rely on it in this paper.

We now define the two cuts problems of interest.
\begin{defn}
  Given a collection of source-sink pairs $(s_1,t_1),\ldots,(s_k,t_k)$
  in $G=(V,E)$ and associated demand values $D_1,\ldots,D_k$, and a
  set of edges $F \subseteq E$ the demand separated by $F$, denoted by
  $D(F)$, is $\sum_{i:(s_i,t_i) \text{~separated by~} F} D_i$. $F$ is
  a {\em multicut} if all the given source-sink pairs are separated by $F$. The
  {\em sparsity} of $F$ is defined as $\frac{\nu(F)}{D(F)}$.
\end{defn}

The above definitions extend naturally to undirected graphs.  Given
the above definitions two natural optimization problems that arise are
the following. The first is to find a multicut of minimum cost for a
given collection of source-sink pairs. The second is to find a cut of
minimum sparsity. These problems are NP-Hard even in edge-capacitated
undirected graphs and have been extensively studied from an
approximation point of view
\cite{LeightonRao,GVY,LLR,AumannR98,ARV,AgarwalAC07}.

\begin{lemma}
  \label{lem:flow-less-than-cut}
  Given a multicommodity polymatroidal network instance, the value of
  the maximum throughput flow is at most the cost of a minimum
  multicut. The value of the maximum concurrent flow is at most the
  minimum sparsity.
\end{lemma}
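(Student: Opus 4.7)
The plan is to prove the standard weak duality between flows and cuts, adapted to the polymatroidal setting where the cost of a cut is obtained via an optimal valid assignment of edges to endpoints. I will treat the directed case; the undirected case is analogous with $\rho_v$ replacing the pair $(\rho_v^-,\rho_v^+)$.

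First, for the throughput/multicut inequality, let $f_1,\ldots,f_k$ be any feasible path-based multicommodity flow with rates $R_1,\ldots,R_k$, and let $F\subseteq E$ be any multicut. The key observation is that because $F$ separates every pair $(s_i,t_i)$, every simple path $p\in\cP_i$ carrying positive flow must use at least one edge of $F$, i.e.\ $|p\cap F|\ge 1$. Hence
\[
\sum_{i=1}^k R_i \;=\; \sum_{i=1}^k\sum_{p\in\cP_i} f_i(p) \;\le\; \sum_{i=1}^k\sum_{p\in\cP_i} |p\cap F|\, f_i(p) \;=\; \sum_{e\in F} f(e).
\]
I will then bound $\sum_{e\in F}f(e)$ by $\nu(F)$. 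Fix a valid assignment $g:F\to V$ achieving the minimum in the definition of $\nu(F)$. For each node $v$, partition $g^{-1}(v)$ into $g^{-1}(v)\cap\delta^-(v)$ and $g^{-1}(v)\cap\delta^+(v)$. By the polymatroidal capacity constraints applied to these subsets,
\[
\sum_{e\in g^{-1}(v)\cap\delta^-(v)} f(e) \le \rho_v^-\bigl(g^{-1}(v)\cap\delta^-(v)\bigr), \qquad \sum_{e\in g^{-1}(v)\cap\delta^+(v)} f(e) \le \rho_v^+\bigl(g^{-1}(v)\cap\delta^+(v)\bigr).
\]
Since $g$ is valid, each edge of $F$ is counted exactly once when we sum over $v$, so $\sum_{e\in F}f(e) \le \nu_g(F) = \nu(F)$. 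Combining with the previous display gives $\sum_i R_i \le \nu(F)$, and taking the maximum over $f$ and the minimum over multicuts $F$ yields the first claim.

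For the concurrent flow/sparsest cut inequality, let $\lambda^*$ be the maximum concurrent flow, so that the tuple $(\lambda^*D_1,\ldots,\lambda^*D_k)$ is achievable by some flow, and let $F\subseteq E$ be any cut (not necessarily a multicut) with $D(F)>0$. Let $I(F)=\{i : F \text{ separates } (s_i,t_i)\}$. Repeating the path-counting argument, but restricted to commodities $i\in I(F)$ (whose flow paths must each cross $F$), I obtain
\[
\lambda^* D(F) \;=\; \sum_{i\in I(F)} \lambda^* D_i \;\le\; \sum_{i\in I(F)} \sum_{p\in\cP_i} |p\cap F|\, f_i(p) \;\le\; \sum_{e\in F} f(e) \;\le\; \nu(F),
\]
where the last inequality is the same valid-assignment bound as before. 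Dividing by $D(F)$ shows $\lambda^*\le \nu(F)/D(F)$, and minimizing over cuts $F$ gives the second claim.

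The only step with any subtlety is the bound $\sum_{e\in F} f(e) \le \nu(F)$: one must choose the valid assignment $g$ \emph{after} fixing $F$ (namely the optimal one), and then verify that the two subsets $g^{-1}(v)\cap\delta^-(v)$ and $g^{-1}(v)\cap\delta^+(v)$ fall into the domains of $\rho_v^-$ and $\rho_v^+$ respectively so that the polymatroidal constraints from Section~\ref{sec:flows} apply directly. Everything else is bookkeeping, and the argument applies to both directed and undirected networks with only cosmetic changes.
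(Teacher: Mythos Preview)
Your proof is correct. The paper itself does not give a proof of this lemma; it is stated as a basic weak-duality fact and left to the reader. Your argument is exactly the standard one the authors presumably have in mind: every flow path for a separated commodity crosses the cut, so the flow routed is bounded by $\sum_{e\in F}f(e)$, and then the polymatroidal capacity constraints applied to the sets $g^{-1}(v)\cap\delta^-(v)$ and $g^{-1}(v)\cap\delta^+(v)$ for an optimal valid assignment $g$ bound this by $\nu(F)$. The bookkeeping is clean and the concurrent-flow case is handled correctly by restricting to $I(F)$ and then relaxing the sum over commodities back to all $i$ when passing to $\sum_{e\in F}f(e)$.
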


A key question of interest is to quantify the relative gap between the
flow and cut values. These gaps are relatively well-understood in standard
networks and the main aim of this paper is to obtain results
for polymatroid networks.

\medskip
\noindent
{\bf Network with symmetric demands:}
For a directed network with symmetric demands the notion of a
``cut'' has to be defined appropriately. We say that a set of edges
$F$ separates a pair $s_it_i$ if it separates $(s_i,t_i)$ {\em
  or} $(t_i,s_i)$. With this notion of separation, the definitions of
multicut and sparsest cut extend naturally. A multicut is a set of
edges $F$ whose removal separates all the given pairs. Similarly for a
set of edges $F$ its sparsity is defined to $\nu(F)/D(F)$ where $D(F)$
is the total demand of pairs separated; note that if both $(s_i,t_i)$
and $(t_i,s_i)$ are separated by $F$ we count $D_i$ twice in $D(F)$.
This is to be consistent with the definition of flows given earlier.
Lemma~\ref{lem:flow-less-than-cut} extends to the symmetric demand
case with the definition of flows given for symmetric demands in
the previous section.

\section{Relaxations for Cuts}
\label{sec:cut-relaxations}
Lemma~\ref{lem:flow-less-than-cut} gives a way to lower bound the
value of multicuts and sparsest cuts via corresponding flow problems.
The flow problems can be cast as linear programs. The duals of these
linear programs can be directly interpreted as linear programming
relaxations for integer programming formulations for the cut problems.
Here we take the approach of writing the formulation with a convex
objective function and linear constraints; this simplifies and
clarifies the constraints and aids in the analysis. For one of the
cases we show the equivalence of the formulation with the dual of the
corresponding flow linear program. We first discuss continuous
extensions of submodular functions.

\subsection{Continuous extensions of submodular functions}
Given a submodular set function $\rho: 2^N\rightarrow \mathbb{R}$ on a
finite ground set $N$ it is useful to {\em extend} it to a function
$\rho':[0,1]^N \rightarrow \mathbb{R}$ defined over the cube in $|N|$
dimensions. That is we wish to assign a value for each $\bx \in
[0,1]^N$ such that $\rho'({\bf 1}_S) = \rho(S)$ for all $S\subseteq N$
where ${\bf 1}_S$ is the characteristic vector of the set $S$. For
minimizing submodular functions a natural goal is to find an extension
that is convex. We describe two extensions below.

\paragraph{Convex closure:} For a set function $\rho:2^N \rightarrow
\mathbb{R}$ (not necessarily submodular) its convex closure is a
function $\trho:[0,1]^N \rightarrow \mathbb{R}$ with
$\trho(\bx)$ defined as the optimum value of the following
linear program:
\begin{eqnarray*}
  \trho(\bx) & = & \min \sum_{S \subseteq N} \alpha_S \rho(S) \\
  \text{s.t.} && \\
  \sum_{S} \alpha_S & = & 1 \\
  \sum_{S: i \in S} \alpha_S & = & x_i \quad \quad \forall i \in N\\
  \alpha_S & \ge & 0 \quad \quad \forall S. \label{eq:conv_closure}
\end{eqnarray*}
The function $\trho$ is convex for any $\rho$. Moreover, when $\rho$
is submodular, for any given $\bx$, the linear program above can be
solved in polynomial time via submodular function minimization and
hence $\trho(\bx)$ can be computed in polynomial time (assuming a
value oracle for $\rho$). It is known and not difficult to show that
if $\rho$ is a polymatroid (monotone and $f(\emptyset) = 0$)
the value of the linear program does not change if we drop
the constraint that $\sum_S \alpha_S = 1$.

\paragraph{Lov\'{a}sz extension:} For a set function $\rho:2^N \rightarrow
\mathbb{R}$ (not necessarily submodular) its Lov\'{a}sz extension
\cite{Lovasz83} denoted by $\hrho:[0,1]^N \rightarrow \mathbb{R}$
is defined as follows:
$$ \hrho(\bx) = \int_0^1 \rho(\bx^\theta) d\theta$$
where $\bx^\theta = \{i \mid x_i \ge \theta \}$. This is not the standard
way the Lov\'{a}sz extension is stated but is entirely equivalent to it.
The standard definition is the
following. Given $\bx$ let $i_1,\ldots,i_n$ be a permutation of
$\{1,2,\ldots,n\}$ such that $x_{i_1} \ge x_{i_2} \ge \ldots \ge x_{i_n} \ge 0$.
For ease of
notation define $x_0=1$ and $x_{n+1} = 0$.
For $1 \le j \le n$ let $S_j = \{i_1,i_2,\ldots,i_j\}$.
Then
$$ \hrho(\bx) = (1-x_{i_1})\rho(\emptyset) + \sum_{j=1}^n (x_{i_j}-x_{i_{j+1}})\rho(S_j).$$
It is typical to assume that $\rho(\emptyset) = 0$ and omit the first
term in the right hand side of the preceding equation. Note that it is
easy to evaluate $\hrho(\bx)$ given a value oracle for $\rho$.

We state some well-known facts.

\begin{lemma}
  For a submodular set function $\rho$, $\trho(\bx) =
  \hrho(\bx)$ for any $\bx \in [0,1]^N$. Therefore the convex
  closure coincides with the Lov\'{a}sz extension and
  $\hrho(\cdot)$ is convex.
\end{lemma}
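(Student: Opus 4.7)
My plan is to prove the two equalities $\trho(\bx)=\hrho(\bx)$ by establishing both inequalities directly, and then deduce convexity of $\hrho$ from the fact that the convex closure $\trho$ is convex by construction. Convexity of $\trho$ follows from its definition: $\trho$ is the largest convex function on $[0,1]^N$ whose values at the vertices $\mathbf{1}_S$ of the hypercube are dominated by $\rho(S)$, because writing $\bx$ as a convex combination $\bx=\sum_S\alpha_S\mathbf{1}_S$ with weight $\rho(S)$ is exactly what the LP measures; alternatively one can check convexity directly by taking optimal decompositions at $\bx$ and $\by$ and averaging them to produce a feasible decomposition at $\tfrac12(\bx+\by)$.

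For the easy direction $\trho(\bx)\le\hrho(\bx)$, I would exhibit an explicit feasible $\alpha$ corresponding to the Lov\'asz chain. Given $\bx$, order coordinates so that $x_{i_1}\ge\cdots\ge x_{i_n}$ and set $S_j=\{i_1,\ldots,i_j\}$. Define $\alpha_\emptyset=1-x_{i_1}$ and $\alpha_{S_j}=x_{i_j}-x_{i_{j+1}}$ for $1\le j\le n$ (using $x_{i_{n+1}}=0$). These weights are nonnegative, sum to $1$, and a quick bookkeeping check shows $\sum_{S\ni i}\alpha_S=x_i$ for every $i$. The objective value $\sum_S\alpha_S\rho(S)$ equals $\hrho(\bx)$ by the standard formula quoted in the excerpt, so this chain decomposition is feasible for the LP and proves the inequality.

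The harder direction is $\trho(\bx)\ge\hrho(\bx)$, and this is where submodularity must be used. My approach is an uncrossing argument on the support of any feasible $\alpha$. Suppose $A,B$ are two incomparable sets (neither contains the other) with $\alpha_A,\alpha_B>0$; let $\epsilon=\min(\alpha_A,\alpha_B)$ and replace $\alpha_A,\alpha_B$ by $\alpha_A-\epsilon,\alpha_B-\epsilon$ while increasing $\alpha_{A\cap B}$ and $\alpha_{A\cup B}$ by $\epsilon$ each. The marginal constraints $\sum_{S\ni i}\alpha_S=x_i$ and $\sum_S\alpha_S=1$ are preserved because $\mathbf{1}_A+\mathbf{1}_B=\mathbf{1}_{A\cap B}+\mathbf{1}_{A\cup B}$, and the objective weakly decreases by the submodular inequality $\rho(A)+\rho(B)\ge\rho(A\cap B)+\rho(A\cup B)$. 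The main obstacle, as usual with uncrossing, is making sure the process terminates (or at least that the infimum is attained on chain-supported distributions); this is standard but needs care. I would argue it using the classical potential $\Phi(\alpha)=\sum_S\alpha_S|S|^2$, which strictly increases under each uncrossing step while the marginal constraints keep $\sum_S\alpha_S|S|$ fixed, so a compactness argument on $\alpha$ shows the optimum is attained by a chain-supported distribution. Once the support is a chain $\emptyset\subseteq T_1\subsetneq\cdots\subsetneq T_m$ with weights summing to $1$ and marginals $x_i$, the weights are forced to match the Lov\'asz coefficients (the chain must be the nested sequence $S_j$ for the sort order of $\bx$, up to splitting ties), and the objective equals $\hrho(\bx)$ exactly. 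This yields $\trho(\bx)\ge\hrho(\bx)$, completing equality and hence the convexity of $\hrho$.
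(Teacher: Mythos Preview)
The paper does not actually prove this lemma. It is introduced with the phrase ``We state some well-known facts'' and no argument is given; the result is the classical equivalence due to Lov\'asz \cite{Lovasz83}, which the paper simply cites. So there is no paper proof to compare against.

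Your argument is correct and is essentially the standard proof. A couple of small remarks. First, your claim that $\trho$ is convex is most cleanly justified by the direct averaging argument you sketch (take optimal decompositions at $\bx$ and $\by$, average them); the ``largest convex function'' description is also fine but is a consequence rather than the definition used in the paper. Second, your termination via the potential $\Phi(\alpha)=\sum_S\alpha_S|S|^2$ and compactness is the right idea, stated correctly: on the compact face of optimal LP solutions take a maximizer of $\Phi$; if its support is not a chain, a single uncrossing step keeps optimality (the objective cannot drop below the minimum) but strictly increases $\Phi$, a contradiction. Finally, in the last step your phrasing ``the chain must be the nested sequence $S_j$'' is slightly loose --- the chain is only forced to be a coarsening of the Lov\'asz chain determined by the distinct values of $\bx$ --- but any feasible chain-supported $\alpha$ has objective exactly $\hrho(\bx)$ (this is immediate from the integral form $\hrho(\bx)=\int_0^1\rho(\bx^\theta)\,d\theta$), so the conclusion stands.
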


\begin{prop}
  \label{prop:monotone-sub}
  For a monotone submodular function $\rho$ and $\bx \le \bx'$
  (coordinate-wise), $\hrho(\bx) \le \hrho(\bx')$.
\end{prop}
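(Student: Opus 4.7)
The plan is to exploit the integral representation of the Lovász extension stated immediately before the proposition, namely
\[
\hrho(\bx) = \int_0^1 \rho(\bx^\theta)\, d\theta, \qquad \bx^\theta = \{ i \in N \mid x_i \ge \theta \}.
\]
Monotonicity of $\hrho$ then reduces to a pointwise (in $\theta$) comparison of the level sets.

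First I would observe that if $\bx \le \bx'$ coordinatewise then for every threshold $\theta \in [0,1]$ and every $i \in N$, the implication $x_i \ge \theta \Rightarrow x'_i \ge \theta$ is immediate, so $\bx^\theta \subseteq \bx'^\theta$. Next, since $\rho$ is monotone, this containment yields $\rho(\bx^\theta) \le \rho(\bx'^\theta)$ for every $\theta$. Finally, integrating this pointwise inequality over $\theta \in [0,1]$ gives
\[
\hrho(\bx) = \int_0^1 \rho(\bx^\theta)\, d\theta \;\le\; \int_0^1 \rho(\bx'^\theta)\, d\theta = \hrho(\bx'),
\]
which is exactly the claim.

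There is essentially no obstacle here; the only mild subtlety is making sure the integral form of $\hrho$ is the one to use rather than the permutation-based formula. With the integral form, monotonicity of the extension is inherited in one line from monotonicity of $\rho$ together with the fact that the map $\bx \mapsto \bx^\theta$ is itself monotone with respect to the product order on $[0,1]^N$ and the subset order on $2^N$. If one insisted on working from the permutation-based formula, the argument would require comparing two possibly different permutations sorting $\bx$ and $\bx'$, which is doable but more tedious, so the integral form is clearly the right tool.
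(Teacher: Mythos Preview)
Your argument is correct. The paper does not actually prove this proposition; it lists it among ``well-known facts'' without proof, so there is nothing to compare against beyond noting that your use of the integral representation $\hrho(\bx)=\int_0^1 \rho(\bx^\theta)\,d\theta$ together with $\bx^\theta\subseteq \bx'^\theta$ is precisely the standard one-line justification.
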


The equivalence of $\trho$ and $\hrho$ also implies that an optimum
solution to the linear program defining $\trho(\bx)$ is obtained
by a solution $\bar{\alpha}$ where the support of $\bar{\alpha}$ is
a chain on $N$ (a laminar family whose tree representation is a path).
In fact we have the following. Given $\bx \in [0,1]^N$ consider
the ordering of the coordinates and the associated sets as in the
definition of the $\hrho(\bx)$. One can verify that
$\alpha_{S_j} = x_{i_{j}} - x_{i_{j-1}}$ for $1 \le j \le n$,
$\alpha_{\emptyset} = (1-x_{i_n})$, and  $\alpha_S = 0$ for all other
sets $S$ is an optimum solution to the linear program that defines
$\trho(\bx)$. We will use this fact later.

\subsection{Multicut}
\label{sec:multicut-relaxation}
We now consider the multicut problem. Recall that we wish to find a
subset $F \subseteq E$ such that $F$ separates all the given
source-sink pairs so as to minimize the cost $\nu(F)$. The only
difference between the polymatroid networks and standard networks is
in the definition of the cost. We first focus on expressing the
constraint that $F$ is a feasible set for separating the pairs.  For
each edge $e$ we have a variable $\ell(e) \in [0,1]$ in the relaxation
that represents whether $e$ is cut or not. For feasibility of the cut
we have the condition that for any path $p$ from $s_i$ to
$t_i$ (that is $p \in \cP_i$) at least one edge in $p$ is cut; in the
relaxation this corresponds to the constraint that $\sum_{e \in p}
\ell(e) \ge 1$. In other words $\dist_\ell(s_i,t_i) \ge 1$ where
$\dist_\ell(u,v)$ is the distance between $u$ and $v$ with edge
lengths given by $\ell(e)$ values.

We now consider the cost of the cut. Note that $\nu(F)$ is defined by
valid assignments of $F$ to the nodes, and submodular costs on the
nodes. In the relaxation we model this as follows. For an edge
$e=(u,v)$ we have variables $\ell(e,u)$ and $\ell(e,v)$ which decide
whether $e$ is assigned to $u$ or $v$.  We have a constraint
$\ell(e,u) + \ell(e,v) = \ell(e)$ to model the fact that if $e$ is cut
then it has to be assigned to either $u$ or $v$.  Now consider a node
$v$ and the edges in $\delta^+(v)$.  The variables $\ell(e,v), e \in
\delta^+(v)$ in the integer case give the set of edges $S \subseteq
\delta^+(v)$ that are assigned to $v$ and in that case we can use the
function $\rho^+_v(S)$ to model the cost.  However, in the fractional
setting the variables lie in the real interval $[0,1]$ and here we use
the extension approach to obtain a convex programming relaxation; we
can rewrite the convex program as an equivalent linear program via the
definition of $\trho$. Let $\bd^-_v$ be the vector consisting of the
variables $\ell(e,v)$, $e \in \delta^-(v)$ and similarly $\bd^+_v$
denote the vector of variables $\ell(e,v)$, $e \in \delta^+(v)$.  The
relaxation for the directed case is formally described in
Fig~\ref{fig:multicut-relaxations} in the box on the left.  For the
symmetric demands case the relaxation is similar but since
we need to separate either $(s_i,t_i)$ or $(t_i,s_i)$ the constraint
$\dist_\ell(s_i,t_i) \ge 1$ is replaced by the constraint
$\dist_\ell(s_i,t_i) + \dist_\ell(t_i,s_i) \ge 1$.

\begin{figure}[hb]
  \centering
  \begin{boxedminipage}[t]{0.45\linewidth}
    \begin{align}
\min \sum_v ( \hrho^-_v(\bd^-_v) & + \hrho^+_v(\bd^+_v)) && \label{eq:dir} \\
  \ell(e,u) + \ell(e,v) & =  \ell(e) \quad \quad  e = (u,v) \in E \nonumber \\
  \dist_\ell(s_i,t_i) & \ge  1 \quad \quad 1\le i\le k \nonumber \\
  \ell(e), \ell(e,u),\ell(e,v) & \ge  0 \quad \quad e = (u,v) \in E. \nonumber
\end{align}
\end{boxedminipage}
\hspace{1cm}
\begin{boxedminipage}[t]{0.45\linewidth}
\begin{align}
  \min \sum_v & \hrho_v(\bd_v) \label{eq:undir} \\
  \ell(e,u) + \ell(e,v) & =  \ell(e) \quad \quad  e = uv \in E \nonumber \\
  \dist_\ell(s_i,t_i) & \ge  1 \quad \quad 1\le i\le k \nonumber \\
  \ell(e), \ell(e,u),\ell(e,v) & \ge  0 \quad \quad e = uv \in E. \nonumber
\end{align}
\end{boxedminipage}
  \caption{Lov\'asz-extension based relaxations for multicut in directed and undirected polymatroidal networks}
  \label{fig:multicut-relaxations}
\end{figure}

For the undirected case we let $\bd_v$ denote the
vector of variables $\ell(e,v), e \in \delta(v)$ and the resulting
relaxation is shown on the right in Fig~\ref{fig:multicut-relaxations}.

One can replace $\hrho_v$ in the above convex programming relaxations
by $\trho_v$ the convex closure; further, one can use the
definition of $\trho_v$ via a linear program to convert the convex
program into an equivalent linear program. The resulting linear
program can be shown to be equivalent to the dual of the maximum
throughput flow problem. See Section~\ref{sec:formulations-equivalence}
for a formal proof.

\subsection{Sparsest cut}
\label{sec:sparsest-cut-relaxation}
Now we consider the sparsest cut problem. In the sparsest cut problem
we need to decide which pairs to disconnect and then ensure that we
pick edges whose removal separates the chosen pairs. Moreover we are
interested in the ratio of the cost of the cut to the demand
separated.  We follow the known formulation in the edge-capacitated
case with the main difference, again, being in the cost of the
cut. There is a variable $y_i$ which determines whether pair $i$ is
separated or not. We again have the edge variables $\ell(e),
\ell(e,u), \ell(e,v)$ to indicate whether $e=(u,v)$ is cut and whether
$e$'s cost is assigned to $u$ or $v$. If pair $i$ is to be separated
to the extent of $y_i$ we ensure that $\dist_\ell(s_i,t_i) \ge y_i$.
To express sparsity, which is defined as a ratio, we normalize the
demand separated to be $1$\footnote{We need to argue that
this leads to a valid relaxation for sparsest cut;
it follows from the fact that the polymatroid
functions in the objective function are normalized ($\rho_v(\emptyset) = 0$)
and in this case we have $\hrho_v(t \bx) = t \hrho_v(\bx)$ for any scalar
$t \in [0,1]$.}.  Fig~\ref{fig:sparsestcut-relaxations} has
a formal description on the left for the directed case.  For the
symmetric demands case we have essentially the same relaxation; the
constraint $\sum_i D_i \dist_\ell(s_i,t_i) = 1$ is replaced by the
constraint $\sum_i D_i (\dist_\ell(s_i,t_i) + \dist_\ell(t_i,s_i)) =
1$.

\begin{figure}[th]
  \centering
  \begin{boxedminipage}[t]{0.45\linewidth}
    \begin{align*}
    \min \sum_v \hrho^-_v(\bd^-_v) & + \hrho^+_v(\bd^+_v)) \\
  \ell(e,u) + \ell(e,v) & =  \ell(e) \quad \quad  e = (u,v) \in E \\
  \sum_{i=1}^k D_i \cdot \dist_\ell(s_i,t_i) & =  1 \\
  \ell(e), \ell(e,u),\ell(e,v) & \ge  0 \quad \quad e = (u,v) \in E.
    \end{align*}
\end{boxedminipage}
\hspace{1cm}
\begin{boxedminipage}[t]{0.45\linewidth}
\begin{align*}
  \min \sum_v & \hrho_v(\bd_v)  \\
  \ell(e,u) + \ell(e,v) & =  \ell(e) \quad \quad  e = uv \in E \\
  \sum_{i=1}^k D_i \cdot \dist_\ell(s_i,t_i)  & =  1 \\
  \ell(e), \ell(e,u),\ell(e,v) & \ge  0 \quad \quad e = (u,v) \in E.
\end{align*}
\end{boxedminipage}
  \caption{Relaxations for sparsest cut in directed and undirected polymatroidal networks}
  \label{fig:sparsestcut-relaxations}
\end{figure}

The relaxation for the undirected case is shown on the right
in Fig~\ref{fig:sparsestcut-relaxations} where
$\bd_v$ is the vector of variables $\ell(e,v), e \in \delta(v)$.

\section{Flow-Cut Gaps in Directed Polymatroidal Networks}
\label{sec:dir-gaps}
In this section we consider flow-cut gaps in directed polymatroidal
networks.  We show via a reduction that these gaps can be related to
corresponding gaps in directed edge-capacitated networks that have
been well-studied.  We note that this reduction is specific to
directed graphs and does not apply to undirected polymatroidal
networks. The embedding based approach for the undirected case that we
discuss in Section~\ref{sec:undir-gaps} is also applicable to directed
graphs.

The reduction is similar at a high-level for both gap questions of
interest and is based on the relaxations for the two cut problems that
we described in Section~\ref{sec:cut-relaxations}. We take a feasible
fractional solution for relaxation of the cut problem in question and
produce an instance of a cut problem in an edge-capacitated network
and a feasible fractional solution to the corresponding cut problem.
We also provide a correspondence between feasible integer solutions to
the edge-capacitated network instance and the original problem such
that the cost of the solution is preserved. These correspondences
allow us to translate known gap results for the edge-capacitated
networks to polymatroidal networks.

\subsection{Details of the reduction}
Let $G=(V,E)$ be a directed graph and let $\ell:E \rightarrow
\mathbb{R}_+$ be a length function on the edges. We let
$\dist_\ell(u,v)$ be the shortest path distance from $u$ to $v$ in $G$
with edge lengths $\ell$.  Moreover, for each edge $(u,v)$ let
$\ell(e,u)$ and $\ell(e,v)$ be two non-negative numbers such that
$\ell(e) = \ell(e,u)+\ell(e,v)$.  For a node $v$ let $\bd^+_v$ be the
vector of $\ell(e,v)$ values for all edges $e \in \delta^+(v)$ and
similarly $\bd^{-}_v$ is the vector of $\ell(e,v)$ values for edges in
$\delta^-(v)$. In the polymatroidal setting the cost induced by the
edge length variables is given by $\sum_{v\in V}(\hrho^-(\bd^-_v) +
\hrho^+(\bd^+_v))$. Note that for multicut we have that
$\dist_\ell(s_i,t_i) \ge 1$ for each demand pair $(s_i,t_i)$ while in
sparsest cut we are interested in the ratio of the cost to $\sum_{i}
D_i \cdot \dist_\ell(s_i,t_i)$.  We now describe the construction of a
graph $H=(V_H, E_H)$ where $V_H = V \uplus V'$ (that is the nodes of
$G$ are also in $H$) and an edge length function $\ell':E_H\rightarrow
\mathbb{R}_+$ such that $\dist_\ell(u,v) = \dist_{\ell'}(u,v)$ for all
$u,v \in V$; that is the distances between nodes in $V$ are the same
in $G$ and $H$. We also create an edge-cost (or capacity in the primal
sense) function $c: E_H \rightarrow \mathbb{R}_+$. The construction
will also establish the correspondence of cuts in $G$ and $H$ and their costs.

\begin{figure}[t]
\begin{center}
\includegraphics[width=6in]{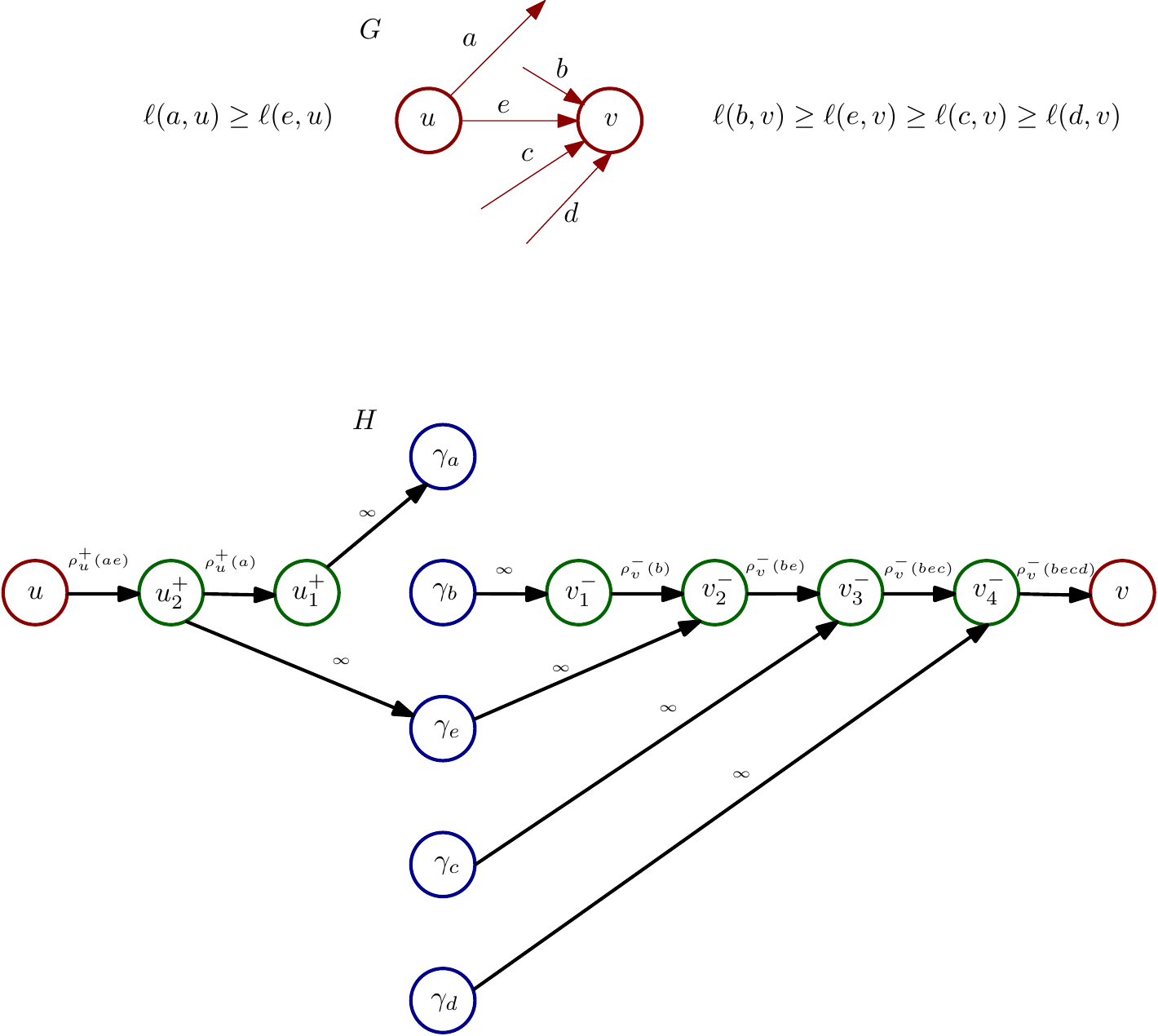}
\end{center}
\caption{Illustration of the reduction. Only $\delta_G^+(u)$ and
  $\delta_G^-(v)$ shown. The costs on edges in $H$ are shown but not
  their lengths.  The lengths of the infinite cost edges is $0$ and
  $\ell'(u^+_2,u^+_1) = \ell(a,u) - \ell(e,u)$ and $\ell'(v_3^-,v_4^-)
  = \ell(c,v)-\ell(d,v)$.} \label{fig:reduction}
\end{figure}

The graph $H=(V\uplus V',E_H)$ is constructed as follows. To aid the
reader we first describe the idea of the construction at a
high-level. Consider a node $v \in V$ and the in-coming edges
$\delta^-(v)$ and out-going edges $\delta^+(v)$.  In $H$ we have nodes
of $V$ and build an in-tree $T^-_v$ and an out-tree $T^+_v$ that are
rooted at $v$. The leaves of $T^-_v$ are the edges in $\delta^-(v)$
the leaves of $T^+_v$ are the edges in $\delta^+(v)$. Note that an
edge $(u,v)$ will thus participate in $T^+_u$ and $T^{-}_v$. Now for
the formal details. The nodes of $H$, denoted by $V_H$, consist of the
nodes $V$ of $G$ and additional nodes $V'$. $V'$ has two types of
nodes. First, for each edge $e \in E$ there is a node
$\gamma_e$. Second, for each node $v \in V$ we create two sets of
nodes $N^-(v)$ and $N^+(v)$ where $|N^-(v)| = n^-_v = |\delta_G^-(v)|$
and $|N^+(v)| = n^+_v = |\delta^+_G(v)|$; thus one node for each edge
in $\delta^-(v) \cup \delta^-(v)$; these will be the internal nodes of
the trees $T^{-}_v$ and $T^+_v$ respectively. For notational
convenience we refer to the $j$'th node in $N^-(v)$ as $v_j^-$ and
similarly $v^+_j$ for the $j$'th node in $N^+(v)$.

Now we describe the edge set $E_H$ of the graph $H$, the edge length
function $\ell':E_H\rightarrow \mathbb{R}_+$, and the cost function
$c:E_H\rightarrow \mathbb{R}_+$. The edge set is essentially
prescribed by specifying the trees $T^{-}_v$ and $T^+_{v}$ for each $v
\in V$. Consider the vector $\bd^-(v)$ of values $\ell(e,v)$ for $e
\in \delta_G^-(v)$. Recall the definition of the Lov\'{a}sz extension
$\hrho^-(\bd_v^-)$. We order the edges in $\delta^-(v)$ as
$e_1,e_2,\ldots,e_{n_v^-}$ where $\ell(e_j,v) \ge \ell(e_{j+1},v)$ for
$1\le j < n_v^-$ and then $\hrho^-(\bd_v^-) = \sum_j (\ell(e_j,v)-
\ell(e_{j+1},v)) \rho_v^-(S_j)$ where $S_j = \{e_1,\ldots,e_j\}$.  We
associate the node $v^-_j$ with the set $S_j$. The edge set of $T^-_v$
is defined as follows. For ease of notation we let $v^-_{n_v^-+1}$
represent the node $v$.  We create a directed path $v_1^- \rightarrow
v_2^- \rightarrow \ldots \rightarrow v_{n_v^-}^- \rightarrow
v_{n_v^-+1}^-= v$ with edge lengths $\ell'(v_1^-, v_2^-) = \ell(e_1,v)
- \ell(e_2,v), \ell'(v_2^-,v_3^-) = \ell(e_2,v)-\ell(e_3,v), \ldots,
\ell'(v_{n_v^-}^-,v) = \ell(e_{n_v^-},v)-0$.  The
costs of these edges are defined as follows: $c(v_j^-,v_{j+1}^-) =
\rho_v^-(S_j)$ for $1 \le j \le n_v^-$. For each $j$ we add the
edge $(\gamma_{e_j},v_j^-)$ with length $0$ and cost $\infty$ (for
computational purpose a sufficiently large number $M$ would do); this
connects the node $\gamma_{e_j}$ corresponding to the edge $e_j$ to
$v_j^-$ that corresponds to $S_j$.  See Fig~\ref{fig:reduction}.

The construction of $T^+_v$ is quite similar except that the edge
directions are reversed; assuming that the edges in $\delta^+(v)$ are
ordered such that $\ell(e_1,v) \ge \ell(e_2,v) \ge \ldots \ge
\ell(e_{n^+_v},v)$, we create a path $v \rightarrow v^+_{n^+_v}
\rightarrow \ldots v^+_2 \rightarrow v^+_1$ with edge lengths
$\ell(e_{n^+_v},v)-0, \ldots, \ell(e_{j},v) - \ell(e_{j+1},v), \ldots,
\ell(e_1,v)-\ell(e_2,v)$. The costs for the edges in this path are set
to $\rho^+_v(S_{n_v^+}),\ldots,\rho_v^+(S_1)$ where $S_j =
\{e_1,\ldots,e_j\}$. For each $j$ we add an edge $(v^+_j,\gamma_{e_j})$ with
length $0$ and cost $\infty$. This finishes the description of $H$.
We now describe various properties of the graph $H$. Several of these
properties are staright forward from the description of the construction
and we omit proofs of the easy claims.

The proposition below asserts the cost of the fractional solution
in the edge-capacitated network $H$ is the same as the cost of
the fractional solution in the polymatroidal network $G$.
\begin{prop}
  \label{prop:total-vol}
  $\sum_{e \in E_H} c(e) \cdot \ell'(e) = \sum_{v\in V}(\hrho^-(\bd^-_v) +
\hrho^+(\bd^+_v))$.
\end{prop}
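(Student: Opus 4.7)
The plan is to decompose $\sum_{e \in E_H} c(e)\,\ell'(e)$ along the structural partition of $E_H$ prescribed by the construction: the path edges of the in-tree $T_v^-$ for each $v \in V$, the path edges of the out-tree $T_v^+$ for each $v \in V$, and the ``attachment'' edges $(\gamma_{e_j},v_j^-)$ and $(v_j^+,\gamma_{e_j})$ linking each edge-node $\gamma_e$ to its appropriate internal tree node. The attachment edges carry length $\ell'=0$, so under the standard convention $0\cdot\infty=0$ (or equivalently by using a sufficiently large finite $M$ in place of $\infty$) they contribute nothing to the sum. Hence the identity reduces to showing that the contribution of each tree matches the corresponding Lov\'asz extension value.

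Fix $v \in V$ and consider $T_v^-$. Order the edges of $\delta^-(v)$ as $e_1,\ldots,e_{n_v^-}$ with $\ell(e_1,v) \ge \ell(e_2,v) \ge \cdots \ge \ell(e_{n_v^-},v)$, set $S_j = \{e_1,\ldots,e_j\}$, and extend with $\ell(e_{n_v^-+1},v)=0$. By construction the path edges are $(v_j^-,v_{j+1}^-)$ for $1\le j \le n_v^-$ with lengths $\ell'(v_j^-,v_{j+1}^-) = \ell(e_j,v) - \ell(e_{j+1},v)$ and costs $c(v_j^-,v_{j+1}^-) = \rho_v^-(S_j)$. Therefore
$$\sum_{e \in T_v^-} c(e)\,\ell'(e) \;=\; \sum_{j=1}^{n_v^-} \rho_v^-(S_j)\bigl(\ell(e_j,v) - \ell(e_{j+1},v)\bigr),$$
which is precisely the explicit telescoping formula for $\hrho^-_v(\bd^-_v)$ recalled in the preceding subsection (using $\rho_v^-(\emptyset)=0$). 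An entirely symmetric computation for the out-tree $T_v^+$, where the edges are ordered and the path is built in the reverse direction, gives $\sum_{e \in T_v^+} c(e)\,\ell'(e) = \hrho^+_v(\bd^+_v)$.

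Summing these two identities over all $v \in V$ and adding the zero contribution from the attachment edges yields
$$\sum_{e \in E_H} c(e)\,\ell'(e) \;=\; \sum_{v \in V}\bigl(\hrho^-_v(\bd^-_v) + \hrho^+_v(\bd^+_v)\bigr),$$
as required. There is essentially no obstacle here: the trees $T_v^-$ and $T_v^+$ were engineered so that reading off cost times length along the path reproduces term by term the Lov\'asz extension expansion associated with the ordering of coordinates of $\bd^-_v$ (respectively $\bd^+_v$). The only minor point to handle cleanly is the treatment of the infinite-cost, zero-length attachment edges, which is harmless under either of the two interpretations mentioned above.
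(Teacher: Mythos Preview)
Your proof is correct and follows essentially the same approach that the paper intends: the paper notes the proposition is straightforward from the construction, and the (commented-out) proof sketch in the source partitions $E_H$ into the trees $T_v^-,T_v^+$ and computes each tree's contribution as the Lov\'asz extension, exactly as you do. Your explicit handling of the zero-length infinite-cost attachment edges is a minor clarification the paper glosses over.
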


\begin{prop}  \label{prop:dir-reduc-distance}
  For any edge $e \in \delta_G^-(v)$ the length of the unique path in
  $T_v^-$ from the node $\gamma_e$ to $v$ is equal to $\ell(e,v)$.
  Similarly for $e \in \delta_G^+(v)$, the length of the unique path
  in $T^+_v$ from the node $v$ to the node $\gamma_e$ is equal to $\ell(e,v)$.
\end{prop}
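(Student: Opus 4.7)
The plan is a direct verification by telescoping. I will only describe the argument for $T_v^-$, since the case of $T_v^+$ is entirely analogous (just with edge orientations reversed).

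First, I would identify the unique directed path from $\gamma_{e_j}$ to $v$ in $T_v^-$. By the construction, the only out-edge from $\gamma_{e_j}$ is the length-$0$ edge $(\gamma_{e_j}, v_j^-)$, and from $v_j^-$ the unique path to $v = v_{n_v^- + 1}^-$ goes along the backbone $v_j^- \to v_{j+1}^- \to \cdots \to v_{n_v^-}^- \to v$. So the path in question is
\[
\gamma_{e_j} \;\to\; v_j^- \;\to\; v_{j+1}^- \;\to\; \cdots \;\to\; v_{n_v^-}^- \;\to\; v.
\]

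Next, I would compute its length by summing the $\ell'$-values of its edges. The first edge contributes $0$, and the backbone edges contribute, by definition,
\[
\bigl(\ell(e_j,v) - \ell(e_{j+1},v)\bigr) + \bigl(\ell(e_{j+1},v) - \ell(e_{j+2},v)\bigr) + \cdots + \bigl(\ell(e_{n_v^-},v) - 0\bigr).
\]
This is a telescoping sum that collapses to $\ell(e_j, v)$, which is the desired conclusion. The same telescoping argument, applied to the reversed path $v \to v_{n_v^+}^+ \to \cdots \to v_j^+ \to \gamma_{e_j}$ in $T_v^+$, handles the outgoing case.

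There is no real obstacle here; the content of the claim is the bookkeeping in the definition of $\ell'$, which was rigged precisely so that the telescoping works. The only thing worth double-checking is the boundary behavior at $j = n_v^-$ (respectively $j=1$ in $T_v^+$), where the path consists of just the length-$0$ edge $(\gamma_{e_{n_v^-}}, v_{n_v^-}^-)$ followed by $(v_{n_v^-}^-, v)$ of length $\ell(e_{n_v^-}, v) - 0$, still giving $\ell(e_{n_v^-}, v)$ as required.
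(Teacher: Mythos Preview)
Your proposal is correct: the telescoping computation along the backbone of $T_v^-$ (and symmetrically $T_v^+$) is exactly the verification the construction is designed to make trivial. The paper omits the proof of this proposition as one of the ``straight forward'' claims, so your write-up is in fact more detailed than what the paper provides, but it follows the same idea.
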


We now establish a correspondence between paths in $G$ and $H$ that
connect nodes in $V$. Let $e=(u,v)$ be an edge in $G$. We obtain a
canonical path $q(u,v)$ from $u$ to $v$ in $H$ as follows: concatenate
the unique path from $u$ to $\gamma_e$ in $T^+_v$ with the unique path
from $\gamma_e$ to $v$ in $T_v^-$. For any two nodes $s,t \in V$ let
$\cP_G(s,t)$ be the set of (simple) $s$-$t$ paths on $G$ and similarly
$\cP_H(s,t)$ be the paths in $H$. We create a map $g: \cP_G(s,t)
\rightarrow \cP_H(s,t)$ as follows. Consider a path $p \in
\cP_G(s,t)$; we obtain a path $p' \in \cP_H(s,t)$ corresponding to $p$
as follows. We replace each edge $(u,v) \in p$ by the canonical path
$q(u,v)$.

\begin{lemma}
  \label{lem:dir-reduc-distance}
  The map $g$ is a bijection. Moreover, for any two nodes $u,v \in V$,
  $\dist_{\ell'}(u,v) = \dist_\ell(u,v)$.
\end{lemma}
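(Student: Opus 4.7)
The plan is to first establish the bijective part of the claim by analyzing the forced structure of simple paths in $H$ between two $V$-nodes, and then deduce the distance equality by invoking Proposition~\ref{prop:dir-reduc-distance} on the length of each canonical piece $q(u,v)$.

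The key structural observation is that the trees $T^+_v$ and $T^-_v$ are directed in a way that leaves no choice once a path has committed to traversing one. Concretely, any simple path $\pi$ in $H$ starting at a node $u \in V$ must leave $u$ via the unique outgoing edge $u \to u^+_{n^+_u}$ on the spine of $T^+_u$. On this spine, each internal node $u^+_j$ has exactly two outgoing options: continue forward to $u^+_{j-1}$, or detour to $\gamma_{e_j}$, where $e_j$ is the $j$-th edge in the Lov\'{a}sz ordering of $\delta^+_G(u)$. Once the detour is taken at some $u^+_j$, the path is forced to visit $\gamma_{e_j}$ and then $v^-_{j'}$, where $e_j=(u,v)$ and $j'$ is the index of $e_j$ in the ordering of $\delta^-_G(v)$; from $v^-_{j'}$ the only outgoing edges go forward along the spine of $T^-_v$, with no way to exit through a $\gamma$-node (those edges are directed into the spine). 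Hence the path is forced to march through $v^-_{j'+1},\ldots,v^-_{n^-_v}$ and arrive at $v$, whereupon the argument repeats with $v$ in place of $u$.

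This shows that every simple $s$-$t$ path in $H$ decomposes uniquely as a concatenation of canonical pieces $q(e_1)\cdot q(e_2)\cdots q(e_m)$, where $e_1,\ldots,e_m$ is a simple $s$-$t$ path in $G$. Conversely, for any simple $s$-$t$ path $p$ in $G$, its image $g(p)$ visits each $V$-node of $p$ exactly once and uses disjoint internal nodes of the $T^\pm_v$ trees along the way, so $g(p)$ is simple in $H$. Thus $g$ is a bijection between $\cP_G(s,t)$ and $\cP_H(s,t)$.

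For the distance equality, Proposition~\ref{prop:dir-reduc-distance} gives that the unique path from $u$ to $\gamma_e$ in $T^+_u$ has length $\ell(e,u)$ and the unique path from $\gamma_e$ to $v$ in $T^-_v$ has length $\ell(e,v)$, so each canonical piece $q(e)$ has length $\ell(e,u)+\ell(e,v)=\ell(e)$. Summing over $e \in p$ yields that the $\ell'$-length of $g(p)$ equals the $\ell$-length of $p$. Since $\ell,\ell'\ge 0$, shortest paths in both graphs are attained on simple paths, and the length-preserving bijection gives $\dist_{\ell'}(u,v)=\dist_\ell(u,v)$ for all $u,v\in V$. The main obstacle is the forced-structure argument: one must carefully use both the directedness of the tree spines and the simplicity of $\pi$ to rule out paths that might try to exit a tree through an unintended $\gamma$-node or re-enter a visited $V$-node.
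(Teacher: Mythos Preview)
The paper does not actually give a proof of this lemma; it is grouped with the other ``straightforward'' claims whose proofs are omitted. Your argument is correct and supplies exactly the details the paper elides: the forced-structure analysis of simple paths in $H$ (unique out-edge at each $V$-node, two choices on each out-spine node, one out-edge from each $\gamma_e$ and from each in-spine node) shows surjectivity of $g$, disjointness of the $N^\pm(v)$ and $\gamma_e$ nodes along a simple $G$-path gives injectivity and well-definedness, and the length identity follows immediately from Proposition~\ref{prop:dir-reduc-distance}.
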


Now we establish a correspondence between cuts in $G$ and $H$.  For a
given set of edges $F \subseteq E$ let $\sep_G(F)$ be set of node pairs in
$V \times V$ separated by $F$ in the graph $G$. Similarly for a set of
edges $F' \subseteq E_H$ let $\sep_H(F')$ be the set of node pairs in $V
\times V$ separated by $F'$ in the graph $H$. We say that a set of
edges $F$ is minimal with respect to separating node pairs if there is
no proper subset of $F$ that separates the same node pairs as $F$.

\begin{prop}
  \label{prop:minimal-sep}
  Let $F' \subseteq E_H$ be minimal with respect to separating node
  pairs in $V \times V$ and of finite cost. Then for any $v \in V$,
  $F'$ contains at most one edge from $T_v^-$ and at most one edge
  from $T_v^+$.
\end{prop}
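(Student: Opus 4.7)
The plan is to exploit the linear structure of the ``spines'' of $T_v^-$ and $T_v^+$ and the orientation of their edges to show that if two edges of $F'$ lie on the same spine then one of them is redundant for separation, contradicting minimality. Call an edge of $T_v^-$ a \emph{spine} edge if it lies on the directed path $v_1^- \to v_2^- \to \cdots \to v_{n_v^-}^- \to v$, and call $(\gamma_{e_j},v_j^-)$ a \emph{leaf} edge; use analogous terminology for $T_v^+$. Since every leaf edge has infinite cost while $F'$ has finite cost, $F' \cap T_v^-$ and $F' \cap T_v^+$ consist entirely of spine edges.

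The heart of the argument is the structural claim that every simple directed $u$-$w$ path in $H$ with $u,w \in V$ that uses a spine edge of $T_v^-$ must (a) enter $T_v^-$ at some $\gamma_{e_k}$, (b) traverse the full suffix $\gamma_{e_k}\to v_k^-\to v_{k+1}^-\to\cdots\to v$ of the spine, and (c) leave $T_v^-$ at its root $v$. This is established from three observations about the orientation inside $T_v^-$: (i) each $\gamma_{e_j}$ has a single outgoing edge in $H$, namely $(\gamma_{e_j},v_j^-)$; (ii) each spine node $v_j^-$ has a unique outgoing edge $(v_j^-,v_{j+1}^-)$ (with $v_{n_v^-+1}^-:=v$); and (iii) the only incoming edges to a spine node $v_j^-$ come from $v_{j-1}^-$ or $\gamma_{e_j}$, while the only incoming edge to $v$ from inside $T_v^-$ is $(v_{n_v^-}^-,v)$. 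Hence a path touching the spine cannot branch off before $v$, and it cannot re-enter the spine from outside except via a $\gamma$-node. An entirely symmetric statement holds for $T_v^+$, where a valid path enters at $v$, descends along a prefix $v\to v_{n_v^+}^+\to\cdots\to v_k^+$, and exits via $\gamma_{e_k}$.

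With the structural claim in hand, the proposition follows by contradiction. Suppose $F'$ contains two spine edges $(v_i^-,v_{i+1}^-)$ and $(v_j^-,v_{j+1}^-)$ of $T_v^-$ with $i<j$. By the claim, every simple $V$-to-$V$ path in $H$ that uses the earlier edge $(v_i^-,v_{i+1}^-)$ must continue along the spine up to $v$, and in particular traverses the later edge $(v_j^-,v_{j+1}^-)$. Therefore $F' \setminus \{(v_i^-,v_{i+1}^-)\}$ still separates exactly the same pairs in $V\times V$ as $F'$, contradicting minimality. The analogous argument for $T_v^+$ shows that the spine edge of $T_v^+$ closest to $v$ shields any other $T_v^+$-spine edge in $F'$, so at most one spine edge of $T_v^+$ can be in $F'$. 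The main (and essentially only) obstacle is verifying the structural claim; once it is in place, the redundancy argument is immediate from the ``forced march'' along the spine.
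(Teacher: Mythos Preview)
Your proposal is correct and follows essentially the same route as the paper: rule out leaf edges by the finite-cost hypothesis, then use the linear (path) structure of the spine to argue that the earlier of two spine edges in $F'$ is redundant, contradicting minimality. Your structural claim about $V$-to-$V$ paths traversing a full suffix of the spine is just a careful unpacking of the paper's terse remark that ``the only way to reach $v$ is through $T_v^-$''; the logic and the conclusion are the same.
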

\begin{proof}
  Consider a node $v$ and edge-sets $F' \cap T_v^-$ and $F' \cap
  T_v^+$.  For an edge $e \in E$ there is a node $\gamma_e \in V_H$
  and there is exactly one edge coming into $\gamma_e$ and exactly one
  edge going out of $\gamma_e$ and both are of infinite cost.
  Therefore, if $F'$ is of finite cost, $F' \cap T_v^-$ consists of
  some edges in the path $v_1^- \rightarrow v_2^- \ldots \rightarrow
  v_{n_v^-}^- \rightarrow v$ contained in $T_v^-$. Since the only way
  to reach $v$ is through $T_v^-$ it follows that if $F'$ contains an
  edge $(v_j^-,v^-_{j+1})$ then it is redundant to remove an edge
  $(v_{i}^-,v_{i+1}^-)$ for $i < j$. Thus minimality of $F'$ implies
  $F'$ contains exactly one edge from $T_v^-$. The reasoning for
  $T_v^+$ is similar.
\end{proof}

\begin{lemma}
  \label{lem:dir-reduc-cost}
  Let $F' \subseteq E_H$ be minimal with respect to separating node
  pairs in $V \times V$ and of finite cost. There exists a set of
  edges $F \subseteq E$ such that $\sep_G(F) \supseteq
  \sep_H(F')$ and $\nu(F) \le c(F')$.
\end{lemma}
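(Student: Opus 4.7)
The plan is to use Proposition~\ref{prop:minimal-sep} to index the relevant edge of $F'$ in each tree and read off, from that index, the corresponding subset of $G$-edges incident to the node. Concretely, for each $v \in V$ let $j^-(v)$ denote the index such that the (unique, if any) edge of $F' \cap T_v^-$ is $(v^-_{j^-(v)}, v^-_{j^-(v)+1})$, and set $F_v^- = S_{j^-(v)}^- = \{e_1,\ldots,e_{j^-(v)}\} \subseteq \delta_G^-(v)$ (with $F_v^- = \emptyset$ if $F' \cap T_v^- = \emptyset$); define $F_v^+ \subseteq \delta_G^+(v)$ analogously from $T_v^+$. Take $F := \bigcup_{v\in V}(F_v^- \cup F_v^+)$.

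First I would verify that $\sep_G(F) \supseteq \sep_H(F')$. Pick $(s,t) \in \sep_H(F')$ and any simple $s$-$t$ path $p$ in $G$; I want to show $p$ meets $F$. By Lemma~\ref{lem:dir-reduc-distance}, $g(p) \in \cP_H(s,t)$, so $g(p)$ must contain some edge $e' \in F'$. By Proposition~\ref{prop:minimal-sep}, $e'$ lies in some $T_v^-$ or $T_v^+$; consider the case $e' = (v^-_{j^-(v)}, v^-_{j^-(v)+1}) \in T_v^-$ (the other case is symmetric). Looking at the construction of $T_v^-$, the only way for an image path $g(p)$ to traverse this edge is for $p$ to contain an edge $e_i = (u,v) \in \delta_G^-(v)$ with $i \le j^-(v)$, since only the canonical paths $q(u,v)$ for such $e_i$ enter $T_v^-$ at $\gamma_{e_i}$ and then reach the root $v$ through that particular link. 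But then $e_i \in S_{j^-(v)}^- = F_v^- \subseteq F$, as needed.

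Next I would bound $\nu(F)$. Define a valid assignment $g:F\to V$ by sending $e\in F$ to the head $v$ if $e\in F_v^-$, and otherwise to the tail $u$ (so necessarily $e\in F_u^+$). Then $\delta_G^-(v) \cap g^{-1}(v) \subseteq F_v^-$ and $\delta_G^+(v) \cap g^{-1}(v) \subseteq F_v^+$. Monotonicity of the polymatroids $\rho_v^\pm$ gives
\begin{equation*}
\nu(F) \;\le\; \sum_{v\in V} \bigl(\rho_v^-(\delta_G^-(v)\cap g^{-1}(v)) + \rho_v^+(\delta_G^+(v)\cap g^{-1}(v))\bigr) \;\le\; \sum_{v\in V}\bigl(\rho_v^-(F_v^-) + \rho_v^+(F_v^+)\bigr).
\end{equation*}
Finally, by construction $c(v^-_{j^-(v)}, v^-_{j^-(v)+1}) = \rho_v^-(S_{j^-(v)}^-) = \rho_v^-(F_v^-)$ (and $\rho_v^-(\emptyset) = 0$ if no edge was selected), and similarly for $T_v^+$. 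Since the edge set $F'$ is (by Proposition~\ref{prop:minimal-sep}) a disjoint union of at most one edge per tree $T_v^\pm$ over all $v$, summing yields $\sum_v (\rho_v^-(F_v^-) + \rho_v^+(F_v^+)) = c(F')$, which completes the argument.

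The main obstacle will be the first step: carefully tracing how an image path $g(p)$ uses the trees to conclude that hitting the edge $(v^-_{j^-(v)}, v^-_{j^-(v)+1})$ forces $p$ to contain some $e_i$ with $i \le j^-(v)$. All the ingredients—the explicit topology of $T_v^-$, the ordering of the edges by the $\ell(\cdot,v)$ values, and the infinite-cost links from $\gamma_{e_i}$ to $v^-_i$—are available, but the bookkeeping is the place where care is needed. The cost bound is a direct consequence of monotonicity together with the $\rho_v^-(S_j) = c(v_j^-,v_{j+1}^-)$ identity built into the reduction.
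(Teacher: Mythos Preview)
Your proposal is correct and follows essentially the same approach as the paper: the construction of $F$ from the unique path-edge of $F'$ in each tree $T_v^\pm$, the cost bound via the identity $c(v_j^-,v_{j+1}^-)=\rho_v^-(S_j)$, and the separation argument via the canonical path map $g$ are all identical in spirit. The only differences are stylistic: the paper argues $\sep_G(F)\supseteq\sep_H(F')$ by contradiction (assume $p$ survives in $G\setminus F$ and show $g(p)$ survives in $H\setminus F'$), whereas you argue directly; and you are a bit more explicit in invoking monotonicity of $\rho_v^\pm$ to handle the case where an edge $e=(u,v)$ lands in both $F_v^-$ and $F_u^+$, a point the paper leaves implicit.
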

\begin{proof}
  Given a minimal $F'$ we obtain a set of edges $F \subseteq E$ as follows.
  From the proof of Proposition~\ref{prop:minimal-sep} we see that for
  any node $v$, $F'$ contains at most one edge from $T_v^-$ and in
  particular if it contains an edge then it is an edge
  $(v_j^-,v^-_{j+1})$ for some $1 \le j \le n_v^-$ (for simplicity we
  identify $v$ with $v^{-}_{n_v^-+1}$). Suppose there is such an edge
  $e'=(v_j^-,v^-_{j+1})$ in $F'$. Note that $e'$ corresponds to the
  set $S_j = \{e_1,\ldots,e_j\}$ of edges in $\delta_G^-(v)$ ordered
  in increasing order by $\ell(e,v)$ values. We add $S_j$ to $F$
  and assign these edges to $v$ in upper bounding $\nu(F)$: by construction
  $c(e') = \rho_v^-(S_j)$. We do a similar procedure if $e' \in F \cap T_v^+$.
  It follows that the edge set $F$ that we construct satisfies the property
  that $\nu(F) \le c(F')$.

  We now show that $\sep_G(F) \supseteq  \sep_H(F')$
  Consider a pair $(s,t)$ such that $s$ is separated from $t$ by $F'$ in $H$.
  Suppose $(s,t)$ is not separated by $F$ in $G$. Let $p$ be an $s$-$t$ path
  that remains in $G\setminus F$. From Proposition~\ref{prop:dir-reduc-distance}
  there is a unique path $g(p) \in \cP_H(s,t)$. For every edge $e=(u,v) \in p$
  consider the canonical path $q(u,v)$ in $H$. Since $e$ is not in $F$
  it implies that $u$ can reach $\gamma_e$ in $H\setminus F'$ and that
  $\gamma_e$ can reach $v$ in $H \setminus F'$. This means that $q(u,v)$
  exists in $H\setminus F'$. This would imply that $g(p)$ exists in
  $H\setminus F'$ contradicting that assumption that $(s,t)$ is separated
  by $F'$.
\end{proof}

We summarize the properties of the reduction. We assume that we have a
polymatroidal network $G=(V,E)$ with $k$ demand pairs
$(s_i,t_i),\ldots,(s_k,t_k)$ with associated demand values
$D_1,\ldots,D_k$. For all the cut problems of interest, the
relaxations in Section~\ref{sec:cut-relaxations} produce a length
function $\ell:E\rightarrow \mathbb{R}_+$ and for each $e=(u,v)$
associated non-negative values $\ell(e,u)$ and $\ell(e,v)$ such that
$\ell(e) = \ell(e,u) + \ell(e,v)$. As before we use $\bd_v^-$ and
$\bd_v^+$ to denote the vector of $\ell(e,v)$ values for the incoming
and outgoing edges at $v$. The reduction produces an edge-capacitated
network $H=(V_H,E_H)$ with the following properties:
\begin{itemize}
\item each node of $V$ is a node in $V_H$
\item for all $u,v \in V$, $\dist_\ell(u,v) = \dist_{\ell'}(u,v)$
\item $\sum_{e \in E_H} c(e) \ell'(e) = \sum_{v \in V}(\hrho_v^-(\bd_v^-) + \hrho_v^+(\bd_v^+))$
\item for any set of edges $F'\subseteq E_H$ there is a corresponding
  set $F \subseteq E$ such that $\sep_G(F) \supseteq \sep_H(F')$
  and $\nu(F) \le c(F')$.
\end{itemize}

We also note that the reduction can be carried out in polynomial time. Moreover,
given a set $F' \subseteq E_H$ a set $F \subseteq E$ that satisfies the
last property in the list above can be found in polynomial time.

We build on the reduction to obtain flow-cut gap results, all of which
are based on using the relaxations from
Section~\ref{sec:cut-relaxations} which are dual to the corresponding
flow problems.  We argue via the reduction and known results on
edge-capacitated networks that there exist integral cuts within some
factor $\alpha$ of the fractional solution.

\subsection{Multicut} We consider the multicut problem for arbitrary
demand pairs as well as symmetric demands.  The relaxation satisfies
the constraint that $\dist_\ell(s_i,t_i) \ge 1$ for each demand pair
$(s_i,t_i)$.  The reduction from the preceding section produces a
graph $H=(V_H,E_H)$ and a fractional solution $\ell':E_H \rightarrow
\mathbb{R}_+$ such that $\dist_{\ell'}(s_i,t_i) \ge 1$.
We note that $\ell'$ is a feasible solution for the standard distance
based relaxation for multicut in edge-capacitated networks which is
the dual for the maximum throughput multicommodity flow problem.
The integrality gap of this relaxation has been studied and several
results are known. Let $\beta = \sum_{e \in E_H} c(e) \ell'(e)$ be
the fractional solution value. Then one can obtain an integral
multicut $F'$ with cost $c(F')$ that can be bounded in terms of
$\beta$. We summarize the known results.

\begin{itemize}
\item Cheriyan, Karloff and Rabani \cite{CheriyanKR05} showed that
  there exists an $F'$ such that $c(F') \le O(1)\cdot \beta^3$; this
  was improved by Gupta~\cite{Gupta03} to show the existence of a
  multicut $F'$ such that $c(F') \le O(1) \cdot \beta^2$.  These
  results hold under the assumption that $c(e) \ge 1$ for all $e$.
\item Agrawal, Alon and Charikar \cite{AgarwalAC07} improving the
  results in \cite{CheriyanKR05,Gupta03} showed the existence of
  a cut $F'$ such that $c(F') = \tilde{O}(n^{11/23}) \cdot \beta$. Here $n$
  is the number of nodes in the graph.
\item Saks, Samorodnitsky and Zosin \cite{SaksSZ04} showed that there exist
  instances on which every integral multicut has a value $\Omega(k) \cdot \beta$.
\item Chuzhoy and Khanna \cite{ChuzhoyK07} showed that there exist instances
  on which every multicut has a value $\tilde{\Omega}(n^{1/7}) \cdot \beta$. Further, they
  showed that the multicut problem is hard to approximate to within
  a factor of $\Omega(2^{\log^{1-\epsilon}n})$ unless $NP \subseteq ZPP$.
\end{itemize}

Since polymatroidal networks generalize edge-capacitated networks it follows
that all the lower bounds in the above hold for the polymatroidal network case
as well. The reduction also allows us to obtain upper bound for polymatroidal
networks. We have to careful when using bounds that depend on  the
number of nodes in the graph. The reduction takes $G$ with $n$ nodes
and $m$ edges and produces an edge-capacitated graph $H$ with
$n+2m$ nodes. In the worst case $H$ has $\Omega(n^2)$ nodes.
We thus obtain the following theorem.

\begin{theorem}
  In a directed polymatroidal network $G$ on $n$ nodes, for any given
  multicommodity flow instance with $k$ pairs, if $\beta$ is the maximum
  throughput multicommodity flow then:
  \begin{itemize}
  \item There is a feasible multicut $F'$ such that $\nu(F') \le O(1)
    \cdot \beta^2$ assuming that $\rho^+_v$ and $\rho_v^-$
    are integer valued for all $v \in V$.
   \item There is a feasible multicut $F'$ such that
     $\nu(F') \le \tilde{O}(n^{22/23}) \cdot \beta$.
  \end{itemize}
  Moreover, there exist polynomial-time algorithms to find
  multicuts guaranteed as above.
\end{theorem}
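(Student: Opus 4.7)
The strategy is to invoke the reduction developed in Section~\ref{sec:dir-gaps} to translate the problem into the edge-capacitated setting and then black-box the known integrality gap bounds for edge-capacitated multicut due to Gupta and to Agrawal--Alon--Charikar.

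First, I would start with the maximum throughput flow value $\beta$ and observe that by LP duality (as discussed in Section~\ref{sec:multicut-relaxation} and made precise in Section~\ref{sec:formulations-equivalence}), there is a feasible fractional solution $(\ell, \ell(e,u), \ell(e,v))$ to the multicut relaxation in Figure~\ref{fig:multicut-relaxations} whose objective $\sum_v (\hrho_v^-(\bd_v^-) + \hrho_v^+(\bd_v^+))$ equals $\beta$; in particular $\dist_\ell(s_i,t_i) \ge 1$ for every demand pair. Now I would apply the reduction of Section~\ref{sec:dir-gaps} to obtain an edge-capacitated directed graph $H=(V_H,E_H)$ with length function $\ell'$ and cost function $c$. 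By the four bulleted properties, $\dist_{\ell'}(s_i,t_i) = \dist_\ell(s_i,t_i) \ge 1$ for every pair and $\sum_{e \in E_H} c(e)\ell'(e) = \beta$. Thus $\ell'$ is a feasible fractional solution to the standard distance-based LP for multicut in $H$ with value $\beta$.

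Next, I would invoke the known rounding results for the edge-capacitated multicut LP to produce an integral multicut $F' \subseteq E_H$ separating all pairs $(s_i,t_i)$. Using Gupta's bound \cite{Gupta03} we get $c(F') = O(\beta^2)$, provided $c(e) \ge 1$ on every finite-cost edge; this hypothesis is exactly where the integrality assumption on $\rho_v^\pm$ comes in, since the finite-cost edges of $H$ have costs of the form $\rho_v^-(S_j)$ or $\rho_v^+(S_j)$ for nonempty $S_j$, which are positive integers because the polymatroids are monotone, integer-valued, and normalized with $\rho(\emptyset)=0$. Using the Agrawal--Alon--Charikar bound \cite{AgarwalAC07} we get $c(F') = \tilde O(|V_H|^{11/23}) \cdot \beta$; since $|V_H| = n + 2|E| = O(n^2)$, this evaluates to $\tilde O(n^{22/23}) \cdot \beta$. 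Without loss of generality $F'$ can be made minimal while still separating all pairs, so by Proposition~\ref{prop:minimal-sep} it contains no infinite-cost edges.

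Finally, I would apply Lemma~\ref{lem:dir-reduc-cost} to the minimal finite-cost multicut $F'$ to produce an edge set $F \subseteq E$ in $G$ with $\sep_G(F) \supseteq \sep_H(F')$ and $\nu(F) \le c(F')$. Since $F'$ separates every demand pair $(s_i,t_i)$ in $H$, so does $F$ in $G$, so $F$ is a feasible multicut of the original polymatroidal instance with cost at most $c(F')$, yielding both claimed bounds. Polynomial-time algorithmic realizability follows because the reduction, the rounding procedures of \cite{Gupta03,AgarwalAC07}, and the construction of $F$ from $F'$ in the proof of Lemma~\ref{lem:dir-reduc-cost} are all polynomial, together with polynomial-time solvability of the flow LP via the ellipsoid method and submodular function minimization mentioned in Section~\ref{sec:flows}.

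The only subtle step is verifying the hypothesis $c(e) \ge 1$ needed for the $O(\beta^2)$ bound; the rest is essentially a bookkeeping exercise on top of the reduction and the known edge-capacitated results. The polynomial blow-up from $n$ to $O(n^2)$ in the node count of $H$ is what forces the exponent $22/23$ rather than $11/23$ in the second bound.
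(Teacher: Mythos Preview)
Your proposal is correct and follows essentially the same route as the paper: take the optimal fractional multicut (equal to $\beta$ by duality), push it through the reduction of Section~\ref{sec:dir-gaps} to obtain a feasible fractional multicut solution in the edge-capacitated graph $H$, apply the known rounding results of Gupta and of Agrawal--Alon--Charikar in $H$, and pull the integral cut back to $G$ via Lemma~\ref{lem:dir-reduc-cost}. You in fact supply more detail than the paper does, correctly identifying where the integrality hypothesis on $\rho_v^\pm$ is used (to ensure $c(e)\ge 1$ on the finite-cost edges of $H$) and explicitly noting that the $O(n^2)$ blow-up in node count is what converts the $11/23$ exponent into $22/23$. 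One small caveat: integer-valued monotone polymatroids can still take the value $0$ on nonempty sets, so some finite-cost edges of $H$ could have $c(e)=0$; but such edges can be added to any cut for free and removed before applying Gupta's bound, so this does not affect the argument.
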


\paragraph{Symmetric demands:} We now consider the symmetric demand
case when a multicut corresponds to separating $(s_i,t_i)$ or
$(t_i,s_i)$ for a given demand pair $s_it_i$. The relaxation for this
has a constraint that $\dist_\ell(s_i,t_i) + \dist_\ell(t_i,s_i) \ge
1$.  In contrast to the strong negative results for the general
multicut problem, poly-logarithmic upper bounds on flow-cut gaps are known for
symmetric demands in standard networks. In particular Klein
\etal \cite{KleinPRT97} show that if $\beta$ is the cost of a fractional
solution then there exists an integral multicut of cost $O(\log^2 k)
\cdot \beta$.  Even \etal \cite{ENRS} showed the existence of a multicut of
cost $O(\log n \log \log n) \cdot \beta$. Note that these bounds are
incomparable in that depending on the relationship between $k$ and $n$
one is better than the other. It is also known that there exist instances
on which the gap is at least $\Omega(\log n)$. Via the reduction we obtain
the following.

\begin{theorem}
  In a directed polymatroidal network $G$ on $n$ nodes, for any given
  multicommodity flow instance with symmetric demands on $k$ pairs,
  the minimum multicut is $O(\min\{\log^2 k, \log n \log \log n\}) \cdot \beta$
  where $\beta$ is maximum throughput multicommodity flow for the symmetric
  demands.
\end{theorem}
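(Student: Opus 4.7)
The plan is to apply the reduction machinery from Section~\ref{sec:dir-gaps} to transfer the known poly-logarithmic flow-cut gap results for symmetric-demand multicut in standard edge-capacitated networks (Klein \etal \cite{KleinPRT97} and Even \etal \cite{ENRS}) to the polymatroidal setting. The setup for symmetric demands is essentially identical to the arbitrary-demand multicut case already handled, modulo the correct feasibility condition for the fractional solution.

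First I would take an optimal fractional solution $(\ell, \{\ell(e,u),\ell(e,v)\})$ to the multicut relaxation from Section~\ref{sec:multicut-relaxation} adapted to symmetric demands, whose objective value equals $\beta$ (the maximum throughput flow, by LP duality between the flow LP and its Lov\'asz-extension dual established in Section~\ref{sec:formulations-equivalence}). The key feasibility constraint is $\dist_\ell(s_i,t_i) + \dist_\ell(t_i,s_i) \ge 1$ for each $i$. I would then feed this into the reduction, obtaining an edge-capacitated graph $H=(V_H,E_H)$ with length function $\ell'$ and capacities $c$, satisfying the four bulleted properties listed at the end of Section~\ref{sec:dir-gaps}: in particular $\dist_{\ell'}(u,v) = \dist_\ell(u,v)$ on $V\times V$, and $\sum_{e \in E_H} c(e)\ell'(e) = \beta$.

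Next I would observe that $\ell'$ is a feasible fractional solution to the standard distance-based LP relaxation of symmetric-demand multicut on the edge-capacitated network $H$: the preserved-distance property gives $\dist_{\ell'}(s_i,t_i) + \dist_{\ell'}(t_i,s_i) \ge 1$ for every pair. Applying the rounding theorem of Klein \etal yields an integral multicut $F' \subseteq E_H$ with $c(F') \le O(\log^2 k)\cdot \beta$, while that of Even \etal yields $c(F') \le O(\log |V_H| \log \log |V_H|) \cdot \beta$. Since $|V_H| = n + 2m = O(n^2)$, the latter bound simplifies to $O(\log n \log \log n)\cdot \beta$.

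Finally I would trim $F'$ to a minimal set with respect to separating node pairs in $V\times V$ (this only decreases $c(F')$), so Proposition~\ref{prop:minimal-sep} and Lemma~\ref{lem:dir-reduc-cost} apply and yield a corresponding $F \subseteq E$ with $\nu(F) \le c(F')$ and $\sep_G(F) \supseteq \sep_H(F')$. Since $F'$ separates, for each $i$, at least one of the ordered pairs $(s_i,t_i), (t_i,s_i)$ in $H$, the containment of separation sets gives the same for $F$ in $G$; hence $F$ is a valid symmetric-demand multicut in $G$ of cost $\nu(F) \le O(\min\{\log^2 k, \log n \log \log n\})\cdot \beta$. The only mildly subtle step, and the one most likely to need care, is the minimality-before-rounding maneuver that makes Lemma~\ref{lem:dir-reduc-cost} applicable; once that is in place, the theorem is a direct pull-back of \cite{KleinPRT97,ENRS} through the reduction.
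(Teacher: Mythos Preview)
Your proposal is correct and follows essentially the same route as the paper: take the optimum of the symmetric-demand relaxation, push it through the reduction of Section~\ref{sec:dir-gaps} to obtain a feasible fractional solution on the edge-capacitated graph $H$, invoke \cite{KleinPRT97} and \cite{ENRS} to round there, and pull the integral cut back via Lemma~\ref{lem:dir-reduc-cost}. The only details you add beyond the paper's brief argument---the explicit trimming to a minimal $F'$ so that Lemma~\ref{lem:dir-reduc-cost} applies, and the observation that $|V_H|=O(n^2)$ so the \cite{ENRS} bound remains $O(\log n \log\log n)$---are correct and are exactly what the paper's terse ``via the reduction we obtain'' is implicitly relying on.
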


\begin{remark}
  The flow-cut gap in polymatroidal networks for multiterminal
  flows\footnote{In multiterminal flows we have a set of $k$ terminals
    $\{s_1,s_2,\ldots,s_k\}$ and flow can be sent between any pair of
    terminals; the goal is to maximize the total flow. The
    corresponding cut is referred to as multiterminal cut or multiway
    cut in which the goal is to remove a minimum-cost set of edges to
    disconnect every (ordered) pair of terminals.}  can be shown to be
  $2$ via the reduction and the result of Naor and Zosin \cite{NaorZ01}.
\end{remark}

\subsection{Sparsest cut}
Now we consider the sparsest cut problem where the goal is to find a
set of edges $F$ to minimize $\nu(F)/D(F)$ where $D(F)$ is the total
demand of the pairs separated by $F$. The relaxation corresponds to
finding edge length variables $\ell$ to minimize the fractional cost
subject to the constraint that $\sum_i D_i \cdot \dist_\ell(s_i,t_i) =
1$.  Via the reduction we produce an edge-capacitated network $H$ such
that $\sum_i D_i \cdot \dist_{\ell'}(s_i,t_i) = 1$ and with the
fractional cost preserved. In edge-capacitated networks there is a
generic strategy that translates the flow-cut gap for multicut into a
flow-cut gap for sparsest cut at an additional loss of an $O(\log
\sum_i D_i)$ factor due to Kahale \cite{Kahale} (see also
\cite{Shmoys-survey}); this has been refined via a more intricate
analysis in \cite{PlotkinT95} to lose only an $O(\log k)$ factor
although one needs to apply it carefully. In \cite{AgarwalAC07} a
simple reduction that loses an $O(\log n)$ factor is given (this builds
on \cite{Kahale}). For
directed graphs the known-gaps for sparsest cut are essentially
based on using the corresponding gap for multicut and translating via
the above mentioned schemes.  We thus obtain the following results.

\begin{theorem}
  In a directed polymatroidal network $G$ on $n$ nodes, for any given
  multicommodity flow instance with $k$ pairs, if $\beta$ is the value
  of the maximum concurrent flow then there is a cut of sparsity at
  most $\tilde{O}(n^{22/23}) \cdot \beta$.
\end{theorem}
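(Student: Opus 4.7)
The plan is to mimic the multicut argument from the preceding theorem, now applied to the sparsest cut relaxation on the left of Figure~\ref{fig:sparsestcut-relaxations}, and to appeal to the known flow-cut gap for sparsest cut in directed edge-capacitated networks. First I would take an optimal fractional solution $(\ell,\ell(e,u),\ell(e,v))$ to the directed sparsest cut relaxation, so that $\sum_v(\hrho_v^-(\bd_v^-)+\hrho_v^+(\bd_v^+))=\beta$ and $\sum_i D_i\cdot\dist_\ell(s_i,t_i)=1$, where $\beta$ equals the maximum concurrent flow value by LP duality and the equivalence of the convex-closure formulation with the flow dual (Section~\ref{sec:formulations-equivalence}).

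Next, I would feed this fractional solution into the reduction developed earlier in Section~\ref{sec:dir-gaps}, producing an edge-capacitated digraph $H=(V_H,E_H)$ with $|V_H|=n+2m=O(n^2)$, a length function $\ell'$, and a capacity function $c$. By the four summary properties of the reduction, $\sum_{e\in E_H} c(e)\ell'(e)=\beta$, distances between original nodes are preserved, so $\sum_i D_i\cdot\dist_{\ell'}(s_i,t_i)=1$, and any minimal finite-cost edge set $F'\subseteq E_H$ of finite cost can be pulled back to $F\subseteq E$ with $\nu(F)\le c(F')$ and $\sep_G(F)\supseteq\sep_H(F')$. Thus $(\ell',c)$ is a feasible fractional sparsest cut solution in the standard edge-capacitated digraph $H$ of value $\beta$.

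Now I would invoke the known flow-cut gap for sparsest cut in directed edge-capacitated graphs: the Agarwal--Alon--Charikar \cite{AgarwalAC07} multicut bound of $\tilde{O}(N^{11/23})$ on an $N$-node digraph, combined with the Kahale/Plotkin--Tardos style reduction from sparsest cut to multicut (incurring only a poly-logarithmic factor), yields a minimal integral cut $F'\subseteq E_H$ of finite cost with $c(F')/D(F')\le \tilde{O}(|V_H|^{11/23})\cdot\beta$. Substituting $|V_H|=O(n^2)$ gives $c(F')/D(F')\le\tilde{O}(n^{22/23})\cdot\beta$. Finally, the pull-back $F\subseteq E$ guaranteed by the reduction satisfies $\nu(F)\le c(F')$ and, since every demand pair separated by $F'$ is also separated by $F$, $D(F)\ge D(F')$; hence the sparsity of $F$ in $G$ is at most $c(F')/D(F')\le\tilde{O}(n^{22/23})\cdot\beta$, as required.

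The main obstacle, which is routine but worth flagging, is ensuring that the sparsest cut returned in $H$ can be taken to be minimal and of finite cost so that Lemma~\ref{lem:dir-reduc-cost} applies; this is handled by discarding redundant edges (which only decreases $c(F')$ while preserving all separations) and by noting that we may restrict attention to cuts avoiding the infinite-cost edges added inside each in-tree $T_v^-$ and out-tree $T_v^+$, since any finite-sparsity cut certainly avoids them. Once this technicality is dispatched, the theorem follows directly from the reduction and the cited edge-capacitated gap.
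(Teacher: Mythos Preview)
Your proposal is correct and follows essentially the same route as the paper: apply the directed reduction of Section~\ref{sec:dir-gaps} to the sparsest cut relaxation, land in an edge-capacitated digraph $H$ on $O(n^2)$ nodes, invoke the Agarwal--Alon--Charikar multicut gap together with the Kahale-type sparsest-cut-to-multicut translation (absorbing the extra logarithmic factor into the $\tilde O$), and pull the resulting integral cut back via Lemma~\ref{lem:dir-reduc-cost}. The paper states this argument tersely in Section~4.3; your write-up simply fills in the same steps more explicitly, including the routine observation about taking $F'$ minimal and of finite cost.
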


\begin{theorem}
  In a directed polymatroidal network $G$ on $n$ nodes, for any given
  multicommodity flow instance with symmetric demands on $k$ pairs,
  there is a cut of sparsity $O(\min\{\log^3 k, \log^2 n \log \log n\}) \cdot \beta$   where $\beta$ is maximum concurrent flow.
\end{theorem}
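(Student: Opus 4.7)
The plan is to reduce the sparsest cut problem in the directed polymatroidal network $G$ to the sparsest cut problem in an ordinary directed edge-capacitated network $H$ via the construction of Section~\ref{sec:dir-gaps}, and then invoke the known symmetric-demand results in \cite{KleinPRT97,ENRS} together with the multicut-to-sparsest-cut translation of Plotkin-Tardos (refining Kahale).

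First, I would write the concurrent flow as a linear program and take its dual, which by the equivalence between the Lov\'asz extension $\hrho$ and the convex closure $\trho$ (and the discussion at the end of Section~\ref{sec:sparsest-cut-relaxation}) is exactly the sparsest cut relaxation of Figure~\ref{fig:sparsestcut-relaxations} with the symmetric-demand normalization $\sum_i D_i\bigl(\dist_\ell(s_i,t_i)+\dist_\ell(t_i,s_i)\bigr)=1$. Strong LP duality gives that the optimum of this relaxation equals $\beta$. Let $(\ell,\{\ell(e,u),\ell(e,v)\})$ be an optimal solution, and form the edge-capacitated graph $H=(V_H,E_H)$ with length function $\ell'$ and cost $c$ as in Section~\ref{sec:dir-gaps}. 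By the four bulleted properties of the reduction, $\ell'$ is a feasible fractional sparsest-cut solution for $H$ with $\sum_i D_i\bigl(\dist_{\ell'}(s_i,t_i)+\dist_{\ell'}(t_i,s_i)\bigr)=1$, with total cost $\sum_{e\in E_H}c(e)\ell'(e)=\beta$, and with the property that any finite-cost minimal cut $F'\subseteq E_H$ can be pulled back to a cut $F\subseteq E$ with $\sep_G(F)\supseteq \sep_H(F')$ and $\nu(F)\le c(F')$.

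Second, I would invoke the directed symmetric-demand flow-cut gap results for standard networks. The multicut gap results of Klein \etal \cite{KleinPRT97} and Even \etal \cite{ENRS} give an integral multicut gap of $O(\min\{\log^2 k,\log n_H\log\log n_H\})$ for directed edge-capacitated instances with symmetric demands, where $n_H=|V_H|$. Applying the Plotkin-Tardos-type region-growing translation (which loses an extra $O(\log k)$ factor, where $k$ is the number of demand pairs) one obtains, from the fractional sparsest-cut solution $\ell'$ of value $\beta$ in $H$, an integral cut $F'\subseteq E_H$ with
\[
\frac{c(F')}{D(F')}\ \le\ O\!\bigl(\min\{\log^3 k,\log^2 n_H\log\log n_H\}\bigr)\cdot\beta.
\]
Since $H$ has $n+2m=O(n^2)$ nodes, $\log n_H=\Theta(\log n)$ and $\log\log n_H=\Theta(\log\log n)$, so the bound has the claimed form in terms of $n$. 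Without loss of generality $F'$ is minimal and of finite cost (edges of infinite cost never help).

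Finally, by the pull-back property I obtain $F\subseteq E$ with $\nu(F)\le c(F')$ and $\sep_G(F)\supseteq \sep_H(F')$; in the symmetric-demand convention used in the paper, the latter implies $D(F)\ge D(F')$, so $\nu(F)/D(F)\le c(F')/D(F')$, establishing the theorem. The main technical point to verify carefully is the interplay between symmetric demands and the cut correspondence of Lemma~\ref{lem:dir-reduc-cost}: one must check that a pair $s_it_i$ counted as separated in $H$ (either direction) is also counted as separated in $G$ under the same convention, which follows directly from $\sep_G(F)\supseteq \sep_H(F')$. The remaining routine item is invoking the Plotkin-Tardos refinement correctly so that only a single $O(\log k)$ factor (rather than $O(\log \sum_i D_i)$) is lost in passing from the multicut gap to the sparsest cut gap.
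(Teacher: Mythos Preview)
Your proposal is correct and follows essentially the same route as the paper: apply the reduction of Section~\ref{sec:dir-gaps} to pass from the polymatroidal sparsest-cut relaxation to an edge-capacitated instance $H$, invoke the symmetric-demand multicut gaps of \cite{KleinPRT97,ENRS}, translate to sparsest cut via the Kahale/Plotkin--Tardos scheme at an additional $O(\log k)$ (or $O(\log n)$) loss, and pull the resulting integral cut back via Lemma~\ref{lem:dir-reduc-cost}. The only minor point to make explicit is that the $O(\log^2 n\log\log n)$ term arises from $\log k\cdot \log n_H\log\log n_H$ together with $\log k=O(\log n)$ and $\log n_H=\Theta(\log n)$ (or equivalently from the $O(\log n)$-loss reduction in \cite{AgarwalAC07}); you state the displayed bound directly without writing this substitution, but it is immediate.
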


\section{Flow-Cut Gaps in Undirected Polymatroidal Networks}
\label{sec:undir-gaps}
In this section we consider flow-cut gaps in undirected polymatroidal
networks.  As we already noted, node-capacitated flows are a special
case of polymatroidal flows. We show that line embeddings with low
average distortion introduced by Matousek and Rabinovich
\cite{MatousekR01} (and further studied in \cite{Rabinovich03}) are
useful for bounding the gap between the maximum concurrent flow and
sparsest cut; we are inspired to make this connection from
\cite{FeigeHL06} who considered node-capacitated flows. For multicut
we show that the region growing technique from \cite{LeightonRao} that
was used in \cite{GVY} for edge-capacitated multicut can be adapted to
the polymatroidal setting.  These techniques are also applicable to
directed graphs --- we defer a more detailed discussion.

\subsection{Maximum Concurrent Flow and Sparsest Cut}
We start with the definition of line embeddings and average distortion.

Let $(V,d)$ be a finite metric space. A map $g : V \rightarrow
\mathbb{R}$ is an embedding of $V$ into a line; it is a {\em
  contraction} (also called $1$-Lipschitz) if for all $u,v \in V$,
$$ |g(u)-g(v)| \le d(u,v).$$
Given a demand function $w: V \times V \rightarrow \mathbb{R}_+$
and a contraction $g: V \rightarrow \mathbb{R}$, its {\em average distortion}
with respect to $w$ is defined as
$$ \avgd_w(g) = \frac{\sum_{u,v\in V} w(u,v) \cdot d(u,v)}{ \sum_{u,v
    \in V} w(u,v) \cdot |g(u)-g(v)|}
$$

The following theorem is implicit in \cite{Bourgain}; see \cite{FeigeHL06}
for a sketch.

\begin{theorem}[Bourgain \cite{Bourgain}]
\label{thm:Bourgain-line}
For every $n$-point metric space $(V,d)$ and
every weight function  $w: V \times V \rightarrow \mathbb{R}_+$ there
is a polynomial-time computable contraction $g : V \rightarrow \mathbb{R}$ such
that $\avgd_w(g) = O(\log n)$. Moreover, if the support of $w$ is $k$
there is a map $g$ such that $\avgd_w(g) = O(\log k)$.
\end{theorem}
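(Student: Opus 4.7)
The plan is to deploy Bourgain's random-subset embedding in a ``single-coordinate'' form, and then conclude by a one-line application of the probabilistic method.

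First I would introduce the Bourgain--Fr\'echet coordinates: for each scale $t = 0, 1, \ldots, \lceil \log_2 n \rceil$, let $A_t \subseteq V$ be a random subset in which each vertex is included independently with probability $2^{-t}$, and set $f_{A_t}(v) := d(v, A_t)$. Each $f_{A_t}$ is a $1$-Lipschitz contraction by the triangle inequality and is polynomial-time computable from $(V,d)$ together with the realized subset. The technical heart, which I would invoke as a black box, is Bourgain's lemma in distributional form: if one samples the scale $t$ uniformly at random (together with the random set $A_t$, and if necessary averages over $O(\log n)$ independent repetitions per scale to obtain concentration), then there is an absolute constant $c > 0$ with
\begin{equation*}
\mathbb{E}_{t,\,A_t}\bigl|f_{A_t}(u) - f_{A_t}(v)\bigr| \;\geq\; \frac{c\, d(u,v)}{\log n} \qquad \forall\, u, v \in V.
\end{equation*}

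Given this pair-wise expected separation, linearity of expectation applied to the weighted objective yields
\begin{equation*}
\mathbb{E}_{t,\,A_t}\Bigl[\sum_{u,v} w(u,v)\,\bigl|f_{A_t}(u) - f_{A_t}(v)\bigr|\Bigr] \;\geq\; \frac{c}{\log n} \sum_{u,v} w(u,v)\, d(u,v),
\end{equation*}
so by the probabilistic method there exists a deterministic choice $(t^\star, A_{t^\star})$ for which the inequality holds; the map $g := f_{A_{t^\star}}$ is then a $1$-Lipschitz contraction with $\avgd_w(g) = O(\log n)$. The construction is made polynomial-time by derandomizing the subset choice via the method of conditional expectations on the (linear-in-indicators) objective, applied one vertex at a time within each of the $O(\log n)$ scales, and by then retaining the best of the resulting $O(\log n)$ candidates.

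For the refinement to $O(\log k)$ when $|\mathrm{supp}(w)| = k$, observe that the objective $\sum w(u,v)|g(u)-g(v)|$ depends only on the at most $2k$ endpoints of the demand pairs. I would therefore apply the construction above to the restricted metric on this demand-endpoint subset $V_0 \subseteq V$ with $|V_0| \le 2k$, using only $\lceil \log_2 |V_0|\rceil = O(\log k)$ scales. The resulting $1$-Lipschitz map $g_0 : V_0 \to \mathbb{R}$ is then extended to all of $V$ while preserving $1$-Lipschitzness via the McShane--Whitney formula $\tilde g(v) := \min_{u \in V_0}\bigl(g_0(u) + d(u,v)\bigr)$; since $\tilde g$ agrees with $g_0$ on $V_0$, the value of the objective is unchanged.

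The main obstacle I would anticipate is Bourgain's lemma itself. Its proof requires, for each pair $(u,v)$, the selection of a ``good scale'' at which the random subset $A_t$ hits a shell around one endpoint but misses the corresponding ball around the other with constant probability, delivering an $\Omega(d(u,v)/\log n)$ expected separation at that scale; combining contributions across the $O(\log n)$ scales (with appropriate repetitions) then yields the distributional bound stated above. I would take this as a black box, citing Bourgain's original paper and the sketch in Feige--Hajiaghayi--Lee referenced in the excerpt, rather than reproducing the ball/shell analysis.
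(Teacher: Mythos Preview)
The paper does not supply its own proof of this theorem: it is stated as a known result, attributed to Bourgain and accompanied only by a pointer to the sketch in Feige--Hajiaghayi--Lee. There is therefore no in-paper argument to compare against.

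Your proposal is a correct outline of the standard proof. The single-coordinate Bourgain lemma, the use of linearity of expectation plus the probabilistic method to extract one good Fr\'echet coordinate, and the McShane--Whitney extension to pass from the $2k$ demand endpoints back to all of $V$ are exactly the ingredients one expects here, and they are the ones implicit in the references the paper cites. One minor remark: for the polynomial-time claim you could equally well note that the embedding is drawn from a polynomial-size sample space (scales times $O(\log n)$ repetitions), so one can simply enumerate the candidate maps rather than invoke conditional expectations; both routes work.
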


Using the above we prove the following.

\begin{theorem}
\label{thm:undir-sparse-gap}
In undirected polymatroidal networks, for any given multicommodity
flow instance with $k$ pairs, the ratio between the value of the
sparsest cut and the value of the maximum concurrent flow is $O(\log
k)$. Moreover, there is an efficient algorithm to compute an $O(\log
k)$ approximation to the sparsest cut problem.
\end{theorem}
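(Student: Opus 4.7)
The plan is to round the convex-programming relaxation for sparsest cut from Figure~\ref{fig:sparsestcut-relaxations} via Bourgain-style line embeddings, in the spirit of Feige, Hajiaghayi, and Lee~\cite{FeigeHL06} for node-capacitated flows, with the polymatroid cost controlled through the integral representation of the Lov\'asz extension. Let $\ell^{*},\ell^{*}(e,u),\ell^{*}(e,v)$ be an optimal LP solution of value $\beta^{*}$; the edge lengths $\ell^{*}$ induce a semi-metric $d$ on $V$ with $\sum_{i} D_i\,\dist_{\ell^{*}}(s_i,t_i)=1$. By Lemma~\ref{lem:flow-less-than-cut} and LP duality, $\beta^{*}$ lower-bounds the maximum concurrent flow, so it suffices to exhibit an integral cut of sparsity $O(\log k)\,\beta^{*}$.

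First I would apply Theorem~\ref{thm:Bourgain-line} to $(V,d)$ with weights $w(s_i,t_i)=D_i$ to obtain a $1$-Lipschitz map $g:V\to\mathbb{R}$ with $\sum_{i} D_i\,|g(s_i)-g(t_i)|\ge\Omega(1/\log k)$. For each $t\in\mathbb{R}$ I would form the threshold cut $F_t=\{e=uv : \min(g(u),g(v))<t\le \max(g(u),g(v))\}$ equipped with the following valid assignment $g_t$: for $e=uv$ with $g(u)<g(v)$ and $t\in(g(u),g(v)]$, assign $e$ to $u$ if $t-g(u)\le\ell^{*}(e,u)$ and otherwise to $v$; do the symmetric thing when $g(v)<g(u)$. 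Because $g$ is $1$-Lipschitz and $d(u,v)\le\ell^{*}(e)=\ell^{*}(e,u)+\ell^{*}(e,v)$, the set of thresholds for which a given edge $e$ is handed to $v$ is contained in an interval of length at most $\ell^{*}(e,v)$ sitting inside $[g(v)-\ell^{*}(e,v),g(v)]$, so every $e\in S_v(t):=g_t^{-1}(v)$ satisfies $\ell^{*}(e,v)\ge|t-g(v)|$. Monotonicity of $\rho_v$ (Proposition~\ref{prop:monotone-sub}) then yields $\rho_v(S_v(t))\le\rho_v(\{e\in\delta(v):\ell^{*}(e,v)\ge|t-g(v)|\})$.

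The conclusion comes from two integrations. Substituting $\theta=|t-g(v)|$ in the bound above,
\begin{align*}
\int_{\mathbb{R}}\sum_{v}\rho_v(S_v(t))\,dt
\;\le\;2\sum_{v}\int_{0}^{\infty}\rho_v(\{e:\ell^{*}(e,v)\ge\theta\})\,d\theta
\;=\;2\sum_{v}\hrho_v(\bd_v)
\;=\;2\beta^{*},
\end{align*}
while the coarea identity for threshold cuts gives $\int_{\mathbb{R}} D(F_t)\,dt=\sum_{i} D_i\,|g(s_i)-g(t_i)|\ge\Omega(1/\log k)$. Since $\nu(F_t)\le\sum_v\rho_v(S_v(t))$ by definition of the cut cost, averaging produces a threshold $t^{*}$ whose cut has sparsity $\nu(F_{t^{*}})/D(F_{t^{*}})=O(\log k)\,\beta^{*}$. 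The main obstacle is the design of the assignment rule so that $S_v(t)$ is forced to lie inside a super-level set of $\ell^{*}(\cdot,v)$; once this structural guarantee is in place, the Lov\'asz-extension identity $\hrho_v(\bd_v)=\int_{0}^{\infty}\rho_v(\{e:\ell^{*}(e,v)\ge\theta\})\,d\theta$ converts thresholded costs back into the LP objective. For the algorithmic claim, the LP is solvable in polynomial time via ellipsoid with submodular minimization, $g$ is computed by Bourgain's procedure, and the optimal $t^{*}$ is one of the $O(n)$ breakpoints of the piecewise-constant ratio $t\mapsto\nu_{g_t}(F_t)/D(F_t)$.
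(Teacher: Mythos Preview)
Your proposal is correct and follows essentially the same strategy as the paper: solve the Lov\'asz-extension relaxation, apply Bourgain's line-embedding theorem to the induced metric with the demand weights, sweep threshold cuts $S_\theta$, and bound the integrated cut cost via the integral representation of the Lov\'asz extension to obtain the factor $2\sum_v\hrho_v(\bd_v)$; the averaging argument then yields the $O(\log k)$ sparsity bound. The only difference is cosmetic: the paper scales the split lengths to $\ell'(e,u)=\frac{\ell(e,u)}{\ell(e)}|g(u)-g(v)|$ and partitions the edge's image exactly, whereas you use the unscaled $\ell^*(e,u)$ directly and appeal to the containment $S_v(t)\subseteq\{e:\ell^*(e,v)\ge|t-g(v)|\}$ --- both routes land on the same bound, and your super-level-set phrasing is arguably the cleaner way to invoke the identity $\hrho_v(\bd_v)=\int_0^\infty\rho_v(\{e:\ell^*(e,v)\ge\theta\})\,d\theta$.
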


Recall the relaxation for the sparsest cut from
Section~\ref{sec:sparsest-cut-relaxation} and the associated notation.
To prove the theorem we consider an optimum solution to the relaxation
and show the existence of a cut whose sparsity is $O(\log k)$ times
the value of the relaxation. Let $(V,d)$ be the metric induced on $V$
by shortest path distances in the graph with edge lengths given by
$\ell:E \rightarrow \mathbb{R}_+$ from the optimum fractional
solution. Let $g:V \rightarrow \mathbb{R}$ be line embedding
guaranteed by Theorem~\ref{thm:Bourgain-line} with respect to $d$ and
the weight function given by the demands $D_i$; that is $w(s_i,t_i) =
D_i$ for a demand pair and is $0$ for any pair of nodes that do not
correspond to a demand. Without loss of generality we can assume that
$g$ maps $V$ to the interval $[0,\beta]$ for some $\beta > 0$. For
$\theta \in (0,\beta)$ let $S_\theta = \{ u \mid g(u) \le \theta
\}$. We show that there is a $\theta$ such that $\delta(S_\theta)$ is
an approximately good sparse cut. Let $D(\delta(S_\theta))$ be the
total demand of pairs separated by $S_\theta$, that is
$D(\delta(S_\theta)) = \sum_{i: S_\theta \text{~separates~} s_it_i} D_i$.

\begin{lemma}
$$\int_0^\beta D(\delta(S_\theta)) d \theta= \Omega\left (\frac{1}{\log k}\right ).$$
\end{lemma}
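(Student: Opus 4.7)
The plan is to exploit the contraction property of $g$ together with the threshold (sweep) construction $S_\theta$, and relate the integral to the numerator in the average distortion formula.

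First I would rewrite the integral by swapping sum and integral. For each demand pair $(s_i, t_i)$, the threshold cut $S_\theta$ separates the pair precisely when $\theta$ lies strictly between $g(s_i)$ and $g(t_i)$. Therefore
\begin{equation*}
\int_0^\beta D(\delta(S_\theta))\, d\theta \;=\; \sum_{i=1}^k D_i \cdot \bigl|g(s_i) - g(t_i)\bigr|.
\end{equation*}
So the task reduces to lower bounding $\sum_i D_i \cdot |g(s_i)-g(t_i)|$.

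Next I would invoke the choice of $g$. Since $g$ is the line embedding guaranteed by Theorem~\ref{thm:Bourgain-line} for the weight function $w(s_i,t_i)=D_i$ (supported on the $k$ demand pairs) and the shortest path metric $d$ induced by the optimal fractional lengths $\ell$, we have
\begin{equation*}
\sum_{i=1}^k D_i \cdot |g(s_i)-g(t_i)| \;\geq\; \frac{1}{O(\log k)} \sum_{i=1}^k D_i \cdot d(s_i,t_i).
\end{equation*}
Finally, the relaxation constraint in Section~\ref{sec:sparsest-cut-relaxation} normalizes the solution so that $\sum_i D_i \cdot \dist_\ell(s_i,t_i) = 1$, hence $\sum_i D_i \cdot d(s_i,t_i) = 1$. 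Combining these three facts yields the claimed $\Omega(1/\log k)$ lower bound.

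I do not expect any real obstacle here: the integral-vs-sum identity is a standard threshold/layer-cake computation, and Bourgain's theorem is applied directly to the metric obtained from the LP solution. The one minor point worth double-checking is that $g$ indeed maps into a bounded interval $[0,\beta]$ (allowed by composing with a translation, and by the fact that $(V,d)$ is finite so $g(V)$ is bounded since $g$ is $1$-Lipschitz), which is only used to make the integration range finite and does not affect the identity above.
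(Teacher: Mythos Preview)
Your proposal is correct and follows essentially the same argument as the paper: rewrite the integral as $\sum_i D_i |g(s_i)-g(t_i)|$ via the threshold/layer-cake identity, apply the $O(\log k)$ average-distortion guarantee of $g$ from Bourgain's theorem, and use the LP normalization $\sum_i D_i \cdot d(s_i,t_i)=1$. The paper's proof is line-for-line the same.
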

\begin{proof}
From the definition of $D(\delta(S_\theta))$,
$$\int_0^\beta D(\delta(S_\theta)) d \theta = \int_0^\beta (\sum_{i: S_\theta \text{~separates~} s_it_i} D_i)  d \theta = \sum_{i=1}^k D_i \cdot \int_0^\beta {\bf 1}_{S_\theta \text{~separates~} s_it_i} d\theta=  \sum_{i=1}^k D_i \cdot |g(s_i)-g(t_i)|.$$
From the properties of $g$,
$$\frac{\sum_i D_i \cdot d(s_i,t_i)}{\sum_{i} D_i \cdot |g(s_i)-g(t_i)|} \le O(\log k).$$
We have the constraint $\sum_i D_i \cdot d(s_i,t_i) = 1$ from the LP
relaxation; this combined with the above inequality proves the lemma.
\end{proof}

The main insight in the proof is the following lemma. A version of
the lemma also holds for directed graphs that we address
in a remark following the proof.

\begin{lemma}
\label{lem:line-key}
$$\int_0^\beta \nu(\delta(S_\theta)) d \theta \le 2 \sum_u \hrho_u(\bd_u).$$
\end{lemma}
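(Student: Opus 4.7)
The plan is to upper bound $\nu(\delta(S_\theta))$ by exhibiting a specific valid assignment of cut edges to endpoints, then integrate via a change of variables so that the integrand becomes exactly (twice) the Lov\'asz-extension formula $\hat{\rho}_v(\bd_v) = \int_0^\infty \rho_v(\{e : \ell(e,v) \ge \tau\})\, d\tau$. The factor of $2$ will come from the 2-to-1 map $\theta \mapsto |\theta - g(v)|$ when we reparameterize.

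\medskip
\noindent\textbf{Step 1: Pointwise assignment.} Fix $\theta$. Consider an edge $e = uv \in \delta(S_\theta)$; WLOG $g(u) \le \theta < g(v)$. Since $g$ is a contraction of the shortest-path metric $d$ induced by $\ell$, and $e$ itself is a $u$-$v$ path of length $\ell(e)$, we have
\[
g(v) - g(u) \;\le\; d(u,v) \;\le\; \ell(e) \;=\; \ell(e,u) + \ell(e,v).
\]
Declare $v$ \emph{responsible} for $e$ at $\theta$ if $g(v) - \theta \le \ell(e,v)$, and $u$ responsible if $\theta - g(u) \le \ell(e,u)$. The inequality above guarantees that the two responsibility intervals $[g(u), g(u)+\ell(e,u)]$ and $[g(v)-\ell(e,v), g(v)]$ together cover the whole cut interval $[g(u), g(v))$, so each cut edge has at least one responsible endpoint. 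Define a valid assignment $g_\theta$ by assigning $e$ to $u$ when $u$ is responsible and to $v$ otherwise.

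\medskip
\noindent\textbf{Step 2: Enlargement and monotonicity.} Let $G_\theta^v \subseteq \delta(v)$ be the cut edges assigned to $v$ by $g_\theta$. By the definition of $\nu(\cdot)$ as a minimum over valid assignments,
\[
\nu(\delta(S_\theta)) \;\le\; \sum_v \rho_v(G_\theta^v).
\]
Every edge in $G_\theta^v$ has $v$ as a responsible endpoint, and responsibility forces $|\theta - g(v)| \le \ell(e,v)$, so $G_\theta^v \subseteq \tilde F_\theta^v := \{e \in \delta(v) : \ell(e,v) \ge |\theta - g(v)|\}$. Monotonicity of the polymatroid $\rho_v$ (Proposition~\ref{prop:monotone-sub} is the analog for $\hrho$; monotonicity of $\rho$ itself is immediate) then gives $\rho_v(G_\theta^v) \le \rho_v(\tilde F_\theta^v)$.

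\medskip
\noindent\textbf{Step 3: Change of variables.} Integrating and swapping sum with integral,
\[
\int_0^\beta \nu(\delta(S_\theta))\, d\theta \;\le\; \sum_v \int_0^\beta \rho_v\bigl(\{e \in \delta(v) : \ell(e,v) \ge |\theta - g(v)|\}\bigr)\, d\theta.
\]
Split the inner integral at $\theta = g(v)$ and substitute $\tau = |\theta - g(v)|$ on each piece; each substitution is measure-preserving and each $\tau > 0$ is hit by at most two values of $\theta$, so
\[
\int_0^\beta \rho_v\bigl(\{e : \ell(e,v) \ge |\theta - g(v)|\}\bigr)\, d\theta \;\le\; 2 \int_0^\infty \rho_v\bigl(\{e : \ell(e,v) \ge \tau\}\bigr)\, d\tau \;=\; 2\, \hrho_v(\bd_v),
\]
where the last equality is the integral form of the Lov\'asz extension (extended to nonnegative vectors by positive homogeneity, as noted in Section~\ref{sec:cuts}). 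Summing over $v$ yields the claim.

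\medskip
\noindent\textbf{Main obstacle.} The only delicate point is Step~1: one must use the contraction property of $g$ together with the crucial constraint $\ell(e) = \ell(e,u) + \ell(e,v)$ from the LP to guarantee that the two ``responsibility intervals'' cover the cut interval. Without this, some cut edge could fail to be assigned to any endpoint and the assignment $g_\theta$ would be invalid. Everything else is a bookkeeping calculation driven by the definition of $\hrho$ as an integral of level-set values.
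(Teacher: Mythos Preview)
Your proof is correct and follows essentially the same approach as the paper: assign each cut edge to an endpoint based on which side of a split point $\theta$ lies, use the contraction property together with $\ell(e)=\ell(e,u)+\ell(e,v)$ to guarantee a valid assignment, and then integrate to recover the Lov\'asz-extension formula. The only cosmetic difference is that the paper first rescales to $\ell'(e,u)=\frac{\ell(e,u)}{\ell(e)}\,(g(v)-g(u))$ so that the two responsibility intervals \emph{partition} $[g(u),g(v)]$ exactly, and then splits $\delta(u)$ into left and right edges and bounds each half by $\hat\rho_u(\bd_u)$ via monotonicity of $\hat\rho_u$; your version skips the rescaling (allowing overlap), enlarges directly to the level set $\{e:\ell(e,v)\ge|\theta-g(v)|\}$, and obtains the factor $2$ from the two-to-one substitution $\tau=|\theta-g(v)|$ --- a slightly cleaner packaging of the same computation.
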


\begin{proof}
  Consider an edge $uv \in \delta(S_\theta)$ and for simplicity
  assume $g(u) < g(v)$.  The length of $e$ in the embedding
  is $\ell'(e) = |g(v)-g(u)| \le \ell(e)$. The edge $(u,v) \in
  \delta(S_\theta)$ iff $\theta$ is in the interval
  $[g(u),g(v)]$. Note that the cost $\nu(\delta(S_\theta))$ is in
  general a complicated function to evaluate. We upper bound
  $\nu(\delta(S_\theta))$ by giving an explicit way to assign $e=uv$
  to either $u$ or $v$ as follows. Recall that in the relaxation $\ell(e) =
  \ell(e,u) + \ell(e,v)$ where $\ell(e,u)$ and $\ell(e,v)$ are the
  contributions of $u$ and $v$ to $e$. Let $r =
  \frac{\ell(e,u)}{\ell(e)}$ and let $\ell'(e,u) = r \ell'(e)$ and
  $\ell'(e,v) = (1-r)\ell'(e)$.  We partition the interval
  $[g(u),g(v)]$ into $[g(u),g(u)+\ell'(e,u))$ and
  $[g(u)+\ell'(e,u),g(v)]$; if $\theta$ lies in the former interval we
  assign $e$ to $u$, otherwise we assign $e$ to $v$.  This assignment
  procedures describes a way to upper bound $\nu(\delta(S_\theta))$
  for each $\theta$. Now we consider the quantity $\int_0^\beta
  \nu(\delta(S_\theta)) d\theta$ and upper bound it as follows.

  Consider a node $u$ and let $L_u = \{uv \in \delta(u) \mid g(v) <
  g(u) \}$ be the set of edges $uv$ that go from $u$ to the left of
  $u$ in the embedding $g$. Similarly $R_u = \{uv \in \delta(u) \mid
  g(v) \ge g(u) \}$. Note that $L_u$ and $R_u$ partition $\delta(u)$.
  Let $\bd'_u$ be the vector of dimension $|\delta(u)|$ consisting of
  the values $\ell'(e,u)$ for $e \in \delta(u)$.  We obtain $\bd^L_u$
  from $\bd'_u$ by setting the values for $e \in R_u$ to $0$ and
  similarly $\bd^R_u$ from $\bd'_u$ by setting the values for $e \in
  L_u$ to $0$. Since $0 \le \ell'(e,u) \le \ell(e,u)$ for each $e \in
  \delta(u)$ we see that $\bd'_u \le \bd_u$ and (component wise) and
  hence $\bd^L_u \le \bd_u$ and $\bd^R_u \le \bd_u$.  Since $\rho_u$
  is monotone we have that $\hrho_u(\bd^L_u) \le \hrho_u(\bd_u)$ and
  $\hrho_u(\bd^R_u) \le \hrho_u(\bd_u)$ (see
  Proposition~\ref{prop:monotone-sub}).

  We claim that
$$\int_0^\beta \nu(\delta(S_\theta)) d\theta \le \sum_{u \in V} (\hrho_u(\bd^L_u) + \hrho_u(\bd^R_u)),$$
which would prove the lemma.

To see the claim consider some fixed $\theta$ and
$\nu(\delta(S_\theta))$. Fix a node $u$ and consider the edges in
$\delta(u) \cap S_\theta$ assigned to $u$ by the procedure we
described above; call this set $A_{\theta,u}$. First assume that
$\theta < g(u)$. Then the edges assigned to $u$ by the procedure,
denoted by $A_{\theta,u} = \{ e \in L_u \mid \theta > g(u) -
\ell'(e,u) \}$. Similarly, if $\theta > g(u)$, $A_{\theta,u} = \{ e
\in L_u \mid \theta < g(u) + \ell'(e,u) \}$. From these definitions we have
$$\int_0^\beta \nu(\delta(S_\theta)) d\theta \le
\sum_{u \in V} \int_0^\beta \rho_u(A_{\theta,u})d\theta.$$
For a fixed node $u$,
\begin{eqnarray*}
\int_0^\beta \rho_u(A_{\theta,u})d\theta & = &   \int_0^{g(u)} \rho_u(A_{\theta,u})d\theta + \int_{g(u)}^\beta \rho_u(A_{\theta,u})d\theta
\end{eqnarray*}
Let $L_u = \{e_1,e_2,\ldots, e_h\}$ where $0 \le \ell'(e_1,u) \le
\ell'(e_2,u) \le \ldots \le \ell'(e_h,u)$. Then
$$\int_0^{g(u)} \rho_u(A_{\theta,u})d\theta = \sum_{j=1}^h (\ell'(e_j,u)-\ell'(e_{j-1},u))\rho(\{e_1,e_2,\ldots,e_j\})$$
The right hand side of the above, is by construction and the
definition of the Lov\'{a}sz extension, equal to $\hrho_u(\bd^L_u)$.
Similarly, $\int_{g(u)}^\beta \rho_u(A_{\theta,u})d\theta =
\hrho_u(\bd^R_u)$.
\end{proof}

\begin{remark}
  An examination of the proof of the above lemma explains the factor
  of $2$ on the right hand side; the edges in $\delta(v)$ can be both
  to the left and right of $v$ in the line embedding and each side
  contributes $\hrho_u(\bd_v)$ to the cost. This is related to the
  technical issue about undirected polymatroid networks where the flow
  through $v$ takes up capacity on two edges incident to $v$.  For
  directed graphs one can prove a statement of the form below where
  $\delta^+(S_\theta)$ is set of edges leaving $S_\theta$. Notice that
  there is no factor of $2$ since one treats the incoming and outgoing
  edges separately.
  $$\int_0^\beta \nu(\delta^+(S_\theta)) d \theta \le  \sum_u(\hrho_u^-(\bd^-_u) + \hrho_u^+(\bd^+_u)).$$
  The above statement gives an embedding proof of the maxflow-mincut
  theorem for single-commodity directed polymatroidal networks and has
  other applications.
\end{remark}

\medskip
\noindent
We now finish the proof of Theorem~\ref{thm:undir-sparse-gap} via the
preceding two lemmas.
\begin{eqnarray*}
\min_{\theta \in (0,\beta)} \frac{\nu(\delta(S_\theta))}{D(\delta(S_\theta))} & \le & \frac{\int_0^\beta \nu(\delta(S_\theta)) d \theta}{\int_0^\beta D(\delta(S_\theta)) d \theta} \\
     & \le & 2 \sum_u \hrho_u(\bd_u) \cdot O(\log k) = O(\log k) \sum_u \hrho_u(\bd_u).
\end{eqnarray*}
The above shows that the sparsity of $S_\theta$ for some $\theta$ is
at most $O(\log k)$ times $ \sum_u \hrho_u(\bd_u)$ which is the value of
the relaxation. Given a line embedding $g$ there are only $n-1$
distinct cuts of interest and one can try all of them to find the one
with the smallest sparsity. The efficiency of the algorithm therefore
rests on the efficiency of the algorithm to solve the fractional
relaxation, and the algorithm to find a line embedding guaranteed
by Theorem~\ref{thm:Bourgain-line}; both have polynomial time algorithms
and thus one can find an $O(\log k)$ approximation to the sparsest cut
in polynomial time.

\begin{remark}
  Node-weighted flows and cuts/separators can be cast as special cases
  of flows and cuts in polymatroid networks. Our algorithm produces
  edge-cuts from line embeddings in a simple way even for
  node-weighted problems --- the $\nu$ cost of the edge-cut
  automatically translates into an appropriate node-weighted cut.  In
  contrast, the algorithm in \cite{FeigeHL06} has to solve several
  instances of $s$-$t$ separator problems in auxiliary graphs obtained
  from the line embedding.
\end{remark}

\subsubsection{Sparsest Bi-partition Cut}
\label{sec:bipartition}
We worked with general edge cuts so far, but for certain applications,
it is necessary to work with a special type of edge cut called a
bi-partition cut. In an undirected polymatroidal network, an edge-cut
$F$ is said to be a {\em bi-partition cut} if there exists a set $S
\subseteq V$ such that $F = \delta_G(S)$.  In the case of
edge-capacitated undirected networks, it is well known that for any
multicommodity flow instance, there always exists a sparsest cut that
is a bi-partition cut. This does not hold for
polymatroidal networks, however, a factor $2$ gap can indeed be shown between
the sparsest cut and the sparsest cut restricted to bi-partition
cuts; moreover this factor is tight.

\bthm \label{thm:bi-partition} Given any edge cut $F$ for a
multicommodity flow instance in an undirected polymatroidal network
$G=(V,E)$, there exists a bi-partition cut $\delta_G(S)$ whose sparsity is
atmost $2$ times the sparsity of $F$. Furthermore this factor
is tight. \ethm

The proof of the above theorem can be found in
Section~\ref{app:bi-partition} of the appendix.
Theorem~\ref{thm:undir-sparse-gap} and
Theorem.~\ref{thm:bi-partition} together imply a logarithmic gap
between maximum concurrent flow and sparsest bi-partition cut. This is
formally stated in the following corollary.

\bcor In undirected
polymatroidal networks, for any given multicommodity flow instance
with $k$ pairs, the ratio between the value of the {\em sparsest
  bi-partition cut} and the value of the maximum concurrent flow is
$O(\log k)$.  \ecor


\subsection{Maximum Throughput Flow and Multicut}
We prove the following theorem in this section.

\begin{theorem}
\label{thm:undir-multicut-gap}
In undirected polymatroidal networks, for any given multicommodity
flow instance with $k$ pairs, the ratio between the value of the
minimum multicut and the value of the maximum throughput flow is $O(\log
k)$. Moreover, there is an efficient algorithm to compute an $O(\log
k)$ approximation to the minimum multicut problem.
\end{theorem}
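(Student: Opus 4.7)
The plan is to adapt the Garg--Vazirani--Yannakakis region-growing technique \cite{GVY} to the polymatroidal setting, using the multicut LP of Section \ref{sec:multicut-relaxation} as the starting point. Let $F$ denote the optimum of this LP, achieved at edge variables $(\ell, \{\ell(e,u)\})$; by LP duality $F$ equals the maximum throughput flow, so it suffices to exhibit an integer multicut of cost $O(\log k)\cdot F$.

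Iteratively construct a multicut $F^{\ast}$ as follows. Maintain a residual graph $G'$ (initially $G'=G$), and while some demand pair is still connected in $G'$, pick such a pair $(s_i,t_i)$ and grow the ball $B_r = \{u : \dist_\ell(s_i,u)\le r\}$ in $G'$. Define the ball volume
$$V(r) \;=\; V_0 \,+\, \int_0^r \nu(\delta_{G'}(B_t))\, dt, \qquad V_0 := F/k,$$
so that $V'(r)=\nu(\delta_{G'}(B_r))$ almost everywhere. The key input is an analog of Lemma \ref{lem:line-key}: because $u\mapsto\dist_\ell(s_i,u)$ is $1$-Lipschitz with respect to $\ell$, the proof of that lemma applies verbatim with $g=\dist_\ell(s_i,\cdot)$ in place of the Bourgain line embedding, yielding
$$\int_0^\infty \nu(\delta_{G'}(B_t))\, dt \;\le\; 2\sum_{u\in G'}\hrho_u(\bd_u) \;\le\; 2F,$$
whence $V(1/2)\le V_0(1+2k)$. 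Stop the growth at the first $r_i\in(0,1/2)$ with $V'(r_i)\le 4\ln(1+2k)\cdot V(r_i)$; such an $r_i$ exists, for otherwise a standard continuous argument gives $\log V(1/2)-\log V_0 > 2\ln(1+2k)$, forcing $V(1/2)/V_0 > (1+2k)^2$ and contradicting the above. Since $\dist_\ell(s_i,t_i)\ge 1$, we have $t_i\notin B_{r_i}$, so $\delta_{G'}(B_{r_i})$ separates $(s_i,t_i)$. Add it to $F^{\ast}$, remove $B_{r_i}$ and its incident edges from $G'$, and iterate.

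To finish, invoke subadditivity of $\nu$ (Lemma \ref{lem:cut-subadditive}) to bound $\nu(F^{\ast}) \le \sum_i \nu(\delta_{G'}(B_{r_i})) \le 4\ln(1+2k)\,\sum_i V(r_i)$. Ball disjointness across iterations, combined with a per-node accounting of the integration bound, yields $\sum_i V(r_i)\le kV_0+2F=3F$, which gives $\nu(F^{\ast})=O(\log k)\cdot F$ and proves the theorem. Solving the LP and executing the region-growing steps is polynomial-time, establishing the stated algorithmic claim.

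The main obstacle is the global volume bound $\sum_i V(r_i)=O(F)$. A naive per-iteration use of the integration lemma yields only $\sum_i V(r_i)\le kV_0+2kF=O(kF)$, which is too weak by a factor of $k$. Tightening it requires viewing the proof of Lemma \ref{lem:line-key} as providing a per-node decomposition of the integral into terms of the form $\hrho_u(\bd_u^L)+\hrho_u(\bd_u^R)$, and observing that once the ball $B_{r_i}$ is deleted from $G'$, the absorbed nodes (and the boundary edges just cut) cease contributing to subsequent iterations. A careful charging argument---each node $u$ is charged at most once for its ``internal'' contribution when absorbed into a ball, and each edge incident to $u$ is charged at most once when cut---yields total charge $O(\hrho_u(\bd_u))$ per node, hence the desired $O(F)$ global bound. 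Making this bookkeeping rigorous is the technical crux of porting GVY to polymatroidal networks.
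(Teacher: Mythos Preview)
Your overall strategy (region growing in the spirit of GVY, applied to the multicut relaxation) matches the paper's, and you have correctly isolated the crux: the global bound $\sum_i V(r_i)=O(F)$. Unfortunately, the per-node charging you sketch does not close the gap. If a node $w$ stays outside the balls for several iterations, its contribution to $\int_0^{r_i}\nu(\delta(B_t))\,dt$ in iteration $i$ is, via the line-key argument, at most $\hrho_w(\bd_w|_{C_w^{(i)}})$ where $C_w^{(i)}\subseteq\delta(w)$ is the set of $w$-edges cut in that iteration. The sets $C_w^{(i)}$ are disjoint across $i$, but for a polymatroid the Lov\'asz extension is \emph{not} subadditive over disjoint coordinate supports: with $\rho_w(S)=1$ for all $\emptyset\neq S\subseteq\delta(w)$ and $\ell(e,w)=1$ for all $e$, one has $\hrho_w(\bd_w)=1$ while $\sum_i \hrho_w(\bd_w|_{\{e_i\}})=|\delta(w)|$. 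So the charge to $w$ from its ``boundary'' role can be $\Theta(\deg(w))\cdot\hrho_w(\bd_w)$, not $O(\hrho_w(\bd_w))$, and the argument as stated fails.

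The paper circumvents this by two extra ingredients you are missing. First, it preprocesses: all edges with $\ell(e)\ge \tfrac{1}{4\log k}$ are removed up front and their cost is bounded separately by $O(\log k)\cdot F$ (a direct Lov\'asz-extension calculation). Second, with only short edges remaining, it invokes a \emph{refined} version of the line-key lemma (Lemma~\ref{lem:line-refined}) that bounds $\int_a^b \nu(\delta(S_\theta))\,d\theta$ by $2\sum_{v:g(v)\in[a_0,b_0]}\hrho_v(\bd_v)$ for a slightly larger interval $[a_0,b_0]$; short edges guarantee the enlargement is only one step on a discretized radius scale $r_j=j\delta/(2\log k)$. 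A discrete doubling argument then finds a shell where $\alpha_{j+1}\le 8\alpha_{j-2}$ and charges the cut cost directly to $\vol(B(s,r))+\vol(V)/k$, i.e., to nodes \emph{inside} the ball. That charge is trivially additive across disjoint balls, yielding the global $O(F)$ bound. In short, your continuous volume $V(r)=V_0+\int_0^r\nu\,dt$ is not local enough; the paper replaces it by the genuinely local $\vol(B(s,r))=\sum_{u\in B(s,r)}\hrho_u(\bd_u)$, at the price of the long-edge preprocessing and the discretized doubling step.
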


We recall the relaxation for the minimum mulitcut problem
from Section~\ref{sec:multicut-relaxation}. Consider an optimum
solution to the relaxation given by edge lengths $\ell(e), e \in E$ and
the partition of $\ell(e)$ for each $e=uv$ between $u$ and $v$ given
by the variables $\ell(e,u)$ and $\ell(e,v)$. We will show that
there exists a multicut $F \subseteq E$ for the given pairs such
that $\nu(F) = O(\log k) (\sum_v \hrho_v(\bd_v))$.

By slightly generalizing the proof of Lemma~\ref{lem:line-key} we obtain
the following.

\begin{lemma}
  \label{lem:line-refined}
  Let $g:V \rightarrow [0,\beta]$ be a contraction, let $0 \leq a_0 \leq a < b \leq b_0 \leq \beta$ and $S_\theta = \{u \mid g(u) <  \theta\}$. Suppose for every edge $e=uv \in \cup_{\theta \in [a,b]} \delta(S_\theta)$, $g(u)$ and $g(v)$ are both in $[a_0,b_0]$. Then,
  $$\int_{a}^b \nu(\delta(S_\theta)) d\theta \le 2 \sum_{v: g(v) \in[a_0,b_0]} \hrho_v(\bd_v).$$
\end{lemma}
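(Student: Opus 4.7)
The plan is to mimic the proof of Lemma~\ref{lem:line-key} almost verbatim, with two modifications: (i) restrict the $\theta$-integration to the sub-interval $[a,b]$, and (ii) use the hypothesis on the location of edge endpoints to discard contributions from vertices $v$ with $g(v) \notin [a_0, b_0]$. Since Lemma~\ref{lem:line-key} is exactly the special case $a = a_0 = 0$, $b = b_0 = \beta$, essentially everything carries over and the argument becomes a bookkeeping refinement of the original.

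First I would reuse the same assignment procedure. For each edge $e = uv$ with $g(u) < g(v)$, set $r = \ell(e,u)/\ell(e)$, let $\ell'(e) = |g(v)-g(u)| \le \ell(e)$, and split the $\theta$-interval $(g(u), g(v)]$ at $g(u) + r \ell'(e)$: edges are assigned to $u$ on the left piece and to $v$ on the right piece. For every vertex $w$, define $L_w$, $R_w$ and the vectors $\bd^L_w, \bd^R_w \le \bd_w$ exactly as before, and let $A_{\theta,w}$ denote the set of edges of $\delta(w) \cap \delta(S_\theta)$ assigned to $w$ by the procedure. Then
$$\int_a^b \nu(\delta(S_\theta))\, d\theta \;\le\; \sum_{w \in V} \int_a^b \rho_w(A_{\theta,w})\, d\theta.$$

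Next I would invoke the hypothesis to zero out all vertices with $g(w) \notin [a_0, b_0]$. Indeed, if $w$ is any such vertex and $e \in \delta(w)$ were assigned to $w$ for some $\theta \in [a,b]$, then $e \in \delta(S_\theta)$, so by assumption both endpoints of $e$ (including $w$) lie in $[a_0, b_0]$, a contradiction. Hence $A_{\theta, w} = \emptyset$ for $\theta \in [a,b]$ whenever $g(w) \notin [a_0, b_0]$, and the outer sum collapses to those $w$ with $g(w) \in [a_0, b_0]$.

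Finally, for each such $w$, I would upper bound the integral over $[a,b]$ by the integral over $[0,\beta]$ and apply the identity established inside the proof of Lemma~\ref{lem:line-key},
$$\int_0^\beta \rho_w(A_{\theta,w})\, d\theta \;=\; \hat{\rho}_w(\bd^L_w) + \hat{\rho}_w(\bd^R_w) \;\le\; 2\,\hat{\rho}_w(\bd_w),$$
where the last inequality uses monotonicity of $\rho_w$ via Proposition~\ref{prop:monotone-sub}. Summing over $w$ with $g(w) \in [a_0, b_0]$ yields the claim. There is no real obstacle here; the only point requiring care is the clean argument, via the hypothesis on $\bigcup_{\theta\in[a,b]} \delta(S_\theta)$, that vertices outside $[a_0, b_0]$ contribute nothing to the restricted integral—this is exactly what allows the right-hand side to depend only on vertices in the range.
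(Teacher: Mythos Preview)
Your proposal is correct and follows essentially the same argument as the paper's proof: reuse the assignment procedure and the $A_{\theta,w}$ sets from Lemma~\ref{lem:line-key}, use the hypothesis to drop all vertices with $g(w)\notin[a_0,b_0]$ from the sum, then extend the $\theta$-integral from $[a,b]$ to $[0,\beta]$ and invoke the Lov\'asz-extension identity together with monotonicity. The only cosmetic difference is that the paper first restricts the vertex sum and then integrates, whereas you first integrate over all vertices and then discard those outside $[a_0,b_0]$; the content is identical.
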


\begin{proof}
The proof is very similar to the proof of Lemma~\ref{lem:line-key}, except that to upper bound the left hand side in the statement of the lemma, we only need to consider edges that are in the set $\cup_{\theta \in [a,b]} \delta(S_\theta)$. The condition in the lemma assures us that any node that is involved in $\delta(S_\theta)$ have to lie within the interval $[a_0,b_0]$. Thus, it is sufficient to consider the set of nodes $v: g(v) \in [a_0,b_0]$ in the integral on the right hand side. The proof is written out in detail in Sec.~\ref{sec:proof_line-refined}.
\end{proof}

Given a graph $G$ with edge lengths $\ell:E \rightarrow \mathbb{R}_+$,
a node $v$ and radius $r$, let $B^\ell_G(v,r) = \{u \mid \dist_\ell(v,u)
\le r\}$ denote the ball of radius $r$ around $v$ according to edge
lengths $\ell$. We omit $\ell$ and $G$ if they are clear from the context.
For a set of nodes $X \subseteq V$ we let $\vol(X) =
\sum_{v \in X} \hrho_v(\bd_v)$ denote the total contribution of the
nodes in $X$ to the objective function.

\begin{lemma}
  \label{lem:region-growing}
  Let $\delta < 1$ and suppose $\ell(e) < \frac{\delta}{2\log k}$ for
  all $e$. Then, for any given node $s$ and
  $k \ge 2$ there exists a $r \in [0, \delta)$ such that
  $\nu(\delta(B(s,r)) \le a \log k \cdot \frac{1}{\delta}
  (\vol(B(s,r)) + \vol(V)/k)$, with $a = 28$.
\end{lemma}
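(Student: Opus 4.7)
\emph{Proof plan.} My approach is a region-growing argument around $s$, in the style of Garg--Vazirani--Yannakakis, where Lemma~\ref{lem:line-refined} plays the role of the classical identity relating the cost of a ball cut to the derivative of its volume in edge-capacitated graphs. I would use the contraction $g(v) = \dist_\ell(s,v)$, so that $B(s,\theta)$ coincides with the sub-level set $S_\theta$. Set $\mu(r) = \vol(B(s,r)) + \vol(V)/k$ and let $\ell_{\max} = \max_e \ell(e) < \delta/(2\log k)$. Suppose for contradiction that $\nu(\delta(B(s,r))) > \frac{a\log k}{\delta}\mu(r)$ for \emph{every} $r \in [0,\delta)$; the goal is to derive a contradiction for $a = 28$.

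For any subinterval $[a',b']\subseteq[\ell_{\max},\delta)$, apply Lemma~\ref{lem:line-refined} with $a_0 = a'-\ell_{\max}$ and $b_0 = b'+\ell_{\max}$: since every edge has length less than $\ell_{\max}$, both endpoints of any edge in $\bigcup_{\theta\in[a',b']}\delta(S_\theta)$ lie in $[a_0,b_0]$, yielding
$$\int_{a'}^{b'}\nu(\delta(B(s,\theta)))\,d\theta \;\le\; 2\bigl[\mu(b'+\ell_{\max}) - \mu(a'-\ell_{\max})\bigr].$$
Monotonicity of $\mu$ and the contradiction hypothesis give $\int_{a'}^{b'}\nu(\delta(B(s,\theta)))\,d\theta > \frac{a\log k\,(b'-a')}{\delta}\mu(a')$. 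Combining and dropping the non-negative term $\mu(a'-\ell_{\max})$ on the right yields the geometric-growth inequality
$$\mu(b'+\ell_{\max}) \;\ge\; \tfrac{a\log k\,(b'-a')}{2\delta}\;\mu(a').$$
Choose the step size $b'-a' = \delta/(2\log k)$ so the leading coefficient becomes $a/4$, and iterate starting from $a'_0 = \ell_{\max}$, with $a'_{j+1} = b'_j + \ell_{\max}$. Each step advances the argument by at most $\delta/(2\log k) + \ell_{\max} \le \delta/\log k$, so $\Theta(\log k)$ iterations fit inside $[0,\delta)$ and $\mu$ grows by a factor of at least $a/4$ per step. Thus, at the last iterate, $\mu$ exceeds $(a/4)^{\Theta(\log k)}\cdot \mu(\ell_{\max}) \ge (a/4)^{\Theta(\log k)}\cdot \vol(V)/k$. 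But $\mu(r)\le (1+1/k)\vol(V)\le 2\vol(V)$ always, so $(a/4)^{\Theta(\log k)} \le 2k$; for $a = 28$, i.e.\ $a/4 = 7$, this fails for $k\ge 2$, the desired contradiction.

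\emph{Main obstacle.} The delicate point is that $\nu(\delta(B(s,r)))$ is not literally the derivative of $\mu$ --- in fact $\mu$ is a step function --- so the smooth GVY argument cannot be transcribed verbatim. Lemma~\ref{lem:line-refined} gives an integrated substitute, but at the price of a factor-$2$ slack and a buffer of $\ell_{\max}$ at each endpoint of each subinterval. Careful bookkeeping of these buffers is what forces the hypothesis $\ell(e) < \delta/(2\log k)$: it guarantees enough room inside $[0,\delta)$ for $\Theta(\log k)$ geometric steps while keeping each buffer small relative to the step size. Balancing the step size and the resulting growth factor against these competing constraints is what pins down the explicit constant $a = 28$.
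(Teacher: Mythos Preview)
Your proposal is correct and follows essentially the same region-growing approach as the paper: both use the distance-from-$s$ line embedding, apply Lemma~\ref{lem:line-refined} on subintervals of $[0,\delta)$ with buffers dictated by the edge-length bound $\ell(e) < \delta/(2\log k)$, and reach a contradiction via geometric growth of the shifted volume function $\mu$ (the paper's $\alpha_j = \vol(V)/k + \vol(B(s,r_j))$). The only cosmetic difference is that the paper first locates an index $j$ with $\alpha_{j+1} \le 8\alpha_{j-2}$ and then averages over $[r_{j-1},r_j]$, whereas you assume failure everywhere and iterate; the constant $28$ arises in the same way from the factor $2$ in Lemma~\ref{lem:line-refined}, the step size $\delta/(2\log k)$, and the growth threshold.
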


\begin{proof}
  For simplicity we assume here that $\log k$ is an integer multiple of $3$.
  Order the nodes in increasing order of distance from $s$: this
  produces a line embedding $g_s: V \rightarrow \mathbb{R}_+$.  For
  integer $i \ge 0$ define $r_i = \frac{i\cdot \delta}{2\log k}$.
  Define $\alpha_0 = \vol(V)/k$ and for $i \ge 1$ let $\alpha_i =
  \alpha_0 + \vol(B(s,r_i))$.

    Consider any $1 \leq j \leq 2 \log k $. We apply   Lemma~\ref{lem:line-refined} to the embedding $g_s$ and the interval
  $[r_{j-1},r_j]$; note that $\ell(e) < \frac{\delta}{2\log
    k}$ which implies that we can indeed apply the lemma. Also any edge  $ e \in \cup_{\theta \in [r_{j-1},r_j]}\delta(S_\theta) $  satisfies the property that $g(u) \in [r_{j-2},r_{j+1}]$ and $g(v) \in [r_{j-2},r_{j+1}]$ since $\ell(e) < \frac{\delta}{2\log k}$. Thus
  \begin{eqnarray}
      \int_{r_{j-1}}^{r_j} \nu(\delta(B(s,\theta)) d\theta & \le & 2\sum_{v: g_s(v) \in [r_{j-2},r_{j+1}]} \hrho_v(\bd_v) \nonumber \\
      & \le & 2(\alpha_{j+1} - \alpha_{j-2}) \label{eq:ub} \end{eqnarray}

   We claim that there is some $1 \le j <
  2\log k$ such that $\alpha_{j+1} \le 8 \alpha_{j-2}$. Suppose not, then $\alpha_{3i} > 8 \alpha_{3(i-1)}$ for all $1 \le i \le \frac{2\log k}{3}$. This implies that $\alpha_{3i} > 8^i \alpha_0 = 2^{3i} \alpha_0$. Therefore, with $i =  \frac{2\log k}{3}$, this implies that  $\alpha_{2\log k} > 2^{2 \log k} \frac{\vol(V)}{k}  >
  4\vol(V)$ which is impossible.

  Thus there exists a $j$ such that $\alpha_{j+1} \le 8\alpha_{j-2}$. Consider that $j$, equation \eqref{eq:ub} implies that
   \begin{eqnarray*}  \int_{r_{j-1}}^{r_j} \nu(\delta(B(s,\theta)) d\theta    & \le & 2(\alpha_{j+1} - \alpha_{j-2})\\
   & \leq & 2 (7 \alpha_{j-2}).  \end{eqnarray*}

  If we pick $r$ uniformly at random from the interval
  $[r_{j-1},r_j]$, where satisfies the above property, the expected cost of $\nu(\delta(B(s,r)))$ is
  $$ \frac{1}{r_j-r_{j-1}} \int_{r_{j-1}}^{r_j} \nu(\delta(B(s,\theta)) d\theta
  \le \frac{28 \log k}{\delta} \alpha_{j-2},$$
from the preceding inequality and the fact that $r_j-r_{j-1} =
\frac{\delta}{2\log k}$.  Hence there exists an $r \in [r_{j-1},r_j]$
such that $\nu(\delta(B(s,r))) \le \frac{28 \log k}{\delta}
\alpha_{j-2}$. Since $\alpha_{j-2} -\alpha_0 \le \vol(B(s,r))$, the lemma follows.
\end{proof}

Now we consider the following algorithm for finding a multicut from a given
fractional solution.
\begin{itemize}
\item Let $F \leftarrow \{e \mid \ell(e) \ge \frac{1}{4\log k}\}$.
\item $G' \leftarrow G[E\setminus F]$.
\item While (there exists a pair $s_it_i$ connected in $G'$) do
  \begin{itemize}
  \item Let $s_jt_j$ be a pair connected in $G'$.
  \item Via Lemma~\ref{lem:region-growing} with $\delta=1/2$ find $r <
    1/2$ such that $\nu(\delta_{G'}(B_{G'}(s_j,r))) \le 2a\log k \cdot
    (\vol(B_{G'}(s_j,r)) + \vol(V)/k)$.
  \item $F \leftarrow F \cup \delta_{G'}(B_{G'}(s_j,r))$.
    \item Remove the vertices $B_{G'}(s_j,r)$ and edges incident to them from $G'$.
  \end{itemize}
\item Output $F$ as the multicut.
\end{itemize}

\begin{lemma}
  The set of edges $F$ output by the algorithm is a feasible multicut for the given
  instance.
\end{lemma}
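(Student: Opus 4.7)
The plan is to show, by contradiction, that for every source-sink pair $(s_i,t_i)$ no path in $G\setminus F$ connects them at termination. I will use three ingredients supplied by the construction: (i) the LP relaxation from Section~\ref{sec:multicut-relaxation} enforces $\dist_\ell(s_i,t_i)\ge 1$ for every pair; (ii) the initial step places every edge with $\ell(e)\ge\frac{1}{4\log k}$ into $F$, so any path in $G\setminus F$ uses only edges of length $<\frac{1}{4\log k}<\frac{1}{2}$; and (iii) each ball grown inside the while loop is produced by Lemma~\ref{lem:region-growing} with $\delta=\frac12$, hence has radius $r<\frac12$.

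First I would set up bookkeeping. Let $V_j$, $V'_j$, $G'_j$, $F_j$ denote, respectively, the ball vertices removed during iteration $j$, the vertex set of $G'$ at the start of iteration $j$, the graph $G'$ itself at that point, and the edge set $F$ at that point; let $s_{q(j)}$ and $r_j$ be the ball center and radius chosen in iteration $j$. Suppose for contradiction there is a path $p=v_0v_1\cdots v_m$ in $G\setminus F$ with $v_0=s_i$ and $v_m=t_i$. Define $j^\star=\min\{j: V_j\cap\{v_0,\ldots,v_m\}\ne\emptyset\}$ if such an iteration exists. If no such $j^\star$ exists, then every vertex of $p$ survives to $V'_{\mathrm{final}}$; since $p$ avoids $F$, all its edges lie in $G'_{\mathrm{final}}$, so $(s_i,t_i)$ is a pair still connected in $G'_{\mathrm{final}}$, contradicting the termination of the while loop.

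Next I would rule out that both $s_i$ and $t_i$ lie in the ball $V_{j^\star}=B_{G'_{j^\star}}(s_{q(j^\star)},r_{j^\star})$: otherwise the triangle inequality in $G'_{j^\star}$ would give
\[
\dist_\ell(s_i,t_i)\le\dist_{G'_{j^\star}}(s_i,t_i)\le 2r_{j^\star}<1,
\]
using that $G'_{j^\star}$ is a subgraph of $G$ with the same edge lengths, contradicting the LP constraint. Hence, walking along $p$, there must exist an edge $e=(v_k,v_{k+1})$ with $v_k\in V_{j^\star}$ and $v_{k+1}\notin V_{j^\star}$. By the minimality of $j^\star$, every vertex of $p$ is in $V'_{j^\star}$; in particular both endpoints of $e$ lie in $V'_{j^\star}$. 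Since $p$ avoids $F$ we have $e\notin F_{j^\star}$. Therefore $e\in\delta_{G'_{j^\star}}(V_{j^\star})$, and this boundary is added to $F$ in iteration $j^\star$, contradicting $e\notin F$.

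The subtle step, and the one I expect to be the main obstacle, is the edge-crossing argument above: one must exclude the possibility that a path edge leaves the current ball $V_{j^\star}$ via a vertex already absorbed into an earlier ball, because internal edges of previously grown balls are \emph{not} placed into $F$ (only their boundaries are). The minimality of $j^\star$ is precisely the tool that rules this out --- it forces both endpoints of the crossing edge to still reside in $V'_{j^\star}$, so the edge is genuinely present in $G'_{j^\star}$ and is correctly caught by the boundary added to $F$ in iteration $j^\star$.
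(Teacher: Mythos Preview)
Your proof is correct and follows essentially the same logic as the paper's own argument. The paper's proof is a brief induction sketch on the iterations of the while loop, noting that the ball $B_{G'}(s_j,r)$ has diameter $2r<1$ so no source--sink pair can lie entirely inside it, and then appealing to induction on the residual graph; your ``first iteration $j^\star$ where a path vertex is absorbed'' contradiction argument is simply the unwound, more rigorous form of that same induction, and your explicit treatment of the minimality of $j^\star$ cleanly fills in the one step the paper leaves implicit (that the crossing edge is genuinely present in $G'_{j^\star}$). One cosmetic remark: ingredient~(ii) in your plan, the bound $\ell(e)<\frac{1}{4\log k}$ on surviving edges, is not actually used anywhere in the feasibility argument (it is needed only to invoke Lemma~\ref{lem:region-growing} for the cost bound), so you could omit it.
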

\begin{proof} (Sketch) One can prove this by induction on the number
  of steps in the while loop. We consider the first step.  The
  diameter of the ball $B_{G'}(s_j,r)$ is $2r < 1$ and hence the end
  points of any pair cannot both be inside this ball. We remove the
  edges $\delta(B_{G'}(s_j,r))$ and by the preceding observation there is
  no need to recurse on this ball. The algorithm recurses on the remaining
  graph $G' - B_{G'}(s_j,r)$, and by induction separates any pair with both end points
  in that graph.
\end{proof}

Now we argue about the cost of the set $F$ output by the algorithm.
Let $F_0 \leftarrow \{e \mid \ell(e) \ge \frac{1}{4\log k}\}$ be
the initial set of edges added to $F$ and let $F_i$ be the set of edges
added in the $i$'th iteration of the while loop.

\begin{lemma}
  $\nu(F_0) \le 8 \log k \cdot \sum_{v} \hrho_v(\bd_v)$.
\end{lemma}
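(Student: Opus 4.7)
The plan is to construct an explicit valid assignment $g : F_0 \to V$ of the long edges to their endpoints and upper-bound $\sum_v \rho_v(g^{-1}(v))$, which dominates $\nu(F_0)$ by the definition of the cut cost. The idea driving the construction is that every edge $e = uv \in F_0$ has $\ell(e,u) + \ell(e,v) = \ell(e) \ge \frac{1}{4\log k}$, so at least one of $\ell(e,u)$ or $\ell(e,v)$ is at least $\frac{1}{8\log k}$. I would therefore define $g(e)$ to be an endpoint of $e$ whose contribution clears this threshold, breaking ties arbitrarily. This is clearly a valid assignment.

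Setting $S_v := g^{-1}(v) \subseteq \delta(v)$, by construction the coordinate of $\bd_v$ indexed by each $e \in S_v$ is at least $\frac{1}{8\log k}$, so
$$\bd_v \;\ge\; \tfrac{1}{8\log k}\,{\bf 1}_{S_v} \quad \text{componentwise.}$$
Next I would invoke two properties of the Lov\'asz extension $\hrho_v$ of the monotone polymatroid $\rho_v$: monotonicity (Proposition~\ref{prop:monotone-sub}) and positive homogeneity $\hrho_v(c\bx) = c\,\hrho_v(\bx)$ for $c \ge 0$. The latter is immediate from the integral representation $\hrho_v(\bx) = \int_0^\infty \rho_v(\bx^\theta)\,d\theta$, and both extend from $[0,1]^{\delta(v)}$ to all of $\mathbb{R}_+^{\delta(v)}$. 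Applying them together,
$$\hrho_v(\bd_v) \;\ge\; \hrho_v\!\left(\tfrac{1}{8\log k}\,{\bf 1}_{S_v}\right) \;=\; \tfrac{1}{8\log k}\,\hrho_v({\bf 1}_{S_v}) \;=\; \tfrac{1}{8\log k}\,\rho_v(S_v),$$
so $\rho_v(S_v) \le 8\log k \cdot \hrho_v(\bd_v)$. Summing over $v$ would then yield
$$\nu(F_0) \;\le\; \sum_{v \in V} \rho_v(S_v) \;\le\; 8\log k \,\sum_{v \in V} \hrho_v(\bd_v),$$
as claimed.

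I do not expect any real obstacle here: the argument is essentially a one-shot deterministic rounding of the long edges, charging each to the endpoint that already absorbs at least half its length in the fractional solution. The only step that might warrant a brief justification is the positive homogeneity of $\hrho_v$ beyond the unit cube, but this is evident from either the integral or the chain-based representation of $\hrho_v$.
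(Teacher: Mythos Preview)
Your argument is correct and essentially identical to the paper's: both assign each edge of $F_0$ to an endpoint where its length contribution is at least $\tfrac{1}{8\log k}$, and both deduce $\rho_v(S_v)\le 8\log k\cdot \hrho_v(\bd_v)$ from the threshold structure of the Lov\'asz extension (the paper via the integral $\hrho_v(\bd_v)=\int_0^1\rho_v(\bd_v^\theta)\,d\theta \ge \tfrac{1}{8\log k}\rho_v(A_v)$, you via monotonicity and homogeneity, which is the same computation). The only cosmetic difference is that the paper's sets $A_v$ may overlap across endpoints while your assignment $g$ gives a true partition; since $\nu$ is a minimum over valid assignments and $\rho_v$ is monotone, this makes no difference.
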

\begin{proof}
  For $v \in V$ let $A_v = \{ e \in \delta(v) \cap F_0 \mid \ell(e,v)
  \ge \frac{1}{8\log k}\}$.  We can upper bound $\nu(F_0)$ by $\sum_v
  \rho_v(A_v)$ since the latter term counts each edge $uv \in F_0$ in
  at least one of $A_u$ and $A_v$ since $\ell(e,u) + \ell(e,v) =
  \ell(e) \ge \frac{1}{4\log k}$. From the definition of the Lov\'asz
  extension
  \[
  \hrho_v(\bd_v) = \int_0^1 \rho_v(\bd_v^\theta) d\theta \ge \int_{0}^{1/(8 \log k)}  \rho_v(\bd_v^\theta) d\theta \ge \frac{1}{8 \log k}  \rho_v(A_v),
  \]
  where we used non-negativity of $\rho_v$ for the first inequality above and
  monotonicity for the second.
\end{proof}

\begin{lemma}
  $\sum_{i\ge 1} \nu(F_i) \le 4a \log k \sum_v \hrho_v(\bd_v)$.
\end{lemma}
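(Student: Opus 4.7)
The plan is to exploit the disjointness of the ball regions removed across iterations and the fact that the algorithm terminates in at most $k$ iterations, then sum the per-iteration guarantees from Lemma~\ref{lem:region-growing}.

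First I would observe that the balls $B_i = B_{G'}(s_{j_i}, r_i)$ carved out in distinct iterations are pairwise disjoint as subsets of $V$: after iteration $i$ the algorithm removes the vertices of $B_i$ from $G'$, so in any subsequent iteration $i' > i$ the ball $B_{i'}$ is drawn from $V \setminus \bigcup_{i''\le i} B_{i''}$. Since $\vol$ is additive over disjoint vertex sets (it is a sum of nonnegative per-vertex quantities $\hrho_v(\bd_v)$, all evaluated with respect to the fixed fractional solution), this yields
\[
\sum_{i\ge 1} \vol(B_i) \;\le\; \vol(V) \;=\; \sum_{v\in V} \hrho_v(\bd_v).
\]

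Next I would bound the number of iterations by $k$. In each iteration the algorithm picks a pair $s_jt_j$ that is still connected in $G'$ and places one of its endpoints inside the ball $B_{j}$ (namely $s_j$); since the ball has diameter at most $2r < 1$ and the fractional solution satisfies $\mathrm{dist}_\ell(s_i,t_i) \ge 1$ for every pair, the other endpoint $t_j$ is not in $B_j$, so once $B_j$ is removed the pair $s_jt_j$ is permanently separated in $G'$. Hence each iteration eliminates at least one surviving pair and the while loop runs at most $k$ times.

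Combining the guarantee of Lemma~\ref{lem:region-growing} applied in each iteration (with $\delta = 1/2$, so the factor is $2a\log k$) with the two bounds above,
\begin{align*}
\sum_{i \ge 1} \nu(F_i)
 &\le\; 2a\log k\,\sum_{i\ge 1}\!\left(\vol(B_i) + \vol(V)/k\right)\\
 &\le\; 2a\log k \,\bigl(\vol(V) + k \cdot \vol(V)/k\bigr)
 \;=\; 4a\log k \sum_{v\in V} \hrho_v(\bd_v),
\end{align*}
which is the claimed bound. The only nontrivial step is the disjointness argument, but it falls out immediately from the fact that $B_i$ and its incident edges are deleted from $G'$ at the end of the $i$-th iteration; the bound on the iteration count is essentially just the diameter-versus-distance observation used to verify feasibility of $F$.
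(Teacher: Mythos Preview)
Your proof is correct and follows essentially the same approach as the paper's: you use the disjointness of the removed balls to bound $\sum_i \vol(B_i)\le \vol(V)$, bound the number of iterations by $k$ since each iteration separates at least one pair, and sum the per-iteration guarantee from Lemma~\ref{lem:region-growing} with $\delta=1/2$. Your write-up is in fact more explicit than the paper's sketch, but the underlying argument is identical.
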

\begin{proof} (Sketch) From the algorithm description, $F_i =
  \delta(B_{G'}(s_j,r))$ for some terminal $s_j$ and radius $r < 1/2$
  where $G'$ is the remaining graph in iteration $i$. Moreover,
  $\nu(F_i) \le 2a \log k \cdot (\vol(B_{G'}(s_j,r)) +
  \vol(V)/k)$. Since the nodes in $B_{G'}(s_j,r)$ are removed from
  the graph, a node $u$ is charged only once inside a ball. Hence
$$\sum_i \nu(F_i) \le \sum_i 2a \log k \cdot \vol(V)/k + 2a \log k \sum_v \hrho_v(\bd_v) \le 4a \log k \sum_v \hrho_v(\bd_v),$$
since there are at most $k$ iterations of the while loop; each
iteration separates at least one pair.
  \end{proof}

Since $\nu$ is subadditive (see Lemma~\ref{lem:cut-subadditive})
$$\nu(F) \le \nu(F_0) + \sum_{i\ge 1} \nu(F_i) \le (8+4a)\log k \sum_v \hrho_v(\bd_v).$$
This finishes the proof of Theorem~\ref{thm:undir-multicut-gap}.

\section{Conclusions}
We considered multicommodity flows and cuts in polymatroidal networks
and derived flow-cut gap results in several settings. These results
generalize some existing results for the well-studied edge and
node-capacitated networks. We briefly mention two results that can be
obtained via the line embeddings technique that we did not include in
this paper. A multicommodity flow instance in an undirected network
$G=(V,E)$ is a product multicommodity flow instance if there there is
a non-negative weight function $\pi : V \rightarrow \mathbb{R}_+$ and
the demand $D_{uv}$ between $u$ and $v$ is $\pi(u)\cdot \pi(v)$. The
associated cut problem is interesting because it corresponds to
finding sparse separators in graphs which in turn can be used to find
balanced separators; these have several applications. It was shown in
\cite{KPR} that in edge-capacitated undirected planar networks, the
flow-cut gap for product multicommodity flow instances is $O(1)$ (in
fact they showed this holds for any class of graphs that excludes a
fixed graph as a minor). Rabinovich \cite{Rabinovich03} showed that
the main technical theorem in \cite{KPR} also leads to a line
embedding theorem, and this was used in \cite{FeigeHL06} to show an
$O(1)$ flow-cut gap for product multicommodity flow instances in {\em
  node}-capacitated planar graphs. Our work here shows that this is
true for undirected planar polymatroidal networks. Arora, Rao and
Vazirani \cite{ARV} gave an $O(\sqrt{\log n})$-approximation, via a
semi-definite programming relaxation, for the sparsest cut problem in
an undirected edge-capacitated network. Note that this is not a
traditional flow-cut gap result since the SDP-based relaxation used is
strictly stronger than the dual of the multicommodity flow relaxation.
By interpreting the main technical result in \cite{ARV} as a line-embedding
theorem, \cite{FeigeHL06} obtained an $O(\sqrt{\log n})$-approximation
for sparsest cut in node-capacitated graphs; this can also be extended
to the polymatroidal setting.

Flow-cut gap questions for node-capacitated problems are less
well-understood than the corresponding questions for edge-capacitated
problems; line-embeddings provide a tool to obtain upper
bounds on the gap but they do not provide a tight characterization
as $\ell_1$-embeddings do for the edge-capacitated case. We hope
that polymatroidal networks and their applications to network
information flow provide a new impetus for understanding these
questions.

\appendix
\section{Proof of Lemma~\ref{lem:formulation-equivalence}}
\label{sec:formulations-equivalence}
\begin{lemma}
  \label{lem:formulation-equivalence}
  For a directed polymatroidal network, the dual of the maximum
  throughput flow problem is equivalent (in terms of value) to the
  program given by \eqref{eq:dir} (for undirected the program given by
  \eqref{eq:undir}).
\end{lemma}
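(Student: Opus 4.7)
My plan is to write down the dual of the path-based throughput flow LP, interpret its dual variables in a structured way, and use the LP characterization of the convex closure $\tilde{\rho}$ (which coincides with the Lov\'asz extension $\hrho$ by the lemma stated earlier) to pass back and forth between the two formulations.

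First, I would write the throughput flow LP with nonnegative path variables $f_i(p)$ for $p \in \cP_i$, objective $\sum_i \sum_{p \in \cP_i} f_i(p)$, and, for every node $v$ and every $S \subseteq \delta^-(v)$, the polymatroid constraint $\sum_{e \in S} f(e) \le \rho_v^-(S)$, and likewise for $S \subseteq \delta^+(v)$. Taking the dual, I would introduce nonnegative multipliers $y^-_{v,S}$ and $y^+_{v,S}$. The dual minimizes $\sum_v \sum_S \bigl(y^-_{v,S}\rho_v^-(S) + y^+_{v,S}\rho_v^+(S)\bigr)$ subject to, for every commodity $i$ and every path $p \in \cP_i$, the constraint $\sum_{e \in p} \ell(e) \ge 1$, where for each edge $e = (u,v)$ I define
\[
\ell(e,v) \;=\; \sum_{S \subseteq \delta^-(v),\, S \ni e} y^-_{v,S},\qquad \ell(e,u) \;=\; \sum_{S \subseteq \delta^+(u),\, S \ni e} y^+_{u,S},
\]
and $\ell(e) = \ell(e,u) + \ell(e,v)$. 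This already recovers the constraint $\dist_\ell(s_i,t_i) \ge 1$ of \eqref{eq:dir} together with the splitting $\ell(e) = \ell(e,u) + \ell(e,v)$.

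The key step is matching objectives. Fix a node $v$ and look at the vector $\bd_v^-$. The $y^-_{v,\cdot}$ variables form a nonnegative family with $\sum_{S \ni e} y^-_{v,S} = \ell(e,v)$ for every $e \in \delta^-(v)$. This is precisely the feasibility system (with the equality $\sum_S \alpha_S = 1$ dropped) of the LP defining $\trho_v^-(\bd_v^-)$; and as noted in the excerpt, for a polymatroid one may drop that normalization without changing the optimum. Hence $\sum_S y^-_{v,S}\rho_v^-(S) \ge \trho_v^-(\bd_v^-) = \hrho_v^-(\bd_v^-)$, and similarly for the outgoing side. Summing over $v$ shows that every dual-feasible solution has objective at least the value of \eqref{eq:dir}, so the dual optimum is $\ge$ the convex-program optimum.

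For the reverse direction I would take any feasible solution $(\ell,\ell(\cdot,u),\ell(\cdot,v))$ of \eqref{eq:dir} and, at each node $v$, invoke the LP characterization of $\trho_v^-$ to produce nonnegative coefficients $\alpha^-_{v,S}$ with $\sum_{S \ni e} \alpha^-_{v,S} = \ell(e,v)$ and $\sum_S \alpha^-_{v,S}\rho_v^-(S) = \hrho_v^-(\bd_v^-)$, and analogously $\alpha^+_{v,S}$ for the outgoing side. Setting $y^-_{v,S} := \alpha^-_{v,S}$ and $y^+_{v,S} := \alpha^+_{v,S}$ recovers exactly the edge lengths $\ell(e,v),\ell(e,u)$ I started with, hence also $\ell(e)$ and the dual path constraints $\sum_{e \in p}\ell(e) \ge 1$, while the dual objective equals $\sum_v(\hrho_v^-(\bd_v^-) + \hrho_v^+(\bd_v^+))$. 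This yields the opposite inequality, completing the equivalence. The undirected case is handled identically, with a single polymatroid $\rho_v$ at each node and a single family $y_{v,S}$ for $S \subseteq \delta(v)$.

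The only subtle point I expect to have to spell out carefully is the justification that one may drop the normalization $\sum_S \alpha_S = 1$ from the convex-closure LP when $\rho$ is a monotone polymatroid with $\rho(\emptyset)=0$; this is what makes the unnormalized dual multipliers $y^-_{v,S}$ line up with $\trho_v^-$ rather than with some rescaled version. Given this, the rest is a clean two-sided identification of variables between the two LPs and does not require any new ideas beyond $\trho = \hrho$ for submodular functions.
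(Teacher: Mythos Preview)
Your proposal is correct and follows essentially the same route as the paper: write the dual of the path-based flow LP, introduce the edge-length variables $\ell(e,u),\ell(e,v)$ as marginals of the dual multipliers, and then identify the inner optimization over the $y_{v,S}$'s at each node with the (unnormalized) convex-closure LP, using $\trho=\hrho$ for polymatroids. The only cosmetic difference is that you phrase it as a pair of inequalities while the paper observes directly that minimizing over $d_v(\cdot)$ for fixed $\ell(e,v)$ \emph{is} the convex-closure LP; the content is the same, including the reliance on dropping the normalization $\sum_S \alpha_S = 1$ since each $\rho_v$ is a polymatroid.
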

\bpf
We will show the proof for the undirected case, the proof for the directed case is similar. The program for maximum throughput flow is given by:
\begin{eqnarray*}
\max \quad \sum_{i} \sum_{p \in \mcP_{\st}} f(p)  \\
& \text{s.t.} & \\
\sum_{{e}:{e} \in S} \sum_{p: {e} \in p} f(p)  & \leq & \rho_v (S) \quad \forall S \subseteq \delta(v) \ \forall v \in V \\
f(p) & \geq & 0 \quad  \forall p \in \mc{P}_{\left ( s_i, t_i\right  )}, \forall i = 1\ldots k.
\end{eqnarray*}

The dual of the flow linear program can now be written. Let the dual
variables $d_v(S_v) $ correspond to the non-trivial constraint in the
above linear program. Then the dual linear program is:

\begin{eqnarray*}
\mc{P}_d := \min\quad \sum_{v \in V} \sum_{S \subseteq \delta(v) } d_v (S) \rho_v(S)  \\
& \text{s.t.} & \\
\sum_{e = uv: e \in p } \left( \sum_{S \subseteq \delta(u): e \in S} d_u(S) + \sum_{S \subseteq \delta(v): e \in S} d_v(S) \right)  & \geq & 1 \quad \forall p \in \mathcal{P}_{\st} \text{ where } \ \ e = uv \\
d_u(S) & \geq & 0 \ \forall u \in V\ \forall S \subseteq \delta (u).
\end{eqnarray*}

This can be rewritten equivalently as
\begin{eqnarray*}
\mc{P}_d & := & \min\quad \sum_{v \in V} \sum_{S \subseteq \delta(v) } d_v (S) \rho_v(S)  \\
& \text{s.t.} & \\
\ell(e) & := & \left( \sum_{S \subseteq \delta(u): e \in S} d_u(S) + \sum_{S \subseteq \delta(v): e \in S} d_v(S) \right) \\ \dist_{\ell}(s_i,t_i) & \geq & 1 \quad 1 \leq i \leq k   \\
d_u(S) & \geq & 0 \ \forall u \in V\ \forall S \subseteq \delta (u).
\end{eqnarray*}

Let us define new variables $\ell(e,u)$, $\ell(e,v)$ for each edge $e = uv$, and rewrite the linear program:
\begin{eqnarray*}
\min\quad \sum_{v \in V} \sum_{S \subseteq \delta(v) } d_v (S) \rho_v(S)  \\
& \text{s.t.} & \\
\ell(e) & := & \ell(e,u) + \ell(e,v), \text{ where } \ e = uv \\
\ell(e,u)&  =  &\sum_{S \subseteq \delta(u): e \in S} d_u(S) \ \forall e \in E, e = uv\\
\ell(e,v) & = & \sum_{S \subseteq \delta(v): e \in S} d_v(S) \quad \forall e \in E, e = uv\\
\dist_{\ell}(s_i,t_i) & \geq & 1 \quad 1 \leq i \leq k   \\
d_u(S) & \geq & 0 \\
\ell(e,u), \ell(e,v) & \geq & 0 \quad \forall u \in V\ \forall S \subseteq \delta (u).
\end{eqnarray*}

The minimization is over the variables $\ell(e,u)$ and $d_v(S)$.
Observe for any fixed $v$ the variables $d_v(S), S\subseteq \delta(v)$
influence only the variable $\ell(e,v), e \in \delta(v)$. Hence, for
any $v$ and a fixed assignment set of values $\ell(e,v), e \in \delta(v)$
the optimal choice of variables $d_v(S), S\subseteq \delta(v)$ can be
obtained by solving the following linear program:
\begin{eqnarray*}
\min\quad \sum_{S \subseteq \delta(v) } d_v (S) \rho_v(S) \\
& \text{s.t.} & \\
\sum_{S \subseteq \delta(v): e \in S} d_v(S) & = & \ell(e,v) \quad  \forall e \in E, e = uv\\
d_u(S) & \geq & 0 ,\quad S \subseteq \delta(v), \ \forall v \in V.
\end{eqnarray*}
Recalling the definition of the convex closure of a function,
one sees that the value of the above linear program is equal to
$\trho_v(\bd_v)$; note that for polymatroids we can drop the
constraint $\sum_S d_v(S) = 1$ in the linear program for the convex
closure. Since the convex closure is equval to the Lov\'asz extension
we obtain the desired equivalence of the formulations. \epf

\section{Proof of Lemma~\ref{lem:line-refined}} \label{sec:proof_line-refined}
\begin{lemma}
  Let $g:V \rightarrow [0,\beta]$ be a contraction, let $0 \leq a_0 \leq a < b \leq b_0 \leq \beta$ and $S_\theta = \{u \mid g(u) <  \theta\}$. Suppose for every edge $e=uv \in \cup_{\theta \in [a,b]} \delta(S_\theta)$, $g(u)$ and $g(v)$ are both in $[a_0,b_0]$. Then,
  $$\int_{a}^b \nu(\delta(S_\theta)) d\theta \le 2 \sum_{v: g(v) \in[a_0,b_0]} \hrho_v(\bd_v).$$
\end{lemma}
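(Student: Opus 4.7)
The plan is to mimic the proof of Lemma~\ref{lem:line-key} essentially verbatim, but to restrict the range of integration to $[a,b]$ and to exploit the hypothesis in order to truncate the right-hand side to the node-set $\{v : g(v) \in [a_0,b_0]\}$.

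First I would introduce, for each edge $e=uv$ (say with $g(u) < g(v)$), the same splitting as before: set $\ell'(e) = g(v) - g(u) \le \ell(e)$, let $r = \ell(e,u)/\ell(e)$, and assign $e$ to $u$ on the sub-interval $[g(u),\, g(u)+r\ell'(e))$ and to $v$ on $[g(u)+r\ell'(e),\, g(v)]$. This explicit valid assignment gives a pointwise upper bound on $\nu(\delta(S_\theta))$ for each $\theta \in [a,b]$, and hence on the integral. The crucial observation is that an edge $e=uv$ with both endpoints outside $[a_0,b_0]$ cannot lie in any $\delta(S_\theta)$ with $\theta\in[a,b]$: by the hypothesis, every edge contributing to $\cup_{\theta\in[a,b]}\delta(S_\theta)$ has $g(u),g(v)\in[a_0,b_0]$. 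Therefore only nodes $v$ with $g(v)\in[a_0,b_0]$ receive any edge-assignment charge from the integral.

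Second, fix such a node $v$ and, as in Lemma~\ref{lem:line-key}, split its incident edges (restricted to those crossing some cut for $\theta\in[a,b]$) into the ones on the left $L_v = \{uv : g(u) < g(v)\}$ and on the right $R_v = \{uv : g(u) \ge g(v)\}$. Let $A_{\theta,v}$ be the set of edges assigned to $v$ at parameter $\theta$. Exactly as in the unrestricted proof, a short computation identifies
\[
\int_{0}^{g(v)} \rho_v(A_{\theta,v})\, d\theta \;=\; \hrho_v(\bd^L_v),\qquad
\int_{g(v)}^{\beta} \rho_v(A_{\theta,v})\, d\theta \;=\; \hrho_v(\bd^R_v),
\]
where $\bd^L_v,\bd^R_v$ are obtained from the edge-length-contribution vector at $v$ by zeroing out coordinates outside $L_v,R_v$ respectively. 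Restricting the outer integral to $\theta\in[a,b]$ only shrinks the left-hand sides, so the same bounds apply after restriction.

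Third, combining the two displays with monotonicity of $\hrho_v$ (Proposition~\ref{prop:monotone-sub}), which gives $\hrho_v(\bd^L_v),\hrho_v(\bd^R_v) \le \hrho_v(\bd_v)$, I sum over the (only relevant) nodes with $g(v) \in [a_0,b_0]$:
\[
\int_{a}^{b} \nu(\delta(S_\theta))\, d\theta \;\le\; \sum_{v:\,g(v)\in[a_0,b_0]} \bigl(\hrho_v(\bd^L_v) + \hrho_v(\bd^R_v)\bigr) \;\le\; 2\sum_{v:\,g(v)\in[a_0,b_0]} \hrho_v(\bd_v),
\]
which is exactly the claimed bound. The main obstacle — and the only new content beyond Lemma~\ref{lem:line-key} — is the verification that the hypothesis really does confine the support of the charging argument to nodes with $g(v)\in[a_0,b_0]$; this is immediate from the definition of $\delta(S_\theta)$, since an edge appears in some such cut for $\theta\in[a,b]$ only if $\theta\in[\min(g(u),g(v)),\max(g(u),g(v))]$ and the hypothesis forces this interval to lie in $[a_0,b_0]$.
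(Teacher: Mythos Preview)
Your proposal is correct and follows essentially the same route as the paper's own proof: the identical edge-assignment scheme from Lemma~\ref{lem:line-key}, the observation that the hypothesis confines all charged nodes to $[a_0,b_0]$, the extension of the integration range from $[a,b]$ to $[0,\beta]$ (which you phrase as ``restricting only shrinks the left-hand sides''), and the final monotonicity step via Proposition~\ref{prop:monotone-sub}. There is no meaningful difference in approach.
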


\begin{proof} Consider an edge $uv \in \delta(S_\theta)$ and for simplicity  assume $g(u) < g(v)$. The length of $e$ in the embedding is $\ell'(e) = |g(v)-g(u)| \le \ell(e)$. The edge $(u,v) \in \delta(S_\theta)$ iff $\theta$ is in the interval $[g(u),g(v)]$. Also by the conditions of the theory for every such $(u,v)$, $g(u) \in [a_0,b_0]$ and $g(v) \in [a_0,b_0]$. Note that the cost $\nu(\delta(S_\theta))$ is in general a complicated function to evaluate. We upper bound
  $\nu(\delta(S_\theta))$ by giving an explicit way to assign $e=uv$
  to either $u$ or $v$ as follows. Recall that in the relaxation $\ell(e) =
  \ell(e,u) + \ell(e,v)$ where $\ell(e,u)$ and $\ell(e,v)$ are the
  contributions of $u$ and $v$ to $e$. Let $r =
  \frac{\ell(e,u)}{\ell(e)}$ and let $\ell'(e,u) = r \ell'(e)$ and
  $\ell'(e,v) = (1-r)\ell'(e)$.  We partition the interval
  $[g(u),g(v)]$ into $[g(u),g(u)+\ell'(e,u))$ and
  $[g(u)+\ell'(e,u),g(v)]$; if $\theta$ lies in the former interval we
  assign $e$ to $u$, otherwise we assign $e$ to $v$.  This assignment
  procedures describes a way to upper bound $\nu(\delta(S_\theta))$
  for each $\theta$. Now we consider the quantity $\int_a^b
  \nu(\delta(S_\theta)) d\theta$ and upper bound it as follows.

  Consider a node $u$ and let $L_u = \{uv \in \delta(u) \mid g(v) <
  g(u) \}$ be the set of edges $uv$ that go from $u$ to the left of
  $u$ in the embedding $g$. Similarly $R_u = \{uv \in \delta(u) \mid
  g(v) \ge g(u) \}$. Note that $L_u$ and $R_u$ partition $\delta(u)$.
  Let $\bd'_u$ be the vector of dimension $|\delta(u)|$ consisting of
  the values $\ell'(e,u)$ for $e \in \delta(u)$.  We obtain $\bd^L_u$
  from $\bd'_u$ by setting the values for $e \in R_u$ to $0$ and
  similarly $\bd^R_u$ from $\bd'_u$ by setting the values for $e \in
  L_u$ to $0$. Since $0 \le \ell'(e,u) \le \ell(e,u)$ for each $e \in
  \delta(u)$ we see that $\bd'_u \le \bd_u$ and (component wise) and
  hence $\bd^L_u \le \bd_u$ and $\bd^R_u \le \bd_u$.  Since $\rho_u$
  is monotone we have that $\hrho_u(\bd^L_u) \le \hrho_u(\bd_u)$ and
  $\hrho_u(\bd^R_u) \le \hrho_u(\bd_u)$ (see
  Proposition~\ref{prop:monotone-sub}).

We claim that
$$\int_a^b \nu(\delta(S_\theta)) d\theta \le \sum_{u \in V: g(u) \in [a_0,b_0]} (\hrho_u(\bd^L_u) + \hrho_u(\bd^R_u)),$$
which would prove the lemma.

To see the claim consider some fixed $\theta$ and
$\nu(\delta(S_\theta))$. Fix a node $u$ and consider the edges in
$\delta(u) \cap S_\theta$ assigned to $u$ by the procedure we
described above; call this set $A_{\theta,u}$. First assume that
$\theta < g(u)$. Then the edges assigned to $u$ by the procedure,
denoted by $A_{\theta,u} = \{ e \in L_u \mid \theta > g(u) -
\ell'(e,u) \}$. Similarly, if $\theta > g(u)$, $A_{\theta,u} = \{ e
\in L_u \mid \theta < g(u) + \ell'(e,u) \}$. From these definitions we have
\begin{eqnarray*} \nu(\delta(S_\theta)) & \leq & \sum_{u \in V: g(u) \in [a_0,b_0]}  \rho_u(A_{\theta,u}) \\
\Rightarrow \int_a^b \nu(\delta(S_\theta)) d\theta & \le & \sum_{u \in V: g(u) \in [a_0,b_0]} \int_a^b \rho_u(A_{\theta,u})d\theta \\
 & \le & \sum_{u \in V: g(u) \in [a_0,b_0]} \int_0^{\beta} \rho_u(A_{\theta,u})d\theta. \end{eqnarray*}

For a fixed node $u$,
\begin{eqnarray*}
\int_0^{\beta} \rho_u(A_{\theta,u})d\theta & = &   \int_0^{g(u)} \rho_u(A_{\theta,u})d\theta + \int_{g(u)}^{\beta} \rho_u(A_{\theta,u})d\theta
\end{eqnarray*}
Let $L_u = \{e_1,e_2,\ldots, e_h\}$ where $0 \le \ell'(e_1,u) \le
\ell'(e_2,u) \le \ldots \le \ell'(e_h,u)$. Then
$$\int_a^{g(u)} \rho_u(A_{\theta,u})d\theta = \sum_{j=1}^h (\ell'(e_j,u)-\ell'(e_{j-1},u))\rho(\{e_1,e_2,\ldots,e_j\})$$
The right hand side of the above, is by construction and the
definition of the Lov\'{a}sz extension, equal to $\hrho_u(\bd^L_u)$.
Similarly, $\int_{g(u)}^\beta \rho_u(A_{\theta,u})d\theta =
\hrho_u(\bd^R_u)$.
\end{proof}

\section{ Proof of
  Theorem~\ref{thm:bi-partition} \label{app:bi-partition}}

We recall the statement of Theorem~\ref{thm:bi-partition}.

\medskip
\noindent
{\bf Theorem.} {\em Given any edge cut $F$ for a
multicommodity flow instance in an undirected polymatroidal network
$G=(V,E)$, there exists a bi-partition cut $\delta_G(S)$ whose sparsity is
atmost $2$ times the sparsity of $F$. Furthermore this factor
is tight. }

\bpf Let $V_1,V_2,\ldots,V_h$ be the connected components of $G$
induced by the removal of the edge-cut $F$. Let $D(F)$ be the total
demand separated by $F$. We show a cut $F' = \delta_G(S)$ such that
$D(F') \ge D(F)/2$ and $F' \subseteq F$.

We obtain $F'$ as follows. Construct an undirected graph with nodes
$v_1,...,v_h$, corresponding to the sets $V_1,\ldots,V_h$.  For each
pair $v_i,v_j$ we add an edge $v_iv_j$ with weight $w(v_iv_j)$ equal
to the total demand of all pairs with one end point in $V_i$ and the
other end point in $V_j$. Note that the total weight of all edges is
equal to $D(F)$. It is well-known that in any undirected weighted
graph there is a partition of the nodes into $A$ and $A^c$ (the
complement of $A$) such that the total weight of edges crossing the
partition is at least half the weight of all the edges in the graph (a
random partition guaratees this in expectation and gives a simple
$1/2$-approximation to the NP-Hard maximum-cut problem). Let $A$ be
such a cut in $H$; we have $w(\delta_H(A)) \ge D(F)/2$.  Now consider the
set of nodes $S = \cup_{v_i \in A} V_i$ in $G$ and the corresponding
cut $F' = \delta_G(S)$. It follows that $D(F') \ge D(F)/2$. Moreover
$F' \subseteq F$ and by monotonicity of $\nu(\cdot)$, $\nu(F') \le \nu(F)$.
Hence,
$$
\frac{\nu(F')}{D(F')} \leq 2 \frac{\nu(F)}{D(F)},
$$
which implies that the sparsity of $F'$ is at most twice that of $F$.
By construction $F'$ is a vertex bi-partition cut.

To see that the factor of $2$ is tight, consider a polymatroidal
network $G=(V,E)$ which is a star on $n$ nodes $\{v_0,
v_1,\ldots,v_{n-1}\}$ with center $v_0$; the edges are
$e_1,\ldots,e_{n-1}$ where $e_i = v_0v_i$.  Assume $n$ is even. The
only capacity constraint is a polymatroidal constraint at node $v_0$,
which constrains the total capacity of every subset of
$\{e_1,...,e_{n-1}\}$ by a value of $1$. The demand graph is a
complete graph on the nodes with each pair having a unit demand.
Now consider an edge cut $F$ which removes all the edges: $\nu(F)=1$
and $D(F) = \binom{n}{2}$, so the sparsity is $\frac{2}{n(n-1)}$,
whereas any bi-partition cut $F'=\delta(S)$ also has value $\nu(F')=1$ and
$D(F') = |S||S|^c$, which means the sparsity is minimized with $|S| =
\frac{n}{2}$ and is given by $\frac{4}{n^2}$. The ratio of the two sparsities
is $2(1-\frac{1}{n})$ and approaches $2$ as $n \rightarrow \infty$.
\epf

\end{document}